\newtheorem{theorem}{Theorem}[section]
\newtheorem{lemma}[theorem]{Lemma}
\newtheorem{fact}[theorem]{Fact}
\newtheorem{claim}[theorem]{Claim}
\theoremstyle{definition}
\newtheorem{definition}[theorem]{Definition}
\newtheoremstyle{named}{}{}{\itshape}{}{\bfseries}{.}{.5em}{\thmnote{#3}}
\theoremstyle{named}
\newtheorem{namedtheorem}{Theorem}[section]
\newenvironment{claim_proof}[1][Proof of Claim]
  {\proof[#1]\leftskip=1cm\rightskip=1cm}
  {\endproof}
\renewcommand\@biblabel[1]{}
\newcommand{\ebar}{\overline{\epsilon}}
\title{The Complexity of Computing the Optimal Composition \\of Differential Privacy}
\author{Jack Murtagh\thanks{Supported by NSF grant CNS-1237235 and a grant from the Sloan Foundation.}\qquad \qquad Salil Vadhan\thanks{Supported by NSF grant CNS-1237235, a grant from the Sloan Foundation, and a Simons Investigator Award.}\\
\\
Center for Research on Computation \& Society\\ John A. Paulson School of Engineering \& Applied Sciences\\ Harvard University\\ \texttt{\{jmurtagh,salil\}@seas.harvard.edu}\\
\url{scholar.harvard.edu/jmurtagh}\\
\url{people.seas.harvard.edu/~salil}}
\date{\today}
\begin{document}
\maketitle
\begin{abstract}
In the study of differential privacy, composition theorems (starting with the original paper of Dwork, McSherry, Nissim, and Smith (TCC'06)) bound the degradation of privacy when composing several differentially private algorithms. Kairouz, Oh, and Viswanath (ICML'15) showed how to compute the optimal bound for composing $k$ arbitrary $(\epsilon,\delta)$-differentially private algorithms. We characterize the optimal composition for the more general case of $k$ arbitrary $(\epsilon_{1},\delta_{1}),\ldots,(\epsilon_{k},\delta_{k})$-differentially private algorithms where the privacy parameters may differ for each algorithm in the composition. We show that computing the optimal composition in general is $\#$P-complete. Since computing optimal composition exactly is infeasible (unless FP=$\#$P), we give an approximation algorithm that computes the composition to arbitrary accuracy in polynomial time. The algorithm is a modification of Dyer's dynamic programming approach to approximately counting solutions to knapsack problems (STOC'03).
\end{abstract}

\*
\begin{section}{Introduction}
Differential privacy is a framework that allows statistical analysis of private databases while minimizing the risks to individuals in the databases. The idea is that an individual should be relatively unaffected whether he or she decides to join or opt out of a research dataset. More specifically, the probability distribution of outputs of a statistical analysis of a database should be nearly identical to the distribution of outputs on the same database with a single person's data removed. Here the probability space is over the coin flips of the randomized differentially private algorithm that handles the queries. To formalize this, we call two databases $D_{0}, D_{1}$ with $n$ rows each $\textit{neighboring}$ if they are identical on at least $n-1$ rows, and define differential privacy as follows:

\begin{definition}[Differential Privacy \cite{DMNS06, DKMMN06}]
A randomized algorithm $M$ is \emph{$(\epsilon,\delta)$-differentially private} for $\epsilon,\delta \geq 0$ if for all pairs of neighboring databases $D_{0}$ and $D_{1}$ and all output sets $S\subseteq \mathrm{Range}(M)$
\[
\Pr[M(D_{0})\in S] \leq e^{\epsilon}\Pr[M(D_{1})\in S] + \delta
\]
where the probabilities are over the coin flips of the algorithm $M$. 
\end{definition}

In the practice of differential privacy, we generally think of $\epsilon$ as a small, non-negligible, constant (e.g. $\epsilon=.1$). We view $\delta$ as a ``security parameter'' that is cryptographically small (e.g. $\delta=2^{-30}$). One of the important properties of differential privacy is that if we run multiple distinct differentially private algorithms on the same database, the resulting composed algorithm is also differentially private, albeit with some degradation in the privacy parameters $(\epsilon,\delta)$. In this paper, we are interested in quantifying the degradation of privacy under composition. We will denote the composition of $k$ differentially private algorithms $M_{1},M_{2},\ldots,M_{k}$ as $(M_{1},M_{2},\ldots,M_{k})$ where
\[
(M_{1},M_{2},\ldots,M_{k})(x)=(M_{1}(x),M_{2}(x),\ldots,M_{k}(x))
\]
A handful of composition theorems already exist in the literature. The first basic result says:

\begin{theorem}[Basic Composition \cite{DKMMN06}]
\label{thm:basiccomp}
For every $\epsilon \geq 0$, $\delta \in [0,1]$, and $(\epsilon,\delta)$-differentially private algorithms $M_{1},M_{2},\ldots,M_{k}$, the composition $(M_{1},M_{2},\ldots,M_{k})$ satisfies $(k\epsilon,k\delta)$-differential privacy.
\end{theorem}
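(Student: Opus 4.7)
The plan is to prove the theorem by induction on $k$. The base case $k=1$ is immediate from the hypothesis that $M_1$ is $(\epsilon,\delta)$-differentially private. For the inductive step, let $M' = (M_1,\ldots,M_{k-1})$; by the inductive hypothesis, $M'$ is $((k-1)\epsilon,(k-1)\delta)$-DP, so it suffices to show that combining the $((k-1)\epsilon,(k-1)\delta)$-DP mechanism $M'$ with the $(\epsilon,\delta)$-DP mechanism $M_k$ (both run on the same input with independent coins) yields a $(k\epsilon, k\delta)$-DP mechanism.

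Fix neighboring databases $D_0, D_1$ and an output set $S \subseteq \mathrm{Range}(M') \times \mathrm{Range}(M_k)$. The key decomposition, exploiting that $M'$ and $M_k$ use independent randomness, is
\[
\Pr[(M', M_k)(D_b) \in S] \;=\; \sum_{y} \Pr[M'(D_b) = y] \cdot \Pr[M_k(D_b) \in S_y],
\]
where $S_y = \{z : (y,z) \in S\}$. The natural plan is to apply the $(\epsilon,\delta)$-DP of $M_k$ pointwise in $y$, and then apply the inductive hypothesis to the resulting expectation of a $[0,1]$-valued function of $y$, using the standard strengthening that $(\epsilon,\delta)$-DP implies $\mathbb{E}[f(M(D_0))] \leq e^{\epsilon}\,\mathbb{E}[f(M(D_1))] + \delta$ for every $f:\mathrm{Range}(M) \to [0,1]$ (obtained by writing $f$ as the integral of its level-set indicators).

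The main obstacle I anticipate is the bookkeeping for the $\delta$ slack. A direct two-step application in the order above produces an additive term of the form $\delta + e^{(k-1)\epsilon}\delta$, rather than the claimed $k\delta$, because the outer application of DP multiplies the $\delta$ coming from the inductive hypothesis by an exponential factor. To recover the tight $k\delta$ bound I would pass through the equivalent ``bad event'' characterization of $(\epsilon,\delta)$-DP: for each pair of neighbors and each $M_i$, there exist events of probability at most $\delta$ outside of which $M_i$ is pointwise $(\epsilon,0)$-indistinguishable. A union bound over the $k$ bad events yields a joint ``good'' event of probability at least $1 - k\delta$, and on its complement pure $(\epsilon,0)$-guarantees multiply cleanly to give $(k\epsilon, 0)$-indistinguishability, which together imply the stated $(k\epsilon, k\delta)$-DP bound.
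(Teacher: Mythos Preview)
The paper does not prove Theorem~\ref{thm:basiccomp}; it is stated as background with a citation to \cite{DKMMN06}, so there is no in-paper argument to compare against. Assessing your proposal on its own: the inductive outline is standard, and your diagnosis of the naive two-step application is correct --- applying the DP guarantees for $M'$ and $M_k$ in sequence multiplies one of the additive $\delta$ terms by an $e^{\epsilon}$ or $e^{(k-1)\epsilon}$ factor, overshooting $k\delta$.

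The gap is in the ``bad event'' characterization you invoke to repair this. It is \emph{not} true that $(\epsilon,\delta)$-DP is equivalent to the existence of an output event $B$ with $\Pr[M(D_0)\in B]\le\delta$ such that $\Pr[M(D_0)=y]\le e^{\epsilon}\Pr[M(D_1)=y]$ for all $y\notin B$. For a concrete failure take $P_0=(0.6,0.4)$, $P_1=(0.4,0.6)$ on a two-point space with $\epsilon=0$: this pair is $(0,0.2)$-DP, yet $P_0(0)>P_1(0)$ forces $0\in B$, so $P_0(B)\ge 0.6>0.2$. What \emph{is} equivalent, and suffices for your union-bound strategy, is the sub-measure decomposition $P_0^{(i)}(y)=p_i(y)+q_i(y)$ with $p_i(y)\le e^{\epsilon}P_1^{(i)}(y)$ pointwise and $\sum_y q_i(y)\le\delta$ (take $p_i(y)=\min(P_0^{(i)}(y),e^{\epsilon}P_1^{(i)}(y))$). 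Expanding the product $\prod_i(p_i+q_i)$, the all-$p_i$ term is bounded by $e^{k\epsilon}\prod_i P_1^{(i)}$, and the total mass of the remaining terms is $1-\prod_i\bigl(1-\sum_y q_i(y)\bigr)\le 1-(1-\delta)^k\le k\delta$. So your plan goes through once the characterization is phrased as removing $\delta$ mass rather than conditioning on an output event.
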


This tells us that under composition, the privacy parameters of the individual algorithms ``sum up,'' so to speak. We care about understanding composition because in practice we rarely want to release only a single statistic about a dataset. Releasing many statistics may require running multiple differentially private algorithms on the same database. Composition is also a very useful tool in algorithm design. Often, new differentially private algorithms are created by combining several simpler algorithms. Composition theorems help us analyze the privacy properties of algorithms designed in this way.

Theorem \ref{thm:basiccomp} shows a linear degradation in global privacy as the number of algorithms in the composition $(k)$ grows and it is of interest to improve on this bound. If we can prove that privacy degrades more slowly under composition, we can get more utility out of our algorithms under the same global privacy guarantees. Dwork, Rothblum, and Vadhan gave the following improvement on the basic summing composition above \cite{DRV10}.

\begin{theorem}[Advanced Composition \cite{DRV10}]
\label{thm:advancedcomp}
For every $\epsilon>0, \delta, \delta'>0,$ $k\in \mathbb{N},$ and $(\epsilon,\delta)$-differentially private algorithms $M_{1},M_{2},\ldots,M_{k}$, the composition $(M_{1},M_{2},\ldots,M_{k})$ satisfies $(\epsilon_{g},k\delta+\delta')$-differential privacy for

\[
\epsilon_{g} = \sqrt{2k\ln(1/\delta')}\cdot\epsilon + k\cdot\epsilon\cdot(e^{\epsilon}-1)
\]
\end{theorem}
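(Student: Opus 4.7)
The plan is to reduce advanced composition to a concentration inequality for the sum of ``privacy loss'' random variables. For neighboring databases $D_0, D_1$ and a randomized mechanism $M$, define the \emph{privacy loss} $L(y) = \ln\bigl(\Pr[M(D_0)=y]/\Pr[M(D_1)=y]\bigr)$, and consider $L(Y)$ with $Y \sim M(D_0)$. The first step is a near-equivalence between $(\epsilon,\delta)$-DP and a tail bound on $L$: up to a total-variation adjustment of $\delta$, one can couple $M(D_0)$ with a distribution whose privacy loss is bounded by $\epsilon$ almost surely. This reformulation is convenient because the privacy loss of the composition $(M_1,\ldots,M_k)$ at an output tuple $(y_1,\ldots,y_k)$ is precisely $\sum_i L_i(y_i)$, and the summands are independent since the $M_i$ use independent coin flips.

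Working first in the pure case $\delta=0$, each $L_i$ is supported in $[-\epsilon,\epsilon]$. I would then establish the mean bound $\mathbb{E}[L_i] \leq \epsilon(e^{\epsilon}-1)$ by an elementary convexity argument on pairs of probabilities $a,b$ with $a \leq e^{\epsilon}b$ and $b \leq e^{\epsilon} a$, summed over the output space (this is essentially the standard KL-divergence bound between $(\epsilon,0)$-indistinguishable distributions). Given both a mean bound and a range bound, Hoeffding's inequality applied to the independent sum $S = \sum_{i=1}^{k} L_i$ yields
\[
\Pr\bigl[\, S > k\epsilon(e^{\epsilon}-1) + \sqrt{2k\ln(1/\delta')}\cdot \epsilon \,\bigr] \leq \delta',
\]
and the quantity in brackets is exactly $\epsilon_g$. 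Undoing the near-equivalence on the composed mechanism then proves the theorem in the case $\delta=0$.

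To handle general $(\epsilon,\delta)$-DP mechanisms, I would apply the coupling step once per mechanism: replace each $M_i$ with a closely coupled ``sanitized'' version $\widetilde M_i$ whose privacy loss is bounded by $\epsilon$ almost surely, at a total-variation cost of $\delta$ each. A union bound over the $k$ mechanisms gives total error $k\delta$ between the true composition and the sanitized one, and applying the pure-case analysis to the sanitized composition contributes an additional $\delta'$ from concentration. Adding these failure probabilities yields $(\epsilon_g,\; k\delta + \delta')$-DP, as claimed.

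The main obstacle I anticipate is the delicate handling of the $\delta>0$ reduction. Naively conditioning on the event ``$L_i$ bounded'' is not a valid construction, since the conditional distribution no longer equals $M_i(D_0)$; one must instead build the sanitized mechanism via a careful truncation or maximum coupling so that the deviations from $M_i(D_0)$ and $M_i(D_1)$ line up with the $\delta$ budget symmetrically and do not generate extra cross-terms in the final $\epsilon$. The concentration step and the KL-type mean bound are comparatively routine once this framework is in place.
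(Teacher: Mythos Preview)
The paper does not prove Theorem~\ref{thm:advancedcomp}; it is stated as background and attributed to \cite{DRV10}, so there is no ``paper's own proof'' to compare against. Your sketch is essentially the original Dwork--Rothblum--Vadhan argument: define the privacy-loss random variable, bound its mean by $\epsilon(e^{\epsilon}-1)$ via the KL-type calculation, apply a Hoeffding-style concentration bound to the sum, and handle $\delta>0$ by a coupling/truncation that costs $k\delta$ in total variation. That is the right skeleton and would succeed.

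One small point worth flagging: the statement in \cite{DRV10} (and as used later in this paper via Theorem~\ref{thm:optadaptive}) is for \emph{adaptive} composition, where $M_i$ may depend on the outputs $y_1,\ldots,y_{i-1}$. In that setting the $L_i$ are not independent but form a supermartingale difference sequence, so you need Azuma's inequality rather than Hoeffding's. Your independence claim is fine for the non-adaptive composition actually written in the theorem statement here, but if you intend to match the full DRV result you should swap Hoeffding for Azuma; the rest of the argument is unchanged.
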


Theorem \ref{thm:advancedcomp} shows that privacy under composition degrades by a function of $O(\sqrt{k\ln(1/\delta')})$ which is an improvement if $\delta'=2^{-O(k)}$. It can be shown that a degradation function of $\Omega(\sqrt{k\ln(1/\delta)})$ is necessary even for the simplest differentially private algorithms, such as randomized response \cite{War65}.

Despite giving an asymptotically correct upper bound for the global privacy parameter, $\epsilon_{g}$, Theorem \ref{thm:advancedcomp} is not exact. We want an exact characterization because, beyond being theoretically interesting, constant factors in composition theorems can make a substantial difference in the practice of differential privacy. Furthermore, Theorem \ref{thm:advancedcomp} only applies to ``homogeneous'' composition where each individual algorithm has the same pair of privacy parameters, $(\epsilon,\delta)$ . In practice we often want to analyze the more general case where some individual algorithms in the composition may offer more or less privacy than others. That is, given algorithms $M_{1},M_{2},\ldots,M_{k}$, we want to compute the best achievable privacy parameters for $(M_{1},M_{2},\ldots,M_{k})$. Formally, we want to compute the function:
\[
\mathrm{OptComp}(M_{1},M_{2},\ldots,M_{k},\delta_{g}) = \inf\{\epsilon_{g}\geq 0\colon(M_{1},M_{2},\ldots,M_{k}) ~\textrm{is}~ (\epsilon_{g},\delta_{g})\textsc{-DP}\}
\]

It is convenient for us to view $\delta_{g}$ as given and then compute the best $\epsilon_{g}$, but the dual formulation, viewing $\epsilon_{g}$ as given, is equivalent (by binary search). Actually, we want a function that depends only on the privacy parameters of the individual algorithms:
\[
\mathrm{OptComp}((\epsilon_{1},\delta_{1}),(\epsilon_{2},\delta_{2}),\ldots,(\epsilon_{k},\delta_{k}), \delta_{g}) = \sup\{\mathrm{OptComp}(M_{1},M_{2},\ldots,M_{k},\delta_{g})\colon M_{i} ~\textrm{is}~ (\epsilon_{i},\delta_{i})\textsc{-DP} ~\forall i\in [k]\}
\]

In other words we want $\mathrm{OptComp}$ to give us the minimum possible $\epsilon_{g}$ that maintains privacy for every sequence of algorithms with the given privacy parameters $(\epsilon_{i},\delta_{i})$. A result from Kairouz, Oh, and Viswanath \cite{KOV15} characterizes $\mathrm{OptComp}$ for the homogeneous case.

\begin{theorem}[Optimal Homogeneous Composition \cite{KOV15}\footnote{The phrasing of Theorem \ref{thm:homogeneouscomp} is not exactly how it is presented in \cite{KOV15} (which only refers to $\epsilon_{g}$ of the form $(k-2i)\epsilon$ for integer $i$), but this version can be deduced from the original.}]
\label{thm:homogeneouscomp}
For every $\epsilon \geq 0$ and $\delta \in [0,1)$,\\*
$\mathrm{OptComp}(\underbrace{(\epsilon,\delta),(\epsilon,\delta),\ldots,(\epsilon,\delta)}_{k}, \delta_{g})$ equals the least value of $\epsilon_{g}\geq 0$ such that

\[
 \frac{1}{(1+e^{\epsilon})^{k}}\sum\limits^{k}_{l=\left\lceil{\frac{\epsilon_{g}+k\epsilon}{2\epsilon}}\right\rceil}{k\choose l}\left(e^{l\epsilon}-e^{\epsilon_{g}}e^{(k-l)\epsilon}\right)\leq 1-\frac{1-\delta_{g}}{(1-\delta)^{k}}
\]
\end{theorem}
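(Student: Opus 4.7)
The plan is to recast the $(\epsilon_g,\delta_g)$-DP condition as a ``hockey-stick'' divergence bound, identify a worst-case single-mechanism pair of output distributions, and then evaluate the divergence in closed form on its $k$-fold product.

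First, I would recall the equivalent formulation: $M$ with output distributions $P = M(D_0)$, $Q = M(D_1)$ on neighbors is $(\epsilon_g,\delta_g)$-DP iff $T_{\epsilon_g}(P,Q) := \sum_x \max\{0,\, P(x) - e^{\epsilon_g}Q(x)\} \le \delta_g$, and similarly with $P,Q$ swapped. So the task becomes: over all sequences of pairs $(P_i, Q_i)_{i=1}^{k}$ each satisfying this condition with parameters $(\epsilon,\delta)$, maximize $T_{\epsilon_g}(\bigotimes_i P_i, \bigotimes_i Q_i)$, and find the least $\epsilon_g$ for which this supremum stays $\le \delta_g$.

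Second, and this is the crux, I would invoke the KOV structural lemma: among all $(\epsilon,\delta)$-DP pairs $(P,Q)$, the canonical four-outcome pair $P^* = ((1-\delta)\tfrac{e^\epsilon}{e^\epsilon+1},\, (1-\delta)\tfrac{1}{e^\epsilon+1},\, \delta,\, 0)$ and $Q^* = ((1-\delta)\tfrac{1}{e^\epsilon+1},\, (1-\delta)\tfrac{e^\epsilon}{e^\epsilon+1},\, 0,\, \delta)$ maximizes $T_{\epsilon_g}$ on every tensor power. The standard route is to exhibit, for each $(\epsilon,\delta)$-DP pair $(P,Q)$, a randomized post-processing $A$ with $A(P^*) = P$ and $A(Q^*) = Q$; then $A^{\otimes k}$ maps $(P^{*\otimes k}, Q^{*\otimes k})$ to $(\bigotimes_i P_i, \bigotimes_i Q_i)$, and data-processing for $T_{\epsilon_g}$ gives $T_{\epsilon_g}(\bigotimes_i P_i,\bigotimes_i Q_i) \le T_{\epsilon_g}(P^{*\otimes k}, Q^{*\otimes k})$. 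By symmetry of the construction, the same pair is extremal in the swapped direction.

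Third, with the worst case identified, I evaluate $T_{\epsilon_g}(P^{*\otimes k}, Q^{*\otimes k})$ directly. Sequences in $\{0,1,2,3\}^k$ that contain the third coordinate (the atom of mass $\delta$ for $P^*$) have $Q^{*\otimes k}$-mass zero and contribute their total $P^{*\otimes k}$-mass $1 - (1-\delta)^k$; sequences containing the fourth coordinate contribute $0$ since $P^{*\otimes k}$-mass is $0$ there. On the remaining sequences in $\{0,1\}^k$, one with exactly $l$ ``type-$0$'' coordinates has $P^{*\otimes k}$-mass $(1-\delta)^k e^{l\epsilon}/(1+e^\epsilon)^k$ and $Q^{*\otimes k}$-mass $(1-\delta)^k e^{(k-l)\epsilon}/(1+e^\epsilon)^k$, so $P^{*\otimes k}(x) - e^{\epsilon_g}Q^{*\otimes k}(x) > 0$ exactly when $l \ge \lceil(\epsilon_g + k\epsilon)/(2\epsilon)\rceil$. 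Summing with the $\binom{k}{l}$ multiplicity, adding $1-(1-\delta)^k$, imposing the bound $\le \delta_g$, and dividing through by $(1-\delta)^k$ produces exactly the inequality in the statement.

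The main obstacle is Step 2: rigorously proving that $(P^*, Q^*)$ dominates every $(\epsilon,\delta)$-DP pair in hockey-stick divergence, uniformly over $\epsilon_g$ and over tensor products. Constructing the explicit post-processing channel from $(P^*, Q^*)$ to a generic $(P,Q)$ (equivalently, a majorization argument on the privacy-loss random variable) is the technical heart. Once this is in place, the binomial summation in Step 3 is a routine calculation, and optimality of $\epsilon_g$ is exhibited by taking each $M_i$ to be the worst-case mechanism realized by $(P^*, Q^*)$.
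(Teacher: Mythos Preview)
Your proposal is correct and follows essentially the same route as the paper: reduce to the canonical four-outcome mechanism $\tilde{M}_{(\epsilon,\delta)}$ via a post-processing/simulation lemma (the paper's Lemma~3.1, proved explicitly through Lemmas~3.4--3.6), then compute the hockey-stick divergence on the $k$-fold product by splitting off the $\delta$-atoms and summing the $\{0,1\}^{k}$ contributions by their type. The paper carries this out in the heterogeneous setting (Theorem~1.4) and recovers the homogeneous formula by setting all $\epsilon_i=\epsilon$, which collapses the sum over subsets to the binomial sum you wrote; your identification of the simulation lemma as the technical heart matches exactly what the paper spends its effort on.
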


 Empirically (see Appendix \ref{app:A}), this optimal bound provides a 30-40$\%$ savings in $\epsilon_{g}$ compared to Theorem \ref{thm:advancedcomp} (and a $20\%$ savings compared to an improved asymptotic bound from \cite{KOV15}). The problem remains to find the optimal composition behavior for the more general heterogeneous case. Kairouz, Oh, and Viswanath also provide an upper bound for heterogeneous composition that generalizes the $O(\sqrt{k\ln(1/\delta')})$ degradation found in Theorem \ref{thm:advancedcomp} for homogeneous composition but do not comment on how close it is to optimal. 
\end{section}

\begin{subsection}{Our Results}

We begin by extending the results of Kairouz, Oh, and Viswanath \cite{KOV15} to the general heterogeneous case.

\begin{theorem}[Optimal Heterogeneous Composition]
\label{thm:optcomp}
For all $\epsilon_{1},\ldots,\epsilon_{k} \geq 0$ and $\delta_{1},\ldots,\delta_{k},\delta_{g}\in [0,1), \mathrm{OptComp}((\epsilon_{1},\delta_{1}),(\epsilon_{2},\delta_{2}),\ldots,(\epsilon_{k},\delta_{k}), \delta_{g})$ equals the least value of $\epsilon_{g}\geq 0$ such that

\begin{equation}
\label{eq:1}
\frac{1}{\prod_{i=1}^{k}{(1+e^{\epsilon_{i}})}}
\sum_{S\subseteq \{1,\ldots,k\}}\max\left\{e^{\sum\limits_{i\in S}\epsilon_{i}} - e^{\epsilon_{g}}\cdot e^{\sum\limits_{i\not\in S}\epsilon_{i}}, 0\right\} \leq 1-\frac{1-\delta_{g}}{\prod_{i=1}^{k}{(1-\delta_{i})}} 
\end{equation}
\end{theorem}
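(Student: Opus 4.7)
The plan is to follow the strategy of Kairouz, Oh, and Viswanath \cite{KOV15} used for Theorem~\ref{thm:homogeneouscomp} and identify what has to change in the heterogeneous setting. First I would reduce $\mathrm{OptComp}$ from mechanisms to pairs of distributions. For distributions $P,Q$ on a common space define the hockey-stick divergence $D_{e^{\epsilon}}(P\|Q)=\sum_{x}\max\{P(x)-e^{\epsilon}Q(x),0\}$; one checks directly that $(P,Q)$ is $(\epsilon,\delta)$-indistinguishable (both directions) iff $\max\{D_{e^{\epsilon}}(P\|Q),D_{e^{\epsilon}}(Q\|P)\}\le \delta$. Since the worst case in $\mathrm{OptComp}$ is realized by mechanisms $M_i$ whose outputs on neighboring databases are an $(\epsilon_i,\delta_i)$-indistinguishable pair $(P_i,Q_i)$, and composition corresponds to taking product distributions, $\mathrm{OptComp}$ equals the least $\epsilon_g$ such that $D_{e^{\epsilon_g}}\!\left(\bigotimes_i P_i \,\big\|\, \bigotimes_i Q_i\right)\le\delta_g$ holds for the worst-case choice of pairs.

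Second, I would identify the worst-case pair per coordinate. Generalizing the extremal construction of \cite{KOV15}, take the $4$-point distribution on $\{0,1,2,3\}$ with $P_i(0)=\delta_i$, $P_i(1)=(1-\delta_i)e^{\epsilon_i}/(1+e^{\epsilon_i})$, $P_i(2)=(1-\delta_i)/(1+e^{\epsilon_i})$, $P_i(3)=0$, and $Q_i$ obtained by the relabeling $j\mapsto 3-j$. This pair is $(\epsilon_i,\delta_i)$-indistinguishable, and via a coordinate-by-coordinate post-processing / coupling argument one shows that any other $(\epsilon_i,\delta_i)$-indistinguishable pair $(P_i',Q_i')$ yields a hockey-stick divergence of the product that is dominated by the one for the $4$-point pair. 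I expect this step to be the main obstacle: the symmetry used in the homogeneous proof no longer applies uniformly across coordinates, and the reduction has to be carried out one coordinate at a time, exploiting the fact that $D_{e^{\epsilon_g}}$ is linear in the marginal distribution of a single coordinate once the others are fixed.

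Third, I would evaluate $D_{e^{\epsilon_g}}\!\left(\bigotimes_i P_i \,\big\|\, \bigotimes_i Q_i\right)$ explicitly for the extremal pair. Partition outcomes $x\in\{0,1,2,3\}^k$ into: (a) some $x_i=0$, where $\bigotimes_j Q_j$ assigns probability $0$ and the max equals $(\bigotimes_j P_j)(x)$; (b) some $x_i=3$ and no $x_i=0$, where the max is $0$; (c) all $x_i\in\{1,2\}$, parametrized by $S=\{i:x_i=1\}$. Case (a) sums to $1-\prod_i(1-\delta_i)$. Case (c) gives, using $P_i(1)/P_i(2)=Q_i(2)/Q_i(1)=e^{\epsilon_i}$,
\[
\frac{\prod_i(1-\delta_i)}{\prod_i(1+e^{\epsilon_i})}\sum_{S\subseteq[k]}\max\!\left\{e^{\sum_{i\in S}\epsilon_i}-e^{\epsilon_g}\,e^{\sum_{i\notin S}\epsilon_i},\;0\right\}.
\]
Requiring the total to be at most $\delta_g$ and rearranging gives exactly inequality~(\ref{eq:1}).

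Finally, because the extremal pair is symmetric under swapping $P_i\leftrightarrow Q_i$ (via the relabeling $0\leftrightarrow 3$, $1\leftrightarrow 2$), the reverse divergence $D_{e^{\epsilon_g}}\!\left(\bigotimes_i Q_i \,\big\|\, \bigotimes_i P_i\right)$ yields the same inequality, so both directions of $(\epsilon_g,\delta_g)$-indistinguishability reduce to (\ref{eq:1}) and the theorem follows.
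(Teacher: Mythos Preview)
Your proposal is correct and follows essentially the same route as the paper: reduce to the extremal $4$-point mechanisms $\tilde M_{(\epsilon_i,\delta_i)}$, then compute the hockey-stick divergence of their product by partitioning $\{0,1,2,3\}^k$ exactly as you describe in step three. The step you flag as the main obstacle is resolved in the paper by a per-coordinate simulation lemma (Lemma~\ref{lem:sim}): every $(\epsilon_i,\delta_i)$-DP mechanism on a pair of neighboring databases is a post-processing of $\tilde M_{(\epsilon_i,\delta_i)}$, so by Fact~\ref{fact:postproc} the product of the $\tilde M$'s dominates all compositions at once (Lemma~\ref{lem:optism})---no coordinate-by-coordinate linearity-in-marginals argument is needed, and no symmetry across coordinates from the homogeneous case is used.
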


Theorem \ref{thm:optcomp} exactly characterizes the optimal composition behavior for any arbitrary set of differentially private algorithms. It also shows that optimal composition can be computed in time exponential in $k$ by computing the sum over $S\subseteq \{1,\ldots,k\}$ by brute force. Of course in practice an exponential-time algorithm is not satisfactory for large $k$. Our next result shows that this exponential complexity is necessary: 

\begin{theorem}
\label{thm:hardness}
Computing $\mathrm{OptComp}$ is $\#P$-complete, even on instances where $\delta_{1}=\delta_{2}=\ldots=\delta_{k}=0$ and $\sum_{i\in [k]}\epsilon_{i}\leq \epsilon$ for any desired constant $\epsilon>0$.
\end{theorem}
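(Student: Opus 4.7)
The claim has two parts: $\#P$-membership and $\#P$-hardness. Membership is the easier direction: each summand of \eqref{eq:1} is computable in polynomial time from the input and a candidate $\epsilon_g$, and after rescaling by a sufficiently large power of $2$ the sum becomes an integer-valued quantity that a standard $\#P$-machine can compute by branching over $S\subseteq[k]$; binary search then identifies the least $\epsilon_g\ge 0$ satisfying the inequality. The substantive direction is $\#P$-hardness, which I plan to establish by a polynomial-time Turing reduction from $\#\mathrm{SUBSET\text{-}SUM}$.

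Given an instance $(a_1,\ldots,a_k;T)$ with $W=\sum_i a_i$, I first pad to ensure $T>W/2$ (e.g., append one item of weight $W$ and shift the target to $T+W$; any subset of the padded instance achieving the new target must contain the padded item, so counts are preserved). Then set $\epsilon_i=c\cdot a_i$ and $\delta_i=0$ with $c=\epsilon/W$, so that $\sum_i\epsilon_i=\epsilon$ for the desired constant $\epsilon>0$. Because all $\delta_i=0$, the right-hand side of \eqref{eq:1} collapses to $\delta_g$, so $\mathrm{OptComp}$ is the inverse of
\[
G(\epsilon_g)\;=\;\frac{1}{\prod_i(1+e^{c a_i})}\sum_{S\subseteq[k]}\max\!\left\{e^{c\,w(S)}-e^{\epsilon_g}\,e^{c(W-w(S))},\;0\right\},
\]
where $w(S)=\sum_{i\in S}a_i$. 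A summand is positive iff $w(S)>(\epsilon_g/c+W)/2$, so throughout the open interval $I_m=\bigl(c(2m-W),\,c(2(m+1)-W)\bigr)$ the active set is exactly $\{S:w(S)\ge m+1\}$ and
\[
G(\epsilon_g)\cdot\textstyle\prod_i(1+e^{ca_i})\;=\;A_{m+1}-e^{\epsilon_g}e^{cW}B_{m+1},
\]
with $A_j:=\sum_{S:w(S)\ge j}e^{c\,w(S)}$ and $B_j:=\sum_{S:w(S)\ge j}e^{-c\,w(S)}$. Thus $G$ is affine in $y=e^{\epsilon_g}$ on each $I_m$, with kinks at the endpoints, and the discrete jumps satisfy $A_m-A_{m+1}=n_m\,e^{cm}$ where $n_m=|\{S:w(S)=m\}|$ is precisely the $\#\mathrm{SUBSET\text{-}SUM}$ count at target~$m$.

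The reduction therefore evaluates $G$ at two distinct rational points inside each of $I_{T-1}$ and $I_T$ (both intervals lie in $(0,\infty)$ by the padding, so $\mathrm{OptComp}$ is defined there), solves the two resulting $2\times 2$ linear systems for $(A_T,B_T)$ and $(A_{T+1},B_{T+1})$, and outputs $n_T=(A_T-A_{T+1})e^{-cT}$, rounded to an integer. To evaluate $G$ at a chosen rational $\epsilon_g^\ast$, I exploit that $\mathrm{OptComp}$ is monotone in $\delta_g$ and binary-search on $\delta_g$ until the oracle returns $\epsilon_g^\ast$ to the required precision, which gives $G(\epsilon_g^\ast)=\delta_g$; polynomially many queries of polynomial bit-length suffice. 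The main technical obstacle is the precision bookkeeping: the numbers $A_j,B_j$ are transcendental but $n_T$ is an integer bounded by $2^k$, so I must verify that working with $c$ of polynomial bit-length and polynomial precision per oracle call keeps the cumulative error in $(A_T-A_{T+1})e^{-cT}$ safely below $1/2$ so that rounding recovers $n_T$ exactly. The hypotheses of the theorem are met by construction: every $\delta_i$ is $0$, and $\sum_i\epsilon_i=cW=\epsilon$ can be made any positive constant by the choice of $c$.
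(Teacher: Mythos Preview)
Your hardness reduction is correct and takes a genuinely different route from the paper. The paper reduces from the \emph{multiplicative} problem \#\textsc{INT-PARTITION} through a chain of four intermediate problems, setting $\epsilon_i=\ln w_i$ with $w_i$ a $D$th root of an integer so that $e^{\epsilon_i}$ is algebraic; the key gap bounds (e.g., ``a difference of integers is at least $1$'') then follow from this algebraic structure. You instead reduce in one step from additive \#\textsc{SUBSET-SUM}, setting $\epsilon_i=c\,a_i$ and exploiting that $G$ is piecewise affine in $y=e^{\epsilon_g}$ with breakpoints at integer weight-levels; two samples per interval recover $A_T,B_T$ and hence $n_T=(A_T-A_{T+1})e^{-cT}$. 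Your path is shorter and conceptually cleaner, at the price that every intermediate quantity is transcendental; your precision argument still goes through because $n_T$ is an integer and the denominator $y_2-y_1$ of the $2\times2$ system is at least $c/2\ge 2^{-\mathrm{poly}(n)}$, while all numerators are at most $2^{O(k)}$. Two small points worth tightening: after padding you should take $c=\epsilon/W'$ (with $W'$ the padded total) so that $\sum_i\epsilon_i$ really equals $\epsilon$, and the edge cases $T\in\{0,1\}$ (where $I_{T-1}$ is not contained in $(0,\infty)$) should be handled separately.

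One small gap in your \#P-membership sketch: the summands of \eqref{eq:1} involve $e^{\epsilon_i}$ with rational $\epsilon_i$, so they are irrational and no power-of-two rescaling makes the sum literally integer-valued. The paper handles this by fixing a precision $q$, replacing each summand $f(S)$ by a rational $\hat f(S)$ with $|f(S)-\hat f(S)|<2^{-(q+k)}$ computable in polynomial time, and then using a \#P oracle to compute $\sum_S m\hat f(S)$ for $m=10^q$; you should say something equivalent. Also note that the paper (like you) freely switches to the dual oracle ``given $\epsilon_g$, output $\delta_g$'' for the hardness proof, so you can simply query that version at your four chosen values of $\epsilon_g$ rather than binary-searching on $\delta_g$.
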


Recall that $\#P$ is the class of counting problems associated with decision problems in NP. So being $\#P$-complete means that there is no polynomial-time algorithm for $\mathrm{OptComp}$ unless there is a polynomial-time algorithm for counting the number of satisfying assignments of boolean formulas (or equivalently for counting the number of solutions of all NP problems). So there is almost certainly no efficient algorithm for $\mathrm{OptComp}$ and therefore no analytic solution. Despite the intractability of exact computation, we show that $\mathrm{OptComp}$ can be approximated efficiently.

\begin{theorem}
\label{thm:approx}
There is a polynomial-time algorithm that given rational $\epsilon_{1},\ldots,\epsilon_{k}\geq 0, \delta_{1},\ldots \delta_{k}, \delta_{g} \in [0,1),$ and $\eta\in (0,1)$, outputs $\epsilon^{*}$ satisfying 
\[\mathrm{OptComp}((\epsilon_{1},\delta_{1}),\ldots,(\epsilon_{k},\delta_{k}), \delta_{g})\leq \epsilon^{*}\leq\mathrm{OptComp}((\epsilon_{1},\delta_{1}),\ldots,(\epsilon_{k},\delta_{k}), e^{-\eta/2}\cdot\delta_{g})+\eta\]
The algorithm runs in time \[O\left(\frac{k^{3}\cdot\ebar\cdot(1+\ebar)}{\eta}\cdot \log\left(\frac{k^{2}\cdot\ebar\cdot(1+\ebar)}{\eta}\right)\right)\]
 where $\ebar=\sum_{i\in [k]}\epsilon_{i}/k$, assuming constant-time arithmetic operations.
\end{theorem}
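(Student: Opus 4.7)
The approach is to combine Theorem \ref{thm:optcomp} with a Dyer-style knapsack FPTAS. By Theorem \ref{thm:optcomp}, $\mathrm{OptComp}$ equals the least $\epsilon_g \geq 0$ satisfying $F(\epsilon_g) \leq B(\delta_g)$, where $F(\epsilon_g)$ is the left-hand side of (\ref{eq:1}) and $B(\delta) := 1 - (1-\delta)/\prod_i(1-\delta_i)$. The crucial structural observation is that the summand in $F$ depends on $S \subseteq [k]$ only through the scalar $A(S) := \sum_{i \in S}\epsilon_i$; writing $h_{\epsilon_g}(a) := e^{a} - e^{\epsilon_g + \epsilon_{\mathrm{tot}} - a}$ where $\epsilon_{\mathrm{tot}} := \sum_i \epsilon_i$, we have $F(\epsilon_g) = \frac{1}{\prod_i(1+e^{\epsilon_i})}\sum_{S} \max\{h_{\epsilon_g}(A(S)),0\}$. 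This is exactly the structure Dyer's FPTAS for \#Knapsack exploits.

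\textbf{Discretize, DP, evaluate.} Choose an integer scale $T = \Theta(k/\eta)$ and set $w_i := \lceil \epsilon_i T \rceil$; then $\tilde A(S) := (\sum_{i \in S} w_i)/T$ satisfies $0 \leq \tilde A(S) - A(S) \leq k/T \leq \eta/2$. The standard knapsack DP computes $c_j := |\{S : \sum_{i \in S} w_i = j\}|$ for $j \in \{0,\ldots,\sum_i w_i\}$ in time $O(k \cdot \sum_i w_i)$, which matches the stated runtime after multiplying by the $O(\log(k\ebar/\eta))$ iterations of the binary search below. Define the approximation $\tilde F(\epsilon_g) := \frac{1}{\prod_i(1+e^{\epsilon_i})}\sum_{j} c_j \cdot \max\{h_{\epsilon_g}(j/T),0\}$. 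The key algebraic identity $h_{\epsilon_g}(a+\rho) = e^{\rho}\,h_{\epsilon_g - 2\rho}(a)$ converts per-subset rounding error $\rho(S) \in [0,\eta/2]$ into one multiplicative factor and one additive shift in $\epsilon_g$; combined with the monotonicities of $h_{\epsilon_g}(a)$ (increasing in $a$, decreasing in $\epsilon_g$), this yields
\[
F(\epsilon_g) \;\leq\; \tilde F(\epsilon_g) \;\leq\; e^{\eta/2}\,F(\epsilon_g - \eta).
\]

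\textbf{Binary search, bicriteria guarantee, and main obstacle.} Binary-search over multiples of $\eta$ in $[0,\sum_i \epsilon_i]$ for the least $\epsilon^*$ with $\tilde F(\epsilon^*) \leq B(\delta_g)$; since both $F$ and $\tilde F$ are monotone non-increasing, this is correct and uses $O(\log(k\ebar/\eta))$ DP evaluations. The lower bound $F(\epsilon^*) \leq \tilde F(\epsilon^*) \leq B(\delta_g)$ gives $\epsilon^* \geq \mathrm{OptComp}((\epsilon_i,\delta_i),\delta_g)$. Conversely, setting $\hat\epsilon := \mathrm{OptComp}((\epsilon_i,\delta_i), e^{-\eta/2}\delta_g)$, we get $\tilde F(\hat\epsilon + \eta) \leq e^{\eta/2} F(\hat\epsilon) \leq e^{\eta/2}\,B(e^{-\eta/2}\delta_g) \leq B(\delta_g)$, the last inequality being the elementary identity that reduces, after clearing denominators, to $(e^{\eta/2}-1)(1 - \prod_i(1-\delta_i)) \geq 0$. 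Hence $\epsilon^* \leq \hat\epsilon + \eta$. The principal obstacle is precisely the error analysis: naively, rounding $k$ exponentials by factors $e^{\rho}$ could inflate $F$ by $e^{k\rho}$, but the factorization $h_{\epsilon_g}(a+\rho) = e^{\rho} h_{\epsilon_g - 2\rho}(a)$ consolidates the error into a single multiplicative factor $e^{\eta/2}$ plus an additive shift $\eta$ in $\epsilon_g$—matching exactly the bicriteria slack ($e^{-\eta/2}\delta_g$, $+\eta$) in the theorem's statement.
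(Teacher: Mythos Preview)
Your approach is essentially the paper's: discretize the $\epsilon_i$'s, run a knapsack-style DP over the resulting integer weights, binary search for $\epsilon^*$, and bound the rounding error by a termwise identity that trades the error for one multiplicative factor on $F$ plus one additive shift in $\epsilon_g$ (your $h_{\epsilon_g}(a+\rho)=e^{\rho}h_{\epsilon_g-2\rho}(a)$ is exactly the mechanism behind the paper's Lemma~\ref{lem:goodapprox}). The only organizational difference is that the paper rounds the $\epsilon_i$'s to integer multiples of $\epsilon_0=\ln(1+\beta)$ and then solves $\mathrm{OptComp}$ \emph{exactly} on the rounded instance (so the perturbation lemma is stated at the level of $\mathrm{OptComp}$), whereas you keep the original $\epsilon_i$'s in the normalization and round only the subset sums; this is cosmetic.

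Two small gaps to tighten. First, your binary search is over multiples of $\eta$, but from $\tilde F(\hat\epsilon+\eta)\le B(\delta_g)$ you can only conclude $\epsilon^*\le \lceil(\hat\epsilon+\eta)/\eta\rceil\cdot\eta$, which may be as large as $\hat\epsilon+2\eta$; halve the grid spacing (or, as the paper does, budget an extra additive $\beta$ when choosing the discretization scale) to recover $\epsilon^*\le\hat\epsilon+\eta$. Second, evaluating $\tilde F$ requires $e^{j/T}$, which is irrational; the paper sidesteps this by choosing $e^{\epsilon_0}=1+\beta\in\mathbb{Q}$ so that every $e^{\epsilon_i'}$ and $e^{\epsilon^*}$ is rational and Inequality~(\ref{eq:1}) can be checked exactly---you should either adopt the same trick or note that the ``constant-time arithmetic'' assumption absorbs this.
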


Note that we incur a relative error of $\eta$ in approximating $\delta_{g}$ \emph{and} an additive error of $\eta$ in approximating $\epsilon_{g}$. Since we always take $\epsilon_{g}$ to be non-negligible or even constant, we get a very good approximation when $\eta$ is polynomially small or even a constant. Thus, it is acceptable that the running time is polynomial in $1/\eta$. 

In addition to the results listed above, our proof of Theorem \ref{thm:optcomp} also provides a somewhat simpler proof of the Kairouz-Oh-Viswanath homogeneous composition theorem (Theorem \ref{thm:homogeneouscomp} \cite{KOV15}). The proof in \cite{KOV15} introduces a view of differential privacy through the lens of hypothesis testing and uses geometric arguments. Our proof relies only on elementary techniques commonly found in the differential privacy literature.

\paragraph{Practical Application.} 
The theoretical results presented here were motivated by our work on an applied project called ``Privacy Tools for Sharing Research Data''\footnote{privacytools.seas.harvard.edu}. We are building a system that will allow researchers with sensitive datasets to make differentially private statistics about their data available through data repositories using the Dataverse\footnote{dataverse.org} platform \cite{Cro11, Kin07}. Part of this system is a tool that helps both data depositors and data analysts distribute a global privacy budget across many statistics. Users select which statistics they would like to compute and are given estimates of how accurately each statistic can be computed. They can also redistribute their privacy budget according to which statistics they think are most valuable in their dataset. We implemented the approximation algorithm from Theorem \ref{thm:approx} and integrated it with this tool to ensure that users get the most utility out of their privacy budget. 
\end{subsection}
\begin{section}{Technical Preliminaries}
\label{sect:tech}
A useful notation for thinking about differential privacy is defined below.
\begin{definition}
For two discrete random variables $Y$ and $Z$ taking values in the same output space $S$, the \emph{$\delta$-approximate max-divergence} of $Y$ and $Z$ is defined as:
\[
D_{\infty}^{\delta}(Y\|Z)\equiv \max_{S}\left[\ln\frac{\Pr[Y\in S]-\delta}{\Pr[Z\in S]}\right]
\]
\end{definition}

Notice that an algorithm $M$ is $(\epsilon,\delta)$ differentially private if and only if for all pairs of neighboring databases, $D_{0},D_{1}$, we have $D_{\infty}^{\delta}(M(D_{0})\|M(D_{1}))\leq\epsilon$. The standard fact that differential privacy is closed under ``post processing'' \cite{DMNS06, DR13} now can be formulated as:
\begin{fact}
\label{fact:postproc}
If $f\colon S\to R$ is any randomized function, then 
\[
D_{\infty}^{\delta}(f(Y)\|f(Z))\leq D_{\infty}^{\delta}(Y\|Z)
\]
\end{fact}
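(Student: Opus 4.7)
The plan is to decompose the proof into two clean steps: first reduce to the case where $f$ is deterministic, then dispatch the deterministic case by pulling back output sets to the input space.

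For the reduction, I would represent the randomized function as $f(y) = g(y,R)$, where $g$ is deterministic and $R$ is a random variable independent of both $Y$ and $Z$, encoding $f$'s internal coins. Fixing any output set $T \subseteq R$ and any value $r$ of $R$, it will suffice to prove that
\[
\Pr[g(Y,r) \in T] \leq e^{\epsilon}\cdot \Pr[g(Z,r) \in T] + \delta
\]
for every fixed $r$, where $\epsilon = D_{\infty}^{\delta}(Y\|Z)$. Averaging over $r$ using independence of $R$ from $Y$ and $Z$ then yields the same inequality for $f(Y)$ and $f(Z)$, and rearranging gives $D_{\infty}^{\delta}(f(Y)\|f(Z)) \leq \epsilon$.

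For the deterministic case, given $T$ and $r$, define the preimage $S_r := \{y : g(y,r) \in T\}$. Then the events $\{g(Y,r) \in T\}$ and $\{Y \in S_r\}$ coincide, and likewise for $Z$, so
\[
\Pr[g(Y,r) \in T] - \delta = \Pr[Y \in S_r] - \delta \leq e^{\epsilon}\cdot \Pr[Z \in S_r] = e^{\epsilon}\cdot \Pr[g(Z,r) \in T],
\]
where the middle inequality is exactly the hypothesis $D_{\infty}^{\delta}(Y\|Z)\leq \epsilon$ applied to $S_r$. This is the desired fixed-$r$ bound.

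There is really no significant obstacle here; the only subtle point is to be careful that the additive $\delta$ term is preserved under averaging over $r$, which it is because the bound is linear in the probabilities and $\int \delta\,d\mu(r) = \delta$ when $R$ has total mass one. A slicker alternative I might present instead is to skip the coin-decomposition and argue directly via the layer-cake representation $\Pr[f(Y)\in T] = \int_{0}^{1} \Pr[Y \in S_t]\,dt$, where $S_t = \{y : \Pr[f(y)\in T] \geq t\}$, apply the hypothesis level-set by level-set, and integrate; either route is elementary and one-paragraph long.
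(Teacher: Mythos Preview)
Your argument is correct and is the standard proof of the post-processing inequality. Note, however, that the paper does not actually prove Fact~\ref{fact:postproc}: it is stated without proof as a ``standard fact'' with citations to \cite{DMNS06, DR13}. So there is no paper proof to compare against; your write-up simply supplies the omitted (and well-known) argument. The coin-fixing reduction followed by the preimage pull-back is exactly the proof one finds in the Dwork--Roth monograph, and your remark about the additive $\delta$ surviving the average is the only point that needs care. The layer-cake alternative is also fine, though arguably less transparent than the first route.
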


\paragraph{Adaptive Composition.} The composition results in our paper actually hold for a more general model of composition than the one described in the introduction. The model is called $k$-fold adaptive composition and was formalized in \cite{DRV10}. We generalize their formulation to the heterogeneous setting where privacy parameters may differ across different algorithms in the composition.

The idea is that instead of running $k$ differentially private algorithms chosen all at once on a single database, we can imagine an adversary adaptively engaging in a ``composition game.'' The game takes as input a bit $b\in\{0,1\}$ and privacy parameters $(\epsilon_{1},\delta_{1}),\ldots,(\epsilon_{k},\delta_{k})$. A randomized adversary $A$, tries to learn $b$ through $k$ rounds of interaction as follows: on the $i$th round of the game, $A$ chooses an $(\epsilon_{i},\delta_{i})$-differentially private algorithm $M_{i}$ and two neighboring databases $D_{(i,0)},D_{(i,1)}$. $A$ then receives an output $y_{i}=M_{i}(D_{(i,b)})$ where the internal randomness of $M_{i}$ is independent of the internal randomness of $M_{1},\ldots,M_{i-1}$. The choices of $M_{i}, D_{(i,0)},$ and $D_{(i,1)}$ may depend on $y_{0},\ldots,y_{i-1}$ as well as the adversary's own randomness.

The outcome of this game is called the \emph{view of the adversary}, $V^{b}$ which is defined to be $(y_{1},\ldots,y_{k})$ along with $A$'s coin tosses. The algorithms $M_{i}$ and databases $D_{(i,0)},D_{(i,1)}$ from each round can be reconstructed from $V^{b}$. Now we can formally define privacy guarantees under $k$-fold adaptive composition.

\begin{definition}
We say that the sequences of privacy parameters $\epsilon_{1},\ldots,\epsilon_{k}\geq 0, \delta_{1},\ldots,\delta_{k}\in [0,1)$ satisfy $(\epsilon_{g},\delta_{g})$-differential privacy \emph{under adaptive composition} if for every adversary $A$ we have $D_{\infty}^{\delta_{g}}(V^{0}\|V^{1})\leq\epsilon_{g}$, where $V^{b}$ represents the view of $A$ in composition game $b$ with privacy parameter inputs $(\epsilon_{1},\delta_{1}),\ldots,(\epsilon_{k},\delta_{k})$.
\end{definition}

\paragraph{Computing real-valued functions.} Many of the computations we discuss involve irrational numbers and we need to be explicit about how we model such computations on finite, discrete machines. Namely when we talk about computing a function $f:\{0,1\}^{*}\to\mathbb{R}$, what we really mean is computing $f$ to any desired number $q$ bits of precision. More precisely, given $x,q$, the task is to compute a number $y\in\mathbb{Q}$ such that $\left|f(x)-y\right|\leq \frac{1}{2^{q}}$. We measure the complexity of algorithms for this task as a function of $|x|+q$. In order to reason about the complexity of OptComp, we will also require that the inputs be rational. So when we talk about computing OptComp exactly, we actually mean given $\epsilon_{1},\ldots,\epsilon_{k}\geq 0, \delta_{1},\ldots,\delta_{k}, \delta_{g}\in [0,1)$ all rational and an integer $q$, compute $\epsilon^{*}$ such that:
\[
\left|\epsilon_{g}-\epsilon^{*}\right|\leq\frac{1}{2^{q}}
\]
where $\epsilon_{g}$ is the true optimal parameter with full precision.
\end{section}

\begin{section}{Characterization of OptComp}

Following \cite{KOV15}, we show that to analyze the composition of arbitrary $(\epsilon_{i},\delta_{i})$-DP algorithms, it suffices to analyze the composition of the following simple variant of randomized response \cite{War65}. 

\begin{definition}[\cite{KOV15}]
Define a randomized algorithm $\tilde{M}_{(\epsilon,\delta)}\colon\{0,1\} \to \{0,1,2,3\}$ as follows, setting $\alpha=1-\delta$:
\begin{center}
\begin{tabular}{ll} 
$\Pr[\tilde{M}_{(\epsilon,\delta)}(0)=0]=\delta$ & $\Pr[\tilde{M}_{(\epsilon,\delta)}(1)=0]=0$  \\
$\Pr[\tilde{M}_{(\epsilon,\delta)}(0)=1]=\alpha\cdot\frac{e^{\epsilon}}{1 + e^{\epsilon}}$ & $\Pr[\tilde{M}_{(\epsilon,\delta)}(1)=1]=\alpha\cdot\frac{1}{1 + e^{\epsilon}}$\\
$\Pr[\tilde{M}_{(\epsilon,\delta)}(0)=2]=\alpha\cdot\frac{1}{1 + e^{\epsilon}}$ & $\Pr[\tilde{M}_{(\epsilon,\delta)}(1)=2]=\alpha\cdot\frac{e^{\epsilon}}{1 + e^{\epsilon}}$\\
$\Pr[\tilde{M}_{(\epsilon,\delta)}(0)=3]=0$ & $\Pr[\tilde{M}_{(\epsilon,\delta)}(1)=3]=\delta$
\end{tabular}
\end{center}
\end{definition}
Note that $\tilde{M}_{(\epsilon,\delta)}$ is in fact $(\epsilon,\delta)$-DP. Kairouz, Oh, and Viswanath showed that $\tilde{M}_{(\epsilon,\delta)}$ can be used to simulate the output of every $(\epsilon,\delta)$-DP algorithm on adjacent databases. 

\begin{lemma}[\cite{KOV15}]
\label{lem:sim}
For every $(\epsilon,\delta)$-DP algorithm $M$ and neighboring databases $D_{0},D_{1}$, there exists a randomized algorithm $T$ such that $T(\tilde{M}_{(\epsilon,\delta)}(b))$ is identically distributed to $M(D_{b})$ for $b=0$ and $b=1$.
\end{lemma}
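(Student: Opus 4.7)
The plan is to explicitly construct the randomized post-processing map $T:\{0,1,2,3\}\to R$ (where $R=\mathrm{Range}(M)$) as four distributions $t_j(r) = \Pr[T(j)=r]$ satisfying
\[
\sum_{j=0}^{3} \Pr[\tilde{M}_{(\epsilon,\delta)}(b)=j]\cdot t_j(r) = P_b(r) \quad \text{for } b\in\{0,1\},\ r\in R,
\]
where $P_b(r) := \Pr[M(D_b)=r]$. With $\alpha = 1-\delta$, this is a pair of linear constraints in four distributions, which leaves some freedom in the choice.

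The construction is motivated by the structure of $\tilde{M}_{(\epsilon,\delta)}$: outputs $0$ and $3$ are deterministic ``tells'' of the input bit (each occurring with probability $\delta$ on its own side and $0$ on the other), so they should absorb the mass on which the pointwise $(\epsilon,0)$-DP inequality is strictly violated. I would set
\[
\delta\cdot t_0(r) = \max\bigl(P_0(r) - e^{\epsilon} P_1(r),\,0\bigr), \qquad \delta\cdot t_3(r) = \max\bigl(P_1(r) - e^{\epsilon} P_0(r),\,0\bigr),
\]
and then solve the remaining $2\times 2$ linear system for $t_1(r), t_2(r)$ pointwise. A case analysis based on which of the three regions $\{P_0(r) > e^{\epsilon} P_1(r)\}$, $\{P_1(r) > e^{\epsilon} P_0(r)\}$, or the ``balanced'' middle the point $r$ lies in shows that $t_1(r), t_2(r)\geq 0$ everywhere; e.g., in the balanced region one gets $t_1(r) = (e^{\epsilon} P_0(r) - P_1(r))/(\alpha(e^{\epsilon}-1))$ and $t_2(r) = (e^{\epsilon} P_1(r) - P_0(r))/(\alpha(e^{\epsilon}-1))$, both non-negative by the pointwise $(\epsilon,0)$-DP that holds there.

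The main obstacle will be ensuring each $t_j$ is actually a probability distribution, i.e., sums to $1$ over $R$. The $(\epsilon,\delta)$-DP hypothesis gives only $\Delta_0 := \sum_r \max(P_0(r) - e^{\epsilon} P_1(r), 0)\leq \delta$ (and similarly $\Delta_1\leq\delta$), so the naive $t_0, t_3$ may have total mass strictly less than $1$. My plan for absorbing the slack is to add a ``filler'' distribution $\sigma$ (proportional, say, to $\min(P_0,P_1)$ on the balanced region) to both $\delta t_0$ and $\delta t_3$, weighted by $\delta-\Delta_0$ and $\delta-\Delta_1$ respectively, and to subtract the matching quantities from $\alpha p\, t_1 + \alpha q\, t_2$ and $\alpha q\, t_1 + \alpha p\, t_2$ (with $p = e^{\epsilon}/(1+e^{\epsilon})$, $q=1-p$) so that both defining constraints are preserved. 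The remaining calculation is to check that the adjusted $t_1, t_2$ stay non-negative and that all four distributions sum to $1$; this works because $\sigma$ is supported where $P_0, P_1$ are within a factor $e^{\epsilon}$ of each other, so the available ``headroom'' in $t_1, t_2$ suffices to absorb the subtraction. The special case $\delta = 0$ is easier: $t_0, t_3$ can be chosen arbitrarily and $t_1, t_2$ are uniquely determined. Once the four valid distributions are in hand, the lemma is immediate, since $T(\tilde{M}_{(\epsilon,\delta)}(b))$ then has distribution $P_b = M(D_b)$ by construction.
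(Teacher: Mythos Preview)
Your overall strategy—solve the $2\times 4$ linear system pointwise, using outputs $0$ and $3$ to absorb the mass where the pointwise $(\epsilon,0)$-bound fails—is the right one, and your case analysis for the base $t_1,t_2$ is correct. The gap is in the filler step. Concentrating the slack-absorbing distribution $\sigma$ on the balanced region $\{r: e^{-\epsilon}\le P_0(r)/P_1(r)\le e^\epsilon\}$ can fail: that region may be empty, or may carry too little mass to absorb the slack. For a concrete failure, take $R=\{a,b,c\}$, $P_0=(0.9,0.05,0.05)$, $P_1=(0.05,0.05,0.9)$, $\epsilon=\ln 2$, $\delta=0.9$. Then $\Delta_0=\Delta_1=0.8<\delta$, the balanced region is $\{b\}$, and your $\sigma$ puts all mass on $b$; but then $\delta\, t_0(b)=\delta-\Delta_0=0.1>0.05=P_0(b)$, forcing $\alpha p\,t_1(b)+\alpha q\,t_2(b)<0$, which is impossible for nonnegative $t_1,t_2$. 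So the claimed ``headroom'' need not exist where you place $\sigma$.

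The paper sidesteps this by a two-stage construction: it first simulates $M$ from $\tilde M_{(\epsilon,\delta_0,\delta_1)}$, where $\delta_0=\Delta_0$ and $\delta_1=\Delta_1$ are the \emph{exact} slacks of $M$ on $D_0,D_1$ (so there is no leftover mass and your base $t_j$'s already sum to $1$), and then separately simulates $\tilde M_{(\epsilon,\delta_0,\delta_1)}$ from $\tilde M_{(\epsilon,\delta)}$ by an explicit $4$-to-$4$ Markov kernel. The second step is where the slack $\delta-\delta_b$ is absorbed, and because it lives entirely on the four-point space its nonnegativity can be checked by hand. If you unroll the composition you will see that the induced filler for $t_0$ is not supported on the balanced region alone but also places mass on $S_0$ (proportional to $P_1$ there) and on $S_1$; this is exactly what your single-$\sigma$ ansatz misses. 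You can repair your direct approach by allowing different fillers $\sigma_0,\sigma_1$ for $t_0,t_3$ and choosing them as the paper's composed kernel does, but at that point the two-step decomposition is both cleaner and easier to verify.
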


For the sake of completeness, we provide a self-contained proof of this lemma, which does not use the hypothesis testing and geometric arguments in \cite{KOV15}. Specifically, we give an explicit construction of the simulator, $T$ in two steps. First we introduce a slight generalization of $\tilde{M}_{(\epsilon,\delta)}$ called $\tilde{M}_{(\epsilon,\delta_{0},\delta_{1})}$ and an algorithm $T'$ that can use $\tilde{M}_{(\epsilon,\delta_{0},\delta_{1})}$ to simulate every differentially private algorithm on adjacent databases for some $\delta_{0},\delta_{1}\leq \delta$.  Then we show how to simulate $\tilde{M}_{(\epsilon,\delta_{0},\delta_{1})}$ using $\tilde{M}_{(\epsilon,\delta)}$ with an algorithm called $T''$.  The construction will look like:
\[
\tilde{M}_{(\epsilon,\delta)}(b) \xrightarrow{T''} \tilde{M}_{(\epsilon,\delta_{0},\delta_{1})}(b) \xrightarrow{T'} M(D_{b})
\]
Then the $T$ needed for Lemma \ref{lem:sim} will be $T=T'\circ T''$. Before introducing $\tilde{M}_{(\epsilon,\delta_{0},\delta_{1})}$ and $T'$ we define some additional notation. 

Given an $(\epsilon,\delta)$-DP algorithm $M$ with output space $R$ and neighboring databases $D_{0},D_{1}$, let $P_{0}, P_{1}$ be the probability mass functions of $M(D_{0})$ and $M(D_{1})$, respectively. The definition of differential privacy tells us that for all sets $S\subseteq R$:
\begin{gather*}
P_{0}(S)-e^{\epsilon}P_{1}(S) \leq \delta\\
P_{1}(S)-e^{\epsilon}P_{0}(S) \leq \delta
\end{gather*}
The left-hand side of the first inequality is maximized by $S=S_{0}$ for 
\begin{equation}
\label{eq:2}
S_{0}=\{r\in R\colon P_{0}(r)>e^{\epsilon}P_{1}(r)\}
\end{equation}
and the left-hand side of the second inequality is maximized by 
\begin{equation}
\label{eq:3}
S_{1}=\{r\in R\colon P_{1}(r)>e^{\epsilon}P_{0}(r)\}
\end{equation}
Define $\delta_{0},\delta_{1}$ as
\begin{gather}
\delta_{0}=P_{0}(S_{0})-e^{\epsilon}P_{1}(S_{0}) \leq \delta \label{eq:4}\\
\delta_{1}=P_{1}(S_{1})-e^{\epsilon}P_{0}(S_{1}) \leq \delta \label{eq:5}
\end{gather}
We will show how to simulate $M$ using the following algorithm.
\begin{definition}
Define $\tilde{M}_{(\epsilon,\delta_{0},\delta_{1})}\colon\{0,1\} \to \{0,1,2,3\}$ as follows, with $\delta_{0},\delta_{1}$ as defined in Equations \ref{eq:4} and \ref{eq:5} for some $(\epsilon,\delta)$-DP algorithm and setting $\alpha_{0}=1-\delta_{0}, \alpha_{1}=1-\delta_{1}$:
\begin{center}
\begin{tabular}{ll} 
$\Pr[\tilde{M}_{(\epsilon,\delta_{0},\delta_{1})}(0)=0]=\delta_{0}$ & $\Pr[\tilde{M}_{(\epsilon,\delta_{0},\delta_{1})}(1)=0]=0$  \\
$\Pr[\tilde{M}_{(\epsilon,\delta_{0},\delta_{1})}(0)=1]=\frac{e^{2\epsilon}\alpha_{0}-e^{\epsilon}\alpha_{1}}{e^{2\epsilon}-1}$ & $\Pr[\tilde{M}_{(\epsilon,\delta_{0},\delta_{1})}(1)=1]=\frac{e^{\epsilon}\alpha_{0}-\alpha_{1}}{e^{2\epsilon}-1}$\\
$\Pr[\tilde{M}_{(\epsilon,\delta_{0},\delta_{1})}(0)=2]=\frac{e^{\epsilon}\alpha_{1}-\alpha_{0}}{e^{2\epsilon}-1}$ & $\Pr[\tilde{M}_{(\epsilon,\delta_{0},\delta_{1})}(1)=2]=\frac{e^{2\epsilon}\alpha_{1}-e^{\epsilon}\alpha_{0}}{e^{2\epsilon}-1}$\\
$\Pr[\tilde{M}_{(\epsilon,\delta_{0},\delta_{1})}(0)=3]=0$ & $\Pr[\tilde{M}_{(\epsilon,\delta_{0},\delta_{1})}(1)=3]=\delta_{1}$
\end{tabular}
\end{center}
\end{definition}
Notice that if $\delta_{0}=\delta_{1}=\delta$ then $\tilde{M}_{(\epsilon,\delta_{0},\delta_{1})}=\tilde{M}_{(\epsilon,\delta)}$. We need to show that $\tilde{M}_{(\epsilon,\delta_{0},\delta_{1})}$ is composed of a valid probability distribution. Since $\alpha_{b}=1-\delta_{b}$,
\[\sum_{x\in \{0,1,2,3\}}\Pr[\tilde{M}_{(\epsilon,\delta_{0},\delta_{1})}(b)=x]=1~\mathrm{for}~b=0,1
\]
To see that all of the terms are non-negative we need to show that the recurring terms $e^{\epsilon}\alpha_{1}-\alpha_{0}$ and $e^{\epsilon}\alpha_{0}-\alpha_{1}$ are non-negative and the rest follows by inspection. 
\begin{lemma}
\label{lem:ineq}
For every $(\epsilon,\delta)$-DP algorithm, $M$ with output space $R$ and neighboring databases $D_{0}$ and $D_{1}$, $e^{\epsilon}\alpha_{1}-\alpha_{0}$ and $e^{\epsilon}\alpha_{0}-\alpha_{1}$ are non-negative where $\alpha_{0}=1-\delta_{0}, \alpha_{1}=1-\delta_{1}$ and $\delta_{0},\delta_{1}$ are defined in Equations $\ref{eq:4}$ and $\ref{eq:5}$.
\end{lemma}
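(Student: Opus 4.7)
The plan is to unpack $e^\epsilon\alpha_1 - \alpha_0 \geq 0$ (the other inequality follows by symmetry, swapping the roles of $0$ and $1$) by substituting the definitions of $\delta_0,\delta_1$ and rewriting everything in terms of the probability masses $P_0, P_1$ on the three disjoint regions $S_0$, $S_1$, and $T := (S_0 \cup S_1)^c$. The first key observation is that $S_0$ and $S_1$ are disjoint: if $r \in S_0 \cap S_1$ then $P_0(r) > e^\epsilon P_1(r) \geq P_1(r) > e^\epsilon P_0(r) \geq P_0(r)$, a contradiction (using $\epsilon \geq 0$). So $R$ partitions into $S_0$, $S_1$, and $T$.

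Next I would expand. Since $\alpha_b = 1 - \delta_b$ and by definition of $\delta_0,\delta_1$:
\[
\alpha_0 = P_0(S_0^c) + e^\epsilon P_1(S_0), \qquad e^\epsilon \alpha_1 = e^\epsilon P_1(S_1^c) + e^{2\epsilon} P_0(S_1).
\]
Using $S_0^c = S_1 \cup T$ and $S_1^c = S_0 \cup T$ (by disjointness) and cancelling the common $e^\epsilon P_1(S_0)$ term that appears on both sides, showing $e^\epsilon \alpha_1 - \alpha_0 \geq 0$ reduces to
\[
e^\epsilon P_1(T) + (e^{2\epsilon}-1) P_0(S_1) \;\geq\; P_0(T).
\]

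The final step uses the definition of $T$: on $T$, neither $P_0(r) > e^\epsilon P_1(r)$ nor $P_1(r) > e^\epsilon P_0(r)$ holds, so in particular $P_0(r) \leq e^\epsilon P_1(r)$ for every $r \in T$, giving $P_0(T) \leq e^\epsilon P_1(T)$. Since $(e^{2\epsilon}-1) P_0(S_1) \geq 0$, the inequality follows. The symmetric inequality $e^\epsilon \alpha_0 - \alpha_1 \geq 0$ is obtained by the identical argument with $S_0$ and $S_1$ swapped, using the other bound $P_1(r) \leq e^\epsilon P_0(r)$ on $T$.

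I do not anticipate a serious obstacle here; the only mildly subtle point is recognizing that the disjointness of $S_0$ and $S_1$ (together with the pointwise bound $P_0 \leq e^\epsilon P_1$ on the complement $T$) is exactly what is needed after the algebraic simplification. The bookkeeping of which $S_i$ contributes which $\alpha_i$ after substitution is the only place where one can slip up, but once the rewrite in terms of $P_0(T), P_1(T),$ and $P_0(S_1)$ is in hand, the conclusion is immediate from the DP hypothesis restricted to $T$.
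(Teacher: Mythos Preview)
Your proof is correct and follows essentially the same approach as the paper: expand $\alpha_0,\alpha_1$ via the definitions of $\delta_0,\delta_1$, use the disjointness of $S_0$ and $S_1$ to partition $R$, and apply the pointwise bound on the complement $T$. The paper's write-up is terser (a four-line chain of inequalities bounding $\alpha_1$ by $e^\epsilon\alpha_0$, using $P_1(T)\le e^\epsilon P_0(T)$ rather than your $P_0(T)\le e^\epsilon P_1(T)$, both of which hold on $T$), but the substance is the same.
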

\begin{proof}
\begin{align*}
\alpha_{1}&=1-P_{1}(S_{1})+e^{\epsilon}P_{0}(S_{1})\\
&\leq P_{1}(S_{0})+e^{\epsilon}\cdot(1-P_{0}(S_{0}))\\
&\leq e^{2\epsilon}P_{1}(S_{0})+e^{\epsilon}\cdot(1-P_{0}(S_{0}))\\
&=e^{\epsilon}\alpha_{0}
\end{align*}
The other inequality follows by symmetry.
\end{proof}
\noindent Now we show how to use $\tilde{M}_{(\epsilon,\delta_{0},\delta_{1})}$ to simulate any $(\epsilon,\delta)$ differentially private algorithm. 
\begin{lemma}
\label{lem:sim1}
For every $(\epsilon, \delta)$-DP algorithm $M$ with output space $R$, and every pair of neighboring databases, $D_{0}$, $D_{1}$, there exists $\delta_{0},\delta_{1}\leq \delta$ and a randomized algorithm $T'\colon \{0,1,2,3\} \to R$ such that $T'(\tilde{M}_{(\epsilon,\delta_{0},\delta_{1})}(b))$ is identically distributed to $M(D_{b})$ for $b=0$ and $b=1$.
\end{lemma}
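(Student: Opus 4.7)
The plan is to give an explicit construction of $T'$ based on the observation that the four outputs of $\tilde{M}_{(\epsilon,\delta_{0},\delta_{1})}$ have distinctive likelihood profiles: output $0$ appears only when $b=0$, output $3$ only when $b=1$, and outputs $1$ and $2$ carry likelihood ratios $\Pr[\tilde{M}(0)=i]/\Pr[\tilde{M}(1)=i]$ equal to exactly $e^{\epsilon}$ and $e^{-\epsilon}$ respectively. Thus $T'$ should use output $0$ to deliver the ``excess'' mass of $P_0$ on the region $S_0$ where $P_0(r)>e^{\epsilon}P_1(r)$, use output $3$ symmetrically for $P_1$ on $S_1$, and use outputs $1$ and $2$ to deliver the remaining bounded-ratio mass.

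First I would define $\Pr[T'(0)=r] = (P_0(r)-e^{\epsilon}P_1(r))/\delta_0$ for $r\in S_0$ (and $0$ elsewhere), and symmetrically $\Pr[T'(3)=r] = (P_1(r)-e^{\epsilon}P_0(r))/\delta_1$ for $r\in S_1$; these are valid probability distributions by the definitions in Equations $\ref{eq:4}$ and $\ref{eq:5}$. For each fixed $r\in R$, I would then solve for $\Pr[T'(1)=r]$ and $\Pr[T'(2)=r]$ from the $2\times 2$ linear system $\Pr[T'(\tilde{M}_{(\epsilon,\delta_{0},\delta_{1})}(b))=r]=P_b(r)$ for $b=0,1$, after subtracting the already-fixed contributions of outputs $0$ and $3$. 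The coefficient matrix has determinant $q_1^1\, q_2^0\,(e^{2\epsilon}-1)$, where $q_i^b=\Pr[\tilde{M}_{(\epsilon,\delta_{0},\delta_{1})}(b)=i]$, and Lemma \ref{lem:ineq} ensures $q_1^1,q_2^0>0$, so the system has a unique solution with closed-form expressions.

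The main remaining tasks are to verify that $\Pr[T'(1)=r]$ and $\Pr[T'(2)=r]$ are non-negative and sum to $1$ over $r$. Non-negativity is the step I would expect to be the trickiest, but it reduces, by case analysis on whether $r$ lies in $S_0$, $S_1$, or $R\setminus (S_0\cup S_1)$, either to an equality (on $S_0$ or $S_1$, for the appropriate output) or to the defining bound $|\ln(P_0(r)/P_1(r))|\leq \epsilon$ on the middle region. Normalization follows from the observation that after subtracting the output-$0$ and output-$3$ contributions, the residual targets on the $P_0$ and $P_1$ sides sum to $\alpha_0$ and $\alpha_1$, which exactly match the algebraic form of $q_1^1$ and $q_2^0$ in the definition of $\tilde{M}_{(\epsilon,\delta_{0},\delta_{1})}$. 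Once the partition of $R$ and the likelihood-profile interpretation of the four outputs are in place, the construction is essentially forced by the linear constraints and all remaining verifications are routine algebra.
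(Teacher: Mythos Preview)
Your proposal is correct and follows essentially the same approach as the paper: the paper gives an explicit table for $\Pr[T'(x)=r]$ in the three cases $r\in S_0$, $r\in S_1$, $r\in R\setminus S_0\setminus S_1$ and then verifies non-negativity, normalization, and correctness by direct computation. Your plan to obtain $\Pr[T'(1)=r]$ and $\Pr[T'(2)=r]$ by solving the $2\times 2$ linear system yields exactly those same formulas, and your outlined verifications (case analysis for non-negativity, the $\alpha_0,\alpha_1$ identity for normalization) match the paper's checks.
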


\begin{proof}
Fix neighboring databases, $D_{0},D_{1}$ and let $P_{0}, P_{1}$ be the probability mass functions of $M$ on $D_{0},D_{1}$, respectively. We will use $S_{0}, S_{1}, \delta_{0},$ and $\delta_{1}$ as defined above in Equations \ref{eq:2}, \ref{eq:3}, \ref{eq:4}, and \ref{eq:5}. Fix $r\in R$. $T'\colon \{0,1,2,3\}\to R$ is defined in the table below.  
\begin{center}
\[
  \begin{array}{| l | c | c | c |}
    \hline
    x & \Pr[T'(x)=r], r\in S_{0} & \Pr[T'(x)=r], r\in S_{1}& \Pr[T'(x)=r], r\in R\setminus S_{0}\setminus S_{1}\\ \hline
    0 &\frac{1}{\delta_{0}}(P_{0}(r)-e^{\epsilon}P_{1}(r))&0&0 \\ \hline
    1 &\frac{(e^{2\epsilon}-1)P_{1}(r)}{e^{\epsilon}\alpha_{0}-\alpha_{1}}&0&\frac{e^{\epsilon}P_{0}(r)-P_{1}(r)}{e^{\epsilon}\alpha_{0}-\alpha_{1}} \\ \hline
    2 & 0&\frac{(e^{2\epsilon}-1)P_{0}(r)}{e^{\epsilon}\alpha_{1}-\alpha_{0}}&\frac{e^{\epsilon}P_{1}(r)-P_{0}(r)}{e^{\epsilon}\alpha_{1}-\alpha_{0}} \\ \hline
    3 &0&\frac{1}{\delta_{1}}(P_{1}(r)-e^{\epsilon}P_{0}(r))&0 \\ 
    \hline
  \end{array}
\]
\end{center}
We need to show that $T'(x)$ is a valid probability distribution for each $x$. All of the terms are non-negative because $e^{\epsilon}\alpha_{1}-\alpha_{0}$ and $e^{\epsilon}\alpha_{0}-\alpha_{1}$ are non-negative by Lemma \ref{lem:ineq}. 

The sums of $\Pr[T'(0)=r]$ and $\Pr[T'(3)=r]$ are immediate from the definitions of $\delta_{0}$ and $\delta_{1}$, respectively: 
\[
\sum\limits_{r\in R}\Pr[T'(0)=r]=\frac{1}{\delta_{0}}\sum\limits_{r\in S_{0}}(P_{0}(r)-e^{\epsilon}P_{1}(r))+0+0=1 
\]
A symmetrical argument works for $\Pr[T'(3)=r]$. We now analyze the sum for $\Pr[T'(1)=r]$. The sum for $\Pr[T'(2)=r]$ follows by symmetry. We use the following identities:
\begin{gather*}
\alpha_{0}=1-\sum\limits_{r\in S_{0}}(P_{0}(r)-e^{\epsilon}P_{1}(r))=\sum\limits_{r\in S_{0}}e^{\epsilon}P_{1}(r) +\sum\limits_{r\in S_{1}}P_{0}(r) + \sum\limits_{r\in R\setminus S_{0}\setminus S_{1}}P_{0}(r)\\
\alpha_{1}=1-\sum\limits_{r\in S_{1}}(P_{1}(r)-e^{\epsilon}P_{0}(r))=\sum\limits_{r\in S_{0}}P_{1}(r)+\sum\limits_{r\in S_{1}}e^{\epsilon}P_{0}(r) +\sum\limits_{r\in R\setminus S_{0}\setminus S_{1}}P_{1}(r)
\end{gather*}
Thus:
\begin{equation*}
e^{\epsilon}\alpha_{0}-\alpha_{1}=\sum\limits_{r\in S_{0}}(e^{2\epsilon}-1)P_{1}(r)+\sum\limits_{r\in R\setminus S_{0}\setminus S_{1}}(e^{\epsilon}P_{0}(r)-P_{1}(r))
\end{equation*}
This implies $\sum\limits_{r\in R}\Pr[T'(1)=r]=1$. Now we just need to show that $T'(\tilde{M}_{(\epsilon,\delta_{0},\delta_{1})}(b))$ is identically distributed to $M(D_{b})$. We will show this for $b=0$ and the $b=1$ case follows by symmetry. Fix $r\in R$. By the definition of $\tilde{M}_{(\epsilon,\delta_{0},\delta_{1})}$:
\[
\Pr[T'(\tilde{M}_{(\epsilon,\delta_{0},\delta_{1})}(0))=r]=\delta_{0}\cdot \Pr[T'(0)=r]+\left(\frac{e^{2\epsilon}\alpha_{0}-e^{\epsilon}\alpha_{1}}{e^{2\epsilon}-1}\right)\cdot \Pr[T'(1)=r]+\left(\frac{e^{\epsilon}\alpha_{1}-\alpha_{0}}{e^{2\epsilon}-1}\right)\cdot \Pr[T'(2)=r] 
\]
From here we break the calculation into the three possible cases:

\vspace*{\baselineskip}
$\textbf{Case 1:}$ $r\in S_{0}$
\begin{align*}
\Pr[T'(\tilde{M}_{(\epsilon,\delta_{0},\delta_{1})}(0))=r]&=\delta_{0}\cdot\left(\frac{1}{\delta_{0}}(P_{0}(r)-e^{\epsilon}P_{1}(r))\right)+\frac{e^{2\epsilon}\alpha_{0}-e^{\epsilon}\alpha_{1}}{e^{2\epsilon}-1}\cdot\frac{(e^{2\epsilon}-1)P_{1}(r)}{e^{\epsilon}\alpha_{0}-\alpha_{1}}\\
&=P_{0}(r)-e^{\epsilon}P_{1}(r)+e^{\epsilon}P_{1}(r)=P_{0}(r)
\end{align*}

\vspace*{\baselineskip}
$\textbf{Case 2:}$ $r\in S_{1}$
\begin{align*}
\Pr[T'(\tilde{M}_{(\epsilon,\delta_{0},\delta_{1})}(0))=r]=\frac{e^{\epsilon}\alpha_{1}-\alpha_{0}}{e^{2\epsilon}-1}\cdot\frac{(e^{2\epsilon}-1)P_{0}(r)}{e^{\epsilon}\alpha_{1}-\alpha_{0}}=P_{0}(r)
\end{align*}

\vspace*{\baselineskip}
$\textbf{Case 3:}$ $r\in R\setminus S_{0}\setminus S_{1}$
\begin{align*}
\Pr[T'(\tilde{M}_{(\epsilon,\delta_{0},\delta_{1})}(0))=r]&=\frac{e^{2\epsilon}\alpha_{0}-e^{\epsilon}\alpha_{1}}{e^{2\epsilon}-1}\cdot\frac{e^{\epsilon}P_{0}(r)-P_{1}(r)}{e^{\epsilon}\alpha_{0}-\alpha_{1}} +\frac{e^{\epsilon}\alpha_{1}-\alpha_{0}}{e^{2\epsilon}-1}\cdot\frac{e^{\epsilon}P_{1}(r)-P_{0}(r)}{e^{\epsilon}\alpha_{1}-\alpha_{0}}\\
&=\frac{e^{2\epsilon}P_{0}(r)-e^{\epsilon}P_{1}(r)+e^{\epsilon}P_{1}(r)-P_{0}(r)}{e^{2\epsilon}-1}=P_{0}(r)
\end{align*}
\end{proof}

We have shown how a generalization of $\tilde{M}_{(\epsilon,\delta)}$ called $\tilde{M}_{(\epsilon,\delta_{0},\delta_{1})}$ can be used to simulate the output of every differentially private algorithm. In the next lemma we show how to simulate $\tilde{M}_{(\epsilon,\delta_{0},\delta_{1})}$ using $\tilde{M}_{(\epsilon,\delta)}$, which implies that $\tilde{M}_{(\epsilon,\delta)}$ can be used to simulate the output of every differentially private algorithm by composing the simulator introduced in Lemma \ref{lem:sim1} with the one introduced below. 
\begin{lemma}
\label{lem:sim2}
For every $\epsilon\geq 0$ and $\delta_{0},\delta_{1},\delta\in [0,1)$ such that $e^{\epsilon}\cdot(1-\delta_{0})\geq 1-\delta_{1}$ and $e^{\epsilon}\cdot(1-\delta_{1})\geq 1-\delta_{0} $ and $\delta_{0},\delta_{1}\leq \delta$, there exists a randomized algorithm $T''$ such that $T''(\tilde{M}_{(\epsilon,\delta)}(b))$ is identically distributed to $\tilde{M}_{(\epsilon,\delta_{0},\delta_{1})}(b)$ for both $b=0,1$.
\end{lemma}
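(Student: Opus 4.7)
The plan is to construct $T''$ explicitly as a Markov transition kernel $Q = (q_{x,y})_{x,y \in \{0,1,2,3\}}$ with $q_{x,y} = \Pr[T''(x) = y]$, and then verify the matching condition $\sum_{x} \Pr[\tilde{M}_{(\epsilon,\delta)}(b) = x] \cdot q_{x,y} = \Pr[\tilde{M}_{(\epsilon,\delta_{0},\delta_{1})}(b) = y]$ for every $b \in \{0,1\}$ and $y \in \{0,1,2,3\}$. The first step is to pin down the forced entries: since $\Pr[\tilde{M}_{(\epsilon,\delta_{0},\delta_{1})}(1) = 0] = 0$ while $\Pr[\tilde{M}_{(\epsilon,\delta)}(1) = x] > 0$ for every $x \in \{1,2,3\}$, we must have $q_{x,0} = 0$ for $x \in \{1,2,3\}$, and the $(b,y) = (0,0)$ equation then forces $q_{0,0} = \delta_{0}/\delta$ (which lies in $[0,1]$ because $\delta_{0} \leq \delta$). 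By the symmetric argument involving output $3$, one obtains $q_{x,3} = 0$ for $x \in \{0,1,2\}$ and $q_{3,3} = \delta_{1}/\delta$.

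Next, I would parameterize the remaining entries by writing $q_{0,1} = \beta_{1}/\delta$, $q_{0,2} = (\delta - \delta_{0} - \beta_{1})/\delta$, $q_{3,1} = \gamma_{1}/\delta$, $q_{3,2} = (\delta - \delta_{1} - \gamma_{1})/\delta$, $q_{1,1} = \mu_{1}$, $q_{2,1} = \mu_{2}$, with $q_{1,2} = 1 - \mu_{1}$ and $q_{2,2} = 1 - \mu_{2}$ filled in by row sums. Setting $a = (1-\delta) e^{\epsilon}/(1 + e^{\epsilon})$ and $b' = (1-\delta)/(1 + e^{\epsilon})$, the four remaining matching equations for $(b,y) \in \{0,1\} \times \{1,2\}$ collapse, via the row-sum identities, to just two independent linear constraints:
\[
\beta_{1} + a\mu_{1} + b'\mu_{2} = A_{0}, \qquad \gamma_{1} + b'\mu_{1} + a\mu_{2} = A_{1},
\]
where $A_{0} = \Pr[\tilde{M}_{(\epsilon,\delta_{0},\delta_{1})}(0) = 1]$ and $A_{1} = \Pr[\tilde{M}_{(\epsilon,\delta_{0},\delta_{1})}(1) = 1]$. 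What remains is to exhibit a non-negative solution with $\mu_{1}, \mu_{2} \in [0,1]$, $\beta_{1} \in [0,\delta-\delta_{0}]$, and $\gamma_{1} \in [0,\delta-\delta_{1}]$.

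The main obstacle is this feasibility check. Setting $u = a\mu_{1} + b'\mu_{2}$ and $v = b'\mu_{1} + a\mu_{2}$, so that $\beta_{1} = A_{0} - u$ and $\gamma_{1} = A_{1} - v$, feasibility becomes the question of whether the target rectangle $R = [\max\{0, A_{0} - (\delta-\delta_{0})\}, A_{0}] \times [\max\{0, A_{1} - (\delta-\delta_{1})\}, A_{1}]$ intersects the parallelogram $P = \{(u,v) : u,v \geq 0,\ u + v \leq 2(1-\delta),\ |u-v| \leq (1-\delta)(e^{\epsilon}-1)/(e^{\epsilon}+1)\}$ of achievable $(u,v)$. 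A short LP inspection reduces $R \cap P \neq \emptyset$ to three linear inequalities on the corners of $R$, each of which can be verified after algebraic manipulation from the hypotheses $e^{\epsilon}(1-\delta_{0}) \geq 1-\delta_{1}$, $e^{\epsilon}(1-\delta_{1}) \geq 1-\delta_{0}$, and $\delta_{0},\delta_{1} \leq \delta$; for instance, the condition $L_{u} + L_{v} \leq 2(1-\delta)$ at the generic lower-left corner reduces precisely to $1-\delta_{0} \leq e^{\epsilon}(1-\delta_{1})$, and the symmetric inequality flips the roles of $\delta_{0}$ and $\delta_{1}$. Once feasibility is verified, an explicit $(u,v) \in R \cap P$ determines $(\mu_{1},\mu_{2},\beta_{1},\gamma_{1})$, fully specifying $Q$, and a direct calculation confirms $T''(\tilde{M}_{(\epsilon,\delta)}(b))$ has the same distribution as $\tilde{M}_{(\epsilon,\delta_{0},\delta_{1})}(b)$ for both $b \in \{0,1\}$.
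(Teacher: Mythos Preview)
Your overall plan---write $T''$ as a Markov kernel, pin down the forced entries $q_{x,0}$ and $q_{x,3}$, parameterise the rest by $(\mu_1,\mu_2,\beta_1,\gamma_1)$, and reduce to a two--equation feasibility problem---is exactly the right skeleton, and the paper's proof follows the same outline. Where the paper differs is that, instead of arguing abstractly that the feasible region is nonempty, it simply exhibits one feasible point (after assuming WLOG $\delta_0\ge\delta_1$, it takes $\mu_2=0$ and a specific $\mu_1$) and checks all constraints by hand.

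Your feasibility argument, however, has a real gap. The set you call $P$ is \emph{not} the image of $[0,1]^2$ under $(\mu_1,\mu_2)\mapsto(a\mu_1+b'\mu_2,\,b'\mu_1+a\mu_2)$; it is a strict superset. From $u+v=(a+b')(\mu_1+\mu_2)$ and $u-v=(a-b')(\mu_1-\mu_2)$ you correctly get $u+v\in[0,2(1-\delta)]$ and $|u-v|\le(1-\delta)(e^\epsilon-1)/(e^\epsilon+1)$, but these two ranges are \emph{not independent}: for instance $\mu_1+\mu_2=2$ forces $\mu_1=\mu_2=1$ and hence $u-v=0$. The true image is the parallelogram with vertices $(0,0),(a,b'),(a+b',a+b'),(b',a)$, whereas your $P$ is a pentagon that properly contains it (e.g.\ the point with $u+v=2(1-\delta)$ and $u-v=(1-\delta)(e^\epsilon-1)/(e^\epsilon+1)$ lies in your $P$ but corresponds to $\mu_1=3/2,\ \mu_2=1/2$). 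Consequently, establishing $R\cap P\neq\emptyset$ via your three corner inequalities does not show that a valid $(\mu_1,\mu_2)\in[0,1]^2$ exists.

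The fix is either to replace $P$ by the correct parallelogram (equivalently, impose the four constraints $0\le\mu_1\le 1$, $0\le\mu_2\le 1$ directly on $(u,v)$ and redo the separating--hyperplane check), or---as the paper does---drop the abstract feasibility argument, assume WLOG $\delta_0\ge\delta_1$, set $\mu_2=0$ and $\mu_1=(e^\epsilon\alpha_0-\alpha_1)/(\alpha_0(e^\epsilon-1))$, and verify by direct computation that $\mu_1\in[0,1]$, $\beta_1\in[0,\delta-\delta_0]$, $\gamma_1\in[0,\delta-\delta_1]$ all follow from the hypotheses $e^\epsilon\alpha_0\ge\alpha_1$, $e^\epsilon\alpha_1\ge\alpha_0$, and $\alpha\le\alpha_0\le\alpha_1$.
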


\begin{proof}
Assume without loss of generality that $\delta_{0}\geq\delta_{1}$ and set $\alpha=1-\delta, \alpha_{0}=1-\delta_{0},$ and $\alpha_{1}=1-\delta_{1}$. We will represent $T''(\tilde{M}_{(\epsilon,\delta)}(b))$ as a Markov Chain below. Here, the probability of transitioning from one state to another is proportional to the weight of an edge. That is, the true probability along an edge leaving some node $a$ is the weight divided by the sum of the weights of all of the edges leaving $a$ (this is just to avoid cluttering the diagram with the normalizing denominators). 
\begin{center}
\begin{tikzpicture}[->, >=stealth', auto, semithick, node distance=3cm]
\tikzstyle{every state}=[fill=white,draw=black,thick,text=black,scale=1]
\node[state]    (0)                     {$0$};
\node[state]    (1) at ($ (0) + (270:3) $)  {$1$};
\node[draw=black, circle, minimum size=.5cm]    (emp1) at ($ (0) + (0:3) $)  {};
\node[draw=black, circle, minimum size=.5cm]     (emp2) at ($ (1) + (0:3) $)   {};
\node[state]    (M1) at ($ (emp1) + (0:3) $)  {$1$};
\node[state]    (M2) at ($ (emp2) + (0:3) $)   {$2$};
\node[state]    (M0) at ($ (M1) + (90:3) $)  {$0$};
\node[state]    (M3) at ($ (M2) + (270:3) $)   {$3$};
\node[state]    (T0) at ($ (M0) + (0:6) $)   {$0$};
\node[state]    (T1) at ($ (M1) + (0:6) $)   {$1$};
\node[state]    (T2) at ($ (M2) + (0:6) $)   {$2$};
\node[state]    (T3) at ($ (M3) + (0:6) $)   {$3$};
\node[draw=black, circle, minimum size=.5cm]    (emp3) at ($ (M1) + (26.565:3.354) $)   {};
\node[draw=black, circle, minimum size=.5cm]    (emp4) at ($ (M3) + (26.565:3.354) $)   {};
\node[fill=white,draw=white,text=black,scale=1.5,circle, minimum size=1cm] (blab) at ($ (0) + (90:5) $) {$b$};
\node[fill=white,draw=white,text=black,scale=1.5,circle, minimum size=1cm] (Mlab) at ($ (M0) + (90:2) $) {$\tilde{M}_{(\epsilon,\delta)}(b)$};
\node[fill=white,draw=white,text=black,scale=1.5,rectangle, minimum size=1cm] (Tlab) at ($ (T0) + (90:2) $) {$T''(\tilde{M}_{(\epsilon,\delta)}(b))$};

\path[every node/.style={sloped,anchor=south,auto=false}]
(0)    edge[]  node{$\alpha$}         (emp1)
(0)    edge[]  node{$\delta$}         (M0)
(1)    edge[]  node{$\alpha$}         (emp2)
(1)    edge[]  node{$\delta$}         (M3)
(emp1) edge[]  node{$e^{\epsilon}$}   (M1)
(emp1) edge[]  node[pos=.4]{$1$}              (M2)
(emp2) edge[]  node[pos=.4]{$1$}              (M1)
(emp2) edge[]  node{$e^{\epsilon}$}   (M2)
(M0)   edge[]  node{$\delta_{0}$}         (T0)
(M0)   edge[]  node[pos=.57]{$\delta-\delta_{0}=\alpha_{0}-\alpha$}         (emp3)
(M1)   edge[bend left=10]  node[pos=.4]{$e^{\epsilon}\alpha_{0}-\alpha_{1}$}   (T1)
(M1)   edge[]  node{$\alpha_{1}-\alpha_{0}$}              (T2)
(M2)   edge[]  node{$1$}              (T2)
(M3)   edge[]  node{$\delta-\delta_{1}=\alpha_{1}-\alpha$}              (emp4)
(M3)   edge[]  node{$\delta_{1}$}   (T3)
(emp3) edge[]  node{$e^{\epsilon}\alpha_{0}-\alpha_{1}$}              (T1)
(emp3) edge[]  node{$\alpha_{1}-e^{-\epsilon}\alpha_{0}$}              (T2)
(emp4) edge[]  node[pos=.43]{$p$}              (T1)
(emp4) edge[]  node{$1-p$}   (T2);
\end{tikzpicture}
\end{center}

Where 
\[
p=\left(\frac{\alpha_{0}-\alpha}{\alpha_{1}-\alpha}\right)\cdot\left(\frac{e^{\epsilon}\alpha_{0}-\alpha_{1}}{\alpha_{0}(e^{2\epsilon}-1)}\right)
\]

All of the weights are non-negative because $\alpha_{1}\geq\alpha_{0}\geq\alpha$, $e^{\epsilon}\alpha_{1}\geq\alpha_{0}$, and $p$ is also at most $1$, which we verify now:
\begin{align*}
(\alpha_{0}-\alpha)\cdot(e^{\epsilon}\alpha_{0}-\alpha_{1})&\leq(\alpha_{1}-\alpha)\cdot(e^{2\epsilon}\alpha_{0}-\alpha_{1})\\
&\leq(\alpha_{1}-\alpha)\cdot(e^{2\epsilon}\alpha_{0}-\alpha_{0})\\
&=(\alpha_{1}-\alpha)\cdot\alpha_{0}\cdot(e^{2\epsilon}-1)
\end{align*}

We need to show that $T''(\tilde{M}_{(\epsilon,\delta)}(b))$ is identically distributed to $\tilde{M}_{(\epsilon,\delta_{0},\delta_{1})}(b)$ for $b=0$ and $b=1$, which will complete the proof. Notice that $\Pr[T''(\tilde{M}_{(\epsilon,\delta)}(0))=3]=0=\Pr[\tilde{M}_{(\epsilon,\delta_{0},\delta_{1})}(0)=3]$ because there is no path from the $b=0$ node to the $T''=3$ node. Similarly, $\Pr[T''(\tilde{M}_{(\epsilon,\delta)}(1))=0]=0=\Pr[\tilde{M}_{(\epsilon,\delta_{0},\delta_{1})}(1)=0]$  We also have:
\begin{align*}
\Pr[T''(\tilde{M}_{(\epsilon,\delta)}(0))=0]&=\left(\frac{\delta}{\delta+\alpha}\right)\cdot\left(\frac{\delta_{0}}{\delta_{0}+(\delta-\delta_{0})}\right)\\
&=\frac{\delta}{1}\cdot\frac{\delta_{0}}{\delta}
\\&=\delta_{0}\\
&=\Pr[\tilde{M}_{(\epsilon,\delta_{0},\delta_{1})}(0)=0]
\end{align*}
Similarly,
\begin{align*}
\Pr[T''(\tilde{M}_{(\epsilon,\delta)}(3))=3]&=\left(\frac{\delta}{\delta+\alpha}\right)\cdot\left(\frac{\delta_{1}}{\delta_{1}+(\delta-\delta_{1})}\right)\\
&=\frac{\delta}{1}\cdot\frac{\delta_{1}}{\delta}
\\
&=\delta_{1}\\
&=\Pr[\tilde{M}_{(\epsilon,\delta_{0},\delta_{1})}(3)=3]
\end{align*}
Next we check the probabilities with which $T''$ outputs $1$ and $2$ when $b=0$.

\begin{align*}
\Pr[T''(\tilde{M}_{(\epsilon,\delta)}(0))=1]&=\delta\cdot\left(\frac{\alpha_{0}-\alpha}{\delta}\right)\cdot\left(\frac{e^{\epsilon}\alpha_{0}-\alpha_{1}}{\alpha_{0}(e^{\epsilon}-e^{-\epsilon})}\right)+
\alpha\cdot\left(\frac{e^{\epsilon}}{e^{\epsilon}+1}\right)\cdot\left(\frac{e^{\epsilon}\alpha_{0}-\alpha_{1}}{\alpha_{0}(e^{\epsilon}-1)}\right)\\
&=(\alpha_{0}-\alpha+\alpha)\cdot\left(\frac{e^{2\epsilon}\alpha_{0}-e^{\epsilon}\alpha_{1}}{\alpha_{0}(e^{2\epsilon}-1)}\right) \\
&=\frac{e^{2\epsilon}\alpha_{0}-e^{\epsilon}\alpha_{1}}{e^{2\epsilon}-1}\\
&=\Pr[\tilde{M}_{(\epsilon,\delta_{0},\delta_{1})}(0)=1]\\
\end{align*}
It follows that $\Pr[T''(\tilde{M}_{(\epsilon,\delta)}(0))=2]=\Pr[\tilde{M}_{(\epsilon,\delta_{0},\delta_{1})}(0)=2]$ because the probabilities sum to $1$. Finally we show the probabilities with which $T''$ outputs $1$ and $2$ when $b=1$.
\begin{align*}
\Pr[T''(\tilde{M}_{(\epsilon,\delta)}(1))=1]&=\delta\cdot\frac{\alpha_{1}-\alpha}{\delta}\cdot \left(\frac{\alpha_{0}-\alpha}{\alpha_{1}-\alpha}\right)\cdot\left(\frac{e^{\epsilon}\alpha_{0}-\alpha_{1}}{\alpha_{0}(e^{2\epsilon}-1)}\right) + \alpha\cdot\left(\frac{1}{e^{\epsilon}+1}\right)\cdot\left(\frac{e^{\epsilon}\alpha_{0}-\alpha_{1}}{\alpha_{0}(e^{\epsilon}-1)}\right)\\
&=(\alpha_{0}-\alpha+\alpha)\cdot\left(\frac{e^{\epsilon}\alpha_{0}-\alpha_{1}}{\alpha_{0}(e^{2\epsilon}-1)}\right)\\
&=\Pr[\tilde{M}_{(\epsilon,\delta_{0},\delta_{1})}(1)=1]
\end{align*}
Again because the probabilities sum to 1, it follows that $\Pr[T''(\tilde{M}_{(\epsilon,\delta)}(1))=2]=\Pr[\tilde{M}_{(\epsilon,\delta_{0},\delta_{1})}(1)=2]$, which completes the proof.
\end{proof}

So $\tilde{M}_{(\epsilon,\delta)}$ can simulate any $(\epsilon,\delta)$ differentially private algorithm. Since it is known that post-processing preserves differential privacy (Fact \ref{fact:postproc}), it follows that to analyze the composition of arbitrary differentially private algorithms, it suffices to analyze the composition of $\tilde{M}_{(\epsilon_{i},\delta_{i})}$'s:

\begin{lemma}
\label{lem:optism}
For all $\epsilon_{1},\ldots,\epsilon_{k}\geq 0, \delta_{1},\ldots,\delta_{k},\delta_{g}\in [0,1)$,
\[
\mathrm{OptComp}((\epsilon_{1},\delta_{1}),\ldots,(\epsilon_{k},\delta_{k}), \delta_{g})= \mathrm{OptComp}(\tilde{M}_{(\epsilon_{1},\delta_{1})},\ldots,\tilde{M}_{(\epsilon_{k},\delta_{k})},\delta_{g})
\]
\end{lemma}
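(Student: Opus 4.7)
The plan is to prove the two inequalities separately. The direction
\[
\mathrm{OptComp}(\tilde{M}_{(\epsilon_{1},\delta_{1})},\ldots,\tilde{M}_{(\epsilon_{k},\delta_{k})},\delta_{g}) \leq \mathrm{OptComp}((\epsilon_{1},\delta_{1}),\ldots,(\epsilon_{k},\delta_{k}),\delta_{g})
\]
is immediate from the definition of the latter as a supremum over all $(\epsilon_{i},\delta_{i})$-DP tuples: the specific tuple $(\tilde{M}_{(\epsilon_{1},\delta_{1})},\ldots,\tilde{M}_{(\epsilon_{k},\delta_{k})})$ is one such tuple, since each $\tilde{M}_{(\epsilon_{i},\delta_{i})}$ is itself $(\epsilon_{i},\delta_{i})$-DP (as noted right after its definition).

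For the reverse inequality, I would fix arbitrary $(\epsilon_{i},\delta_{i})$-DP algorithms $M_{1},\ldots,M_{k}$ and an arbitrary pair of neighboring databases $D_{0},D_{1}$. Applying Lemma \ref{lem:sim} to each $M_{i}$ with this fixed pair yields a randomized algorithm $T_{i}$ (depending on $M_{i}$ and on $D_{0},D_{1}$) such that $T_{i}(\tilde{M}_{(\epsilon_{i},\delta_{i})}(b))$ is distributed identically to $M_{i}(D_{b})$ for both $b\in\{0,1\}$. Bundling these coordinate-wise into a single randomized map $T(y_{1},\ldots,y_{k})=(T_{1}(y_{1}),\ldots,T_{k}(y_{k}))$ with independent internal randomness across coordinates gives
\[
T\bigl(\tilde{M}_{(\epsilon_{1},\delta_{1})}(b),\ldots,\tilde{M}_{(\epsilon_{k},\delta_{k})}(b)\bigr) \;\sim\; \bigl(M_{1}(D_{b}),\ldots,M_{k}(D_{b})\bigr) \quad\text{for } b=0,1.
\]
Post-processing (Fact \ref{fact:postproc}) then bounds the $\delta_{g}$-approximate max-divergence between $(M_{1}(D_{0}),\ldots,M_{k}(D_{0}))$ and $(M_{1}(D_{1}),\ldots,M_{k}(D_{1}))$ by the corresponding divergence between the $\tilde{M}$-tuples on input $0$ versus input $1$. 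The right-hand side of this bound does not depend on the chosen databases, so taking a supremum over neighboring $D_{0},D_{1}$ on the left yields $\mathrm{OptComp}(M_{1},\ldots,M_{k},\delta_{g}) \leq \mathrm{OptComp}(\tilde{M}_{(\epsilon_{1},\delta_{1})},\ldots,\tilde{M}_{(\epsilon_{k},\delta_{k})},\delta_{g})$, and a further supremum over the choice of $M_{1},\ldots,M_{k}$ closes the argument.

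All the real work is already packed into Lemma \ref{lem:sim}, which crucially provides a \emph{single} simulator $T_{i}$ reproducing both $M_{i}(D_{0})$ and $M_{i}(D_{1})$ from the same $\tilde{M}_{(\epsilon_{i},\delta_{i})}$; the remaining argument is essentially bookkeeping. The one subtlety to watch is the order of quantifiers: the simulators $T_{i}$ depend on the pair $(D_{0},D_{1})$, so the argument must fix that pair \emph{before} invoking Lemma \ref{lem:sim} and only take the supremum over neighboring databases afterwards. The other small point is that the coordinate-wise simulators must use independent randomness, which is what lets post-processing be applied to the full product distribution on $k$-tuples rather than coordinate by coordinate.
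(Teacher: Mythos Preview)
Your proposal is correct and follows essentially the same route as the paper: the $\geq$ direction is immediate from the supremum definition, and the $\leq$ direction fixes $M_{1},\ldots,M_{k}$ and a neighboring pair $D_{0},D_{1}$, invokes Lemma~\ref{lem:sim} to obtain coordinate-wise simulators $T_{i}$, bundles them into a single post-processing map, and applies Fact~\ref{fact:postproc}. The paper phrases the second direction as ``if the $\tilde{M}$-composition is $(\epsilon_{g},\delta_{g})$-DP then so is the $M$-composition, hence the infima compare'' rather than comparing max-divergences directly, but this is only a cosmetic difference; your remarks on the order of quantifiers and the need for independent randomness across the $T_{i}$ are exactly the points that make the argument go through.
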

\begin{proof}
Since $\tilde{M}_{(\epsilon_{1},\delta_{1})},\ldots,\tilde{M}_{(\epsilon_{k},\delta_{k})}$ are $(\epsilon_{1},\delta_{1}),\ldots,(\epsilon_{k},\delta_{k})$-differentially private, we have:
\begin{align*}
\mathrm{OptComp}((\epsilon_{1},\delta_{1}),\ldots,(\epsilon_{k},\delta_{k}), \delta_{g})&=\sup\{\mathrm{OptComp}(M_{1},\ldots,M_{k},\delta_{g})\colon M_{i} ~\textrm{is}~ (\epsilon_{i},\delta_{i})\textsc{-DP} ~\forall i\in [k]\}\\
&\geq \mathrm{OptComp}(\tilde{M}_{(\epsilon_{1},\delta_{1})},\ldots,\tilde{M}_{(\epsilon_{k},\delta_{k})},\delta_{g})
\end{align*}

For the other direction, it suffices to show that for every $M_{1},\ldots,M_{k}$ that are $(\epsilon_{1},\delta_{1}),\ldots,(\epsilon_{k},\delta_{k})$-differentially private, we have 
\[\mathrm{OptComp}(M_{1},\ldots,M_{k},\delta_{g})\leq\mathrm{OptComp}(\tilde{M}_{(\epsilon_{1},\delta_{1})},\ldots,\tilde{M}_{(\epsilon_{k},\delta_{k})})\]
That is, 
\[
\inf\{\epsilon_{g}\geq0\colon(M_{1},\ldots,M_{k}) ~\textrm{is}~ (\epsilon_{g},\delta_{g})\textsc{-DP}\}\leq\inf\{\epsilon_{g}\geq 0\colon(\tilde{M}_{(\epsilon_{1},\delta_{1})},\ldots,\tilde{M}_{(\epsilon_{k},\delta_{k})}) ~\textrm{is}~ (\epsilon_{g},\delta_{g})\textsc{-DP}\}
\]
So suppose $(\tilde{M}_{(\epsilon_{1},\delta_{1})},\ldots,\tilde{M}_{(\epsilon_{k},\delta_{k})})$ is $(\epsilon_{g},\delta_{g})$-DP. We will show that $(M_{1},\ldots,M_{k})$ is also $(\epsilon_{g},\delta_{g})$-DP. Taking the infimum over $\epsilon_{g}$ then completes the proof.

We know from Lemma \ref{lem:sim} that for every pair of neighboring databases $D_{0},D_{1}$, there must exist randomized algorithms $T_{1},\ldots,T_{k}$ such that $T_{i}(\tilde{M}_{(\epsilon_{i},\delta_{i})}(b))$ is identically distributed to $M_{i}(D_{b})$ for all $i\in \{1,\ldots,k\}$. By hypothesis we have 
\[D_{\infty}^{\delta_{g}}\left((\tilde{M}_{(\epsilon_{1},\delta_{1})}(0),\ldots,\tilde{M}_{(\epsilon_{k},\delta_{k})}(0))\|(\tilde{M}_{(\epsilon_{1},\delta_{1})}(1),\ldots,\tilde{M}_{(\epsilon_{k},\delta_{k})}(1))\right)\leq\epsilon_{g}
\] 
Thus by Fact \ref{fact:postproc} we have:
\begin{align*}
&D_{\infty}^{\delta_{g}}\big((M_{1}(D_{0}),\ldots,M_{k}(D_{0}))\|(M_{1}(D_{1}),\ldots,M_{k}(D_{1}))\big)=\\
&D_{\infty}^{\delta_{g}}\left((T_{1}(\tilde{M}_{(\epsilon_{1},\delta_{1})}(0)),\ldots,T_{k}(\tilde{M}_{(\epsilon_{k},\delta_{k})}(0)))\|(T_{1}(\tilde{M}_{(\epsilon_{1},\delta_{1})}(1)),\ldots,T_{k}(\tilde{M}_{(\epsilon_{k},\delta_{k})}(1)))\right)\leq\epsilon_{g}
\end{align*}
\end{proof}

Now we are ready to characterize $\mathrm{OptComp}$ for an arbitrary set of differentially private algorithms. 
\begin{proof}[Proof of Theorem $\ref{thm:optcomp}$] 
Given $(\epsilon_{1},\delta_{1}),\ldots,(\epsilon_{k},\delta_{k})$ and $\delta_{g}$, let $\tilde{M}^{k}(b)$ denote the composition\\* $(\tilde{M}_{(\epsilon_{1},\delta_{1})}(b),\ldots,\tilde{M}_{(\epsilon_{k},\delta_{k})}(b))$ and let $\tilde{P}_{b}^{k}(x)$ be the probability mass function of $\tilde{M}^{k}(b)$, for $b=0$ and $b=1$. By Lemma \ref{lem:optism}, $\mathrm{OptComp((\epsilon_{1},\delta_{1}),\ldots,(\epsilon_{k},\delta_{k}),\delta_{g})}$ is the smallest value of $\epsilon_{g}$ such that:

\[
\delta_{g} \geq \max_{Q\subseteq\{0,1,2,3\}^{k}}\left\{\tilde{P}_{0}^{k}(Q)-e^{\epsilon_{g}}\cdot\tilde{P}_{1}^{k}(Q), \tilde{P}_{1}^{k}(Q)-e^{\epsilon_{g}}\cdot\tilde{P}_{0}^{k}(Q)\right\}.
\]
Since $\tilde{M}$ is symmetric, we can instead consider the smallest value of $\epsilon_{g}$ such that:

\[
\delta_{g} \geq \max_{Q\subseteq\{0,1,2,3\}^{k}}\left\{\tilde{P}_{0}^{k}(Q)-e^{\epsilon_{g}}\cdot\tilde{P}_{1}^{k}(Q)\right\},
\]
without loss of generality. Given $\epsilon_{g}$, the set $S\subseteq \{0,1,2,3\}^{k}$ that maximizes the right-hand side is

\[
S=S(\epsilon_{g})=\left\{x\in\{0,1,2,3\}^{k}\mid \tilde{P}_{0}^{k}(x)\geq e^{\epsilon_{g}}\cdot\tilde{P}_{1}^{k}(x)\right\}
\]

We can further split $S(\epsilon_{g})$ into $S(\epsilon_{g}) = S_{0}(\epsilon_{g})\cup S_{1}(\epsilon_{g})$ with

\begin{align*}
S_{0}(\epsilon_{g}) &= \left\{x\in \{0,1,2,3\}^{k} \mid \tilde{P}_{1}^{k}(x) = 0\right\} \\
S_{1}(\epsilon_{g}) &= \left\{x\in \{0,1,2,3\}^{k} \mid \tilde{P}_{0}^{k}(x) \geq e^{\epsilon_{g}}\cdot\tilde{P}_{1}^{k}(x), ~\textrm{and}~ \tilde{P}_{1}^{k}(x) > 0\right\}
\end{align*}

Note that $S_{0}(\epsilon_{g})\cap S_{1}(\epsilon_{g})=\emptyset$. We have $\tilde{P}_{1}^{k}(S_{0}(\epsilon_{g}))=0$ and $\tilde{P}_{0}^{k}(S_{0}(\epsilon_{g})) = 1-\Pr[\tilde{M}^{k}(0)\in \{1,2,3\}^{k}] = 1-\prod_{i=1}^{k}(1-\delta_{i})$. So 

\begin{align*}
\tilde{P}_{0}^{k}(S(\epsilon_{g})) - e^{\epsilon_{g}}\tilde{P}_{1}^{k}(S(\epsilon_{g})) &=  \tilde{P}_{0}^{k}(S_{0}(\epsilon_{g})) - e^{\epsilon_{g}}\tilde{P}_{1}^{k}(S_{0}(\epsilon_{g})) + \tilde{P}_{0}^{k}(S_{1}(\epsilon_{g})) - e^{\epsilon_{g}}\tilde{P}_{1}^{k}(S_{1}(\epsilon_{g})) \\
&= 1-\prod_{i=1}^{k}(1-\delta_{i}) + \tilde{P}_{0}^{k}(S_{1}(\epsilon_{g})) - e^{\epsilon_{g}}\tilde{P}_{1}^{k}(S_{1}(\epsilon_{g}))
\end{align*}

Now we just need to analyze $\tilde{P}_{0}^{k}(S_{1}(\epsilon_{g})) - e^{\epsilon_{g}}\tilde{P}_{1}^{k}(S_{1}(\epsilon_{g}))$. Notice that $S_{1}(\epsilon_{g})\subseteq \{1,2\}^{k}$ because for all $x\in S_{1}(\epsilon_{g})$, we have $\tilde{P}_{0}(x)>\tilde{P}_{1}(x)>0$. So we can write:
\begin{align*}
\tilde{P}_{0}^{k}(S_{1}(\epsilon_{g}))&-e^{\epsilon_{g}}\cdot\tilde{P}_{1}^{k}(S_{1}(\epsilon_{g}))\\
&= \sum_{y\in \{1,2\}^{k}}\max\left\{\prod_{i\colon y_{i}=1}\frac{(1-\delta_{i})e^{\epsilon_{i}}}{1+e^{\epsilon_{i}}}\cdot\prod_{i\colon y_{i}=2}\frac{(1-\delta_{i})}{1+e^{\epsilon_{i}}} - e^{\epsilon_{g}}\prod_{i\colon y_{i}=1}\frac{(1-\delta_{i})}{1+e^{\epsilon_{i}}}\cdot\prod_{i\colon y_{i}=2}\frac{(1-\delta_{i})e^{\epsilon_{i}}}{1+e^{\epsilon_{i}}},0\right\}\\
&=\prod_{i=1}^{k}\frac{1-\delta_{i}}{1+e^{\epsilon_{i}}}\sum_{y\in\{0,1\}^{k}}\max\left\{\frac{e^{\sum_{i=1}^{k}{{\epsilon_{i}}}}}
{e^{\sum_{i=1}^{k}y_{i}\epsilon_{i}}} - e^{\epsilon_{g}}\cdot e^{\sum_{i=1}^{k}y_{i}\epsilon_{i}}, 0\right\}
\end{align*}
Putting everything together yields:
\begin{align*}
\delta_{g} &\geq \tilde{P}_{0}^{k}(S_{0}(\epsilon_{g})) - e^{\epsilon_{g}}\tilde{P}_{1}^{k}(S_{0}(\epsilon_{g})) + \tilde{P}_{0}^{k}(S_{1}(\epsilon_{g})) - e^{\epsilon_{g}}\tilde{P}_{1}^{k}(S_{1}(\epsilon_{g})) \\
&= 1-\prod_{i=1}^{k}(1-\delta_{i}) + \frac{\prod_{i=1}^{k}(1-\delta_{i})}{\prod_{i=1}^{k}{(1+e^{\epsilon_{i}})}}
\sum_{S\subseteq \{1,\ldots,k\}}\max\left\{e^{\sum\limits_{i\in S}\epsilon_{i}} - e^{\epsilon_{g}}\cdot e^{\sum\limits_{i\not\in S}\epsilon_{i}}, 0\right\} 
\end{align*}
\end{proof}

We have characterized the optimal composition for an arbitrary set of differentially private algorithms $(M_{1},\ldots,M_{k})$ under the assumption that the algorithms are chosen in advance and all run on the same database. Next we show that $\mathrm{OptComp}$ under this restrictive model of composition is actually equivalent under the more general $k$-fold adaptive composition discussed in Section \ref{sect:tech}.

\begin{theorem}
\label{thm:optadaptive}
The privacy parameters $\epsilon_{1},\ldots,\epsilon_{k}\geq 0, \delta_{1},\ldots,\delta_{k}\in [0,1)$, satisfy $(\epsilon_{g},\delta_{g})$-differential privacy under adaptive composition for $\epsilon_{g},\delta_{g}\geq 0$ if and only if $\mathrm{OptComp}((\epsilon_{1},\delta_{1}),\ldots,(\epsilon_{k},\delta_{k}),\delta_{g})\leq\epsilon_{g}$
\end{theorem}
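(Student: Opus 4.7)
The plan is to prove each direction separately. The easy ($\Leftarrow$) direction follows by specialization: the non-adaptive composition analyzed by $\mathrm{OptComp}$ corresponds to an adversary who fixes $M_i, D_{(i,0)}, D_{(i,1)}$ at the outset (the choice in round $i$ being independent of prior outputs). If adaptive composition gives $(\epsilon_g,\delta_g)$-DP, then in particular this non-adaptive adversary's view is $(\epsilon_g,\delta_g)$-indistinguishable, and since the view contains the full output tuple $(M_1(D_{(1,b)}),\ldots,M_k(D_{(k,b)}))$, we conclude $\mathrm{OptComp}((\epsilon_1,\delta_1),\ldots,(\epsilon_k,\delta_k),\delta_g)\le\epsilon_g$.

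The interesting ($\Rightarrow$) direction asserts that adaptivity does not buy the adversary anything beyond what non-adaptive composition already allows. The approach is to fix an arbitrary adversary $A$ and build a single randomized post-processing function $F$ such that $F\bigl(\tilde M_{(\epsilon_1,\delta_1)}(b),\ldots,\tilde M_{(\epsilon_k,\delta_k)}(b)\bigr)$ is identically distributed to $V^b$ for both $b=0,1$. Inside $F$ we simply simulate the composition game: using $A$'s coin tosses (which $F$ can sample), we let $A$ pick $M_1, D_{(1,0)}, D_{(1,1)}$; then, applying Lemma \ref{lem:sim} to this triple, we obtain a simulator $T_1$ and set $y_1:=T_1(\tilde M_{(\epsilon_1,\delta_1)}(b))$, which is distributed identically to $M_1(D_{(1,b)})$. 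We feed $y_1$ back to $A$, who produces $M_2, D_{(2,0)}, D_{(2,1)}$; we invoke Lemma \ref{lem:sim} again to get $T_2$ and define $y_2:=T_2(\tilde M_{(\epsilon_2,\delta_2)}(b))$, and so on. Finally $F$ outputs $(y_1,\ldots,y_k)$ together with $A$'s coins.

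Once $F$ is in place, the proof concludes with two observations. First, by the independence of $\tilde M_{(\epsilon_1,\delta_1)}(b),\ldots,\tilde M_{(\epsilon_k,\delta_k)}(b)$, their joint distribution equals that of the non-adaptive composition $(\tilde M_{(\epsilon_1,\delta_1)}(b),\ldots,\tilde M_{(\epsilon_k,\delta_k)}(b))$, and by Lemma \ref{lem:optism} plus the hypothesis that $\mathrm{OptComp}(\ldots,\delta_g)\le\epsilon_g$ we have $D_\infty^{\delta_g}$ between the $b=0$ and $b=1$ versions of this tuple at most $\epsilon_g$. Second, $F$ is a (randomized) function applied identically to both tuples, so Fact \ref{fact:postproc} yields $D_\infty^{\delta_g}(V^0\|V^1)\le\epsilon_g$. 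Since $A$ was arbitrary, adaptive $(\epsilon_g,\delta_g)$-DP follows.

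The main obstacle is the worry that $T_i$ depends on the history $y_1,\ldots,y_{i-1}$, so the $T_i$'s are not ``the same post-processing'' on both sides. The fix is to realize that $F$ is a single procedure taking the independent tuple $(\tilde M_{(\epsilon_1,\delta_1)}(b),\ldots,\tilde M_{(\epsilon_k,\delta_k)}(b))$ as input and producing $V^b$; the apparent adaptivity is all absorbed into $F$'s internal randomness (drawn from $A$'s coins and any auxiliary randomness needed by the $T_i$'s), and this is exactly the setting in which Fact \ref{fact:postproc} applies. The one mild subtlety to state carefully is that $F$ must use the same random tape regardless of $b$, which is automatic because the simulators given by Lemma \ref{lem:sim} depend only on the $(\epsilon_i,\delta_i)$-DP algorithm $M_i$ and the pair $(D_{(i,0)}, D_{(i,1)})$, not on $b$ itself.
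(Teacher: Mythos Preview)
Your proposal is correct and essentially identical to the paper's proof: the paper also handles the easy direction by noting that non-adaptive composition is a special case of the adaptive game, and for the other direction it constructs exactly your post-processing map (called $\hat T$ there) that samples $A$'s coins, iteratively applies the simulators $T_i^{r,y_1,\ldots,y_{i-1}}$ from Lemma~\ref{lem:sim} to the independent $\tilde M_{(\epsilon_i,\delta_i)}(b)$'s, and then invokes Fact~\ref{fact:postproc}. One cosmetic point: your $(\Leftarrow)$ and $(\Rightarrow)$ labels are swapped relative to the order of the biconditional as stated in the theorem.
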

\begin{proof}
First suppose the privacy parameters $\epsilon_{1},\ldots,\epsilon_{k},\delta_{1},\ldots,\delta_{k}$ satisfy $(\epsilon_{g},\delta_{g})$-differential privacy under adaptive composition. Then $\mathrm{OptComp}((\epsilon_{1},\delta_{1}),\ldots,(\epsilon_{k},\delta_{k}),\delta_{g})\leq \epsilon_{g}$ because adaptive composition is more general than the composition defining $\mathrm{OptComp}$.

Conversely, suppose $\mathrm{OptComp}((\epsilon_{1},\delta_{1}),\ldots,(\epsilon_{k},\delta_{k}),\delta_{g})\leq\epsilon_{g}$. In particular, this means \\* $\mathrm{OptComp}(\tilde{M}_{(\epsilon_{1},\delta_{1})},\ldots,\tilde{M}_{(\epsilon_{k},\delta_{k})},\delta_{g})\leq\epsilon_{g}$. To complete the proof, we must show that the privacy parameters $\epsilon_{1},\ldots,\epsilon_{k},\delta_{1},\ldots,\delta_{k}$ satisfy $(\epsilon_{g},\delta_{g})$-differential privacy under adaptive composition.

Fix an adversary $A$. On each round $i$, $A$ uses its coin tosses $r$ and the previous outputs $y_{1},\ldots,y_{i-1}$ to select an $(\epsilon_{i},\delta_{i})$-differentially private algorithm $M_{i}=M_{i}^{r,y_{1},\ldots,y_{i-1}}$ and neighboring databases $D_{0}=D_{0}^{r,y_{1},\ldots,y_{i-1}},D_{1}=D_{1}^{r,y_{1},\ldots,y_{i-1}}$. Let $V^{b}$ be the view of $A$ with the given privacy parameters under composition game $b$ for $b=0$ and $b=1$.

Lemma \ref{lem:sim} tells us that there exists an algorithm $T_{i}=T_{i}^{r,y_{1},\ldots,y_{i-1}}$ such that $T_{i}(\tilde{M}_{(\epsilon_{i},\delta_{i})}(b))$ is identically distributed to $M_{i}(D_{b})$ for both $b=0,1$ for all $i\in [k]$. Define $\hat{T}(z_{1},\ldots,z_{k})$ for $z_{1},\ldots,z_{k}\in\{0,1,2,3\}$ as follows:

\begin{enumerate}
\item Randomly choose coins $r$ for $A$
\item For $i=1,\ldots,k,$ let $y_{i}\leftarrow T_{i}^{r,y_{1},\ldots,y_{i-1}}(z_{i})$
\item Output $(r,y_{1},\ldots,y_{k})$ 
\end{enumerate}

Notice that $\hat{T}(\tilde{M}_{(\epsilon_{1},\delta_{1})}(b),\ldots,\tilde{M}_{(\epsilon_{k},\delta_{k})}(b))$ is identically distributed to $V^{b}$ for both $b=0,1$. By hypothesis we have 
\[D_{\infty}^{\delta_{g}}\left((\tilde{M}_{(\epsilon_{1},\delta_{1})}(0),\ldots,\tilde{M}_{(\epsilon_{k},\delta_{k})}(0))\|(\tilde{M}_{(\epsilon_{1},\delta_{1})}(1),\ldots,\tilde{M}_{(\epsilon_{k},\delta_{k})}(1))\right)\leq\epsilon_{g}
\] 
Thus by Fact \ref{fact:postproc} we have:
\[
D_{\infty}^{\delta_{g}}\big(V^{0}\|V^{1}\big)=D_{\infty}^{\delta_{g}}\left(\hat{T}(\tilde{M}_{(\epsilon_{1},\delta_{1})}(0),\ldots,\tilde{M}_{(\epsilon_{k},\delta_{k})}(0))\|\hat{T}(\tilde{M}_{(\epsilon_{1},\delta_{1})}(1),\ldots,\tilde{M}_{(\epsilon_{k},\delta_{k})}(1))\right)\leq\epsilon_{g}
\]
\end{proof}
\end{section}

\begin{section}{Hardness of OptComp}

$\#P$ is the class of all counting problems associated with decision problems in NP. It is a set of functions that count the number of solutions to some NP problem. More formally:

\begin{definition}
A function $f\colon \{0,1\}^{*}\to\mathbb{N}$ is in the class $\#P$ if there exists a polynomial $p\colon \mathbb{N}\to\mathbb{N}$ and a polynomial time algorithm $M$ such that for every $x\in\{0,1\}^{*}$:
\[
f(x)=\left|\left\{y\in\{0,1\}^{p(|x|)}\colon M(x,y)=1\right\}\right|
\]
\end{definition}

\begin{definition}
A function $g$ is called $\#P$\emph{-hard} if every function $f\in\#P$ can be computed in polynomial time given oracle access to $g$. That is, evaluations of $g$ can be done in one time step.
\end{definition}

If a function is $\#P$-hard, then there is no polynomial-time algorithm for computing it unless there is a polynomial-time algorithm for counting the number of solutions of all NP problems.

\begin{definition}
A function $f$ is called $\#P$\emph{-easy} if there is some function $g\in\#P$ such that $f$ can be computed in polynomial time given oracle access to $g$.
\end{definition}

If a function is both $\#P$-hard and $\#P$-easy, we say it is $\#P$-complete. Proving that computing $\textrm{OptComp}$ is $\#P$-complete can be broken into two steps: showing that it is $\#P$-easy and showing that it is $\#P$-hard. 

\begin{lemma}
Computing $\textrm{OptComp}$ is $\#P$-easy.
\end{lemma}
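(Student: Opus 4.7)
The plan is to show that $\mathrm{OptComp}$ can be computed by a polynomial-time algorithm with oracle access to a single suitable $\#P$ function, using the explicit characterization from Theorem~\ref{thm:optcomp}. The left-hand side of inequality~(\ref{eq:1}) is continuous and monotonically non-increasing in $\epsilon_g$, so $\mathrm{OptComp}$ is simply the threshold at which it first drops below the right-hand side. I would therefore binary-search over rational candidates $\epsilon_g^*$ on a grid of resolution $2^{-q}$, deciding at each candidate whether the inequality holds, and output the smallest $\epsilon_g^*$ for which it does. The right-hand side is a closed-form function of the inputs and can be evaluated to any desired precision, so the whole difficulty concentrates in evaluating the sum $\Sigma(\epsilon_g^*) := \sum_{S\subseteq[k]} \max\{e^{A_S} - e^{\epsilon_g^*}\cdot e^{B_S},\, 0\}$, where $A_S = \sum_{i\in S}\epsilon_i$ and $B_S = \sum_{i\notin S}\epsilon_i$.

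To fit this exponentially large sum into a single $\#P$ query, I would fix a precision parameter $p = \mathrm{poly}(|x|, q)$ and, for each subset $S$, let $\tilde c_S$ be a deterministic nonnegative rational approximating $\max\{e^{A_S} - e^{\epsilon_g^*}\cdot e^{B_S},\, 0\}$ to within $2^{-p}$, computed by a truncated Taylor series for $e^x$ (polynomial time, since $A_S$, $B_S$, and $\epsilon_g^*$ are rationals of polynomial bit-length). Setting $C_S := \lfloor 2^p \tilde c_S \rfloor \in \mathbb{N}$, the integer $\sum_S C_S$ equals the cardinality of $\{(S,y) : S\subseteq[k],\ 1 \le y \le C_S\}$, and this count lies in $\#P$: on input $(S,y)$ the verifier recomputes $\tilde c_S$ in polynomial time and accepts iff $y \le C_S$. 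A single oracle call thus returns $\sum_S C_S$, from which $\Sigma(\epsilon_g^*)$ is recovered to additive error at most $2^{k+1-p}$.

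For $p$ chosen large enough (still polynomial in $|x|$ and $q$), the rounding error is far smaller than any gap the binary search needs to distinguish, so the procedure returns an $\epsilon^*$ within $2^{-q}$ of the true $\mathrm{OptComp}$ after polynomially many oracle calls. The main obstacle is calibrating $p$ correctly: one must show that the local Lipschitz slope of $\Sigma(\cdot)$ near the threshold is bounded below by an explicitly computable quantity (using, e.g., the strict monotonicity of $\Sigma$ on the region where the active set $\{S : A_S - B_S \ge \epsilon_g\}$ is nonempty, together with $B_S \ge 0$), so that the additive approximation error translates into an $\epsilon_g$-error dominated by $2^{-q}$. Together with the polynomial bit-length of $A_S$, $B_S$, and $\epsilon_g^*$, this pins down a polynomial choice of $p$ and yields the overall polynomial running time with a single $\#P$ oracle per binary-search step.
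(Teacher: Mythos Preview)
Your proposal is correct and uses the same core idea as the paper: evaluate the exponential sum in Theorem~\ref{thm:optcomp} by deterministically rounding each summand to a nonnegative integer and counting $(S,y)$ pairs with a $\#P$ oracle. The one substantive difference is that the paper first swaps to the dual formulation---take $\epsilon_g$ as input and output $\delta_g$---before applying the $\#P$ trick. In that direction the sum \emph{is} $\delta_g$ up to an explicit closed-form factor, so approximation error in the sum translates directly into error in $\delta_g$ with no inversion step; the paper then simply cites the polynomial equivalence of the two directions. This sidesteps precisely the obstacle you flag (bounding the slope of $\Sigma$ so that additive error in $\Sigma$ becomes additive error in $\epsilon_g$). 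Your direct route also works---whenever at least one term is active, $|\Sigma'(\epsilon_g)| = e^{\epsilon_g}\sum_{S\text{ active}} e^{B_S} \ge 1$, which gives the Lipschitz lower bound you need---but the paper's swap makes the write-up cleaner, and in fact the equivalence the paper invokes implicitly relies on the same slope bound you were careful to isolate.
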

\begin{proof}
For convenience we will view rational $(\epsilon_{1},\delta_{1}),\ldots,(\epsilon_{k},\delta_{k})$ and $\epsilon_{g}$ as given arguments to $\textrm{OptComp}$ and compute $\delta_{g}$. Recall that the two versions of $\textrm{OptComp}$, viewing $\epsilon_{g}$ as given and computing $\delta_{g}$ and vice versa, are equivalent up to a polynomial factor (just run binary search over values of $\delta_{g}$ computing polynomially many bits of precision). So the formulation we choose for the proof will not affect whether $\textrm{OptComp}$ is in $\#$P or not.  Recall that in our model of computing real valued functions, we will take another input $q$ and we will output an approximation of $\delta_{g}$ to $q$ bits of precision in polynomial time using a $\#P$ oracle where $\delta_{g}$ satisfies the following:
\[
\frac{1}{\prod_{i=1}^{k}{(1+e^{\epsilon_{i}})}}
\sum_{S\subseteq \{1,\ldots,k\}}\max\left\{e^{\sum\limits_{i\in S}\epsilon_{i}} - e^{\epsilon_{g}}\cdot e^{\sum\limits_{i\not\in S}\epsilon_{i}}, 0\right\} = 1-\frac{1-\delta_{g}}{\prod_{i=1}^{k}{(1-\delta_{i})}} 
\]
 Notice that the only part of the expression above that cannot be computed in polynomial time is the summation over subsets of $\{1,\ldots,k\}$. If we knew the sum, computing $\delta_{g}$ would be easy given our inputs. We show how to compute the sum in polynomial time using a $\#$P oracle and it follows that computing $\delta_{g}$ is $\#$P-easy .

Define $f\colon 2^{[k]}\to \mathbb{R}$ as $f(S)= \max\left\{e^{\sum\limits_{i\in S}\epsilon_{i}} - e^{\epsilon_{g}}\cdot e^{\sum\limits_{i\not\in S}\epsilon_{i}}, 0\right\}$. $f$ is computable in polynomial time (to any desired precision). Let $\hat{f}$ be a function computable in polynomial time where $\left|\hat{f}(S)-f(S)\right|<\frac{1}{2^{q+k}}$ for all $S$. Set $m=10^{q}$. Now define the function $g\colon 2^{[k]}\times\mathbb{N} \to \{0,1\}$ as follows:
\[
g(S,n) = \begin{dcases}
1 & \mathrm{if}~m\cdot \hat{f}(S)\geq n \\
0 & \mathrm{otherwise}
\end{dcases}
\]

We can now phrase a decision problem in NP: Does there exist a pair $(S,n)$ such that $g(S,n)=1$? This is in NP because given a witness $(S,n)$, we can compute $m\cdot \hat{f}(S)$ and compare the output to $n$, thereby verifying the solution, in polynomial time. Since this is an NP problem, a $\#P$ oracle can count the number of solutions to it in one time step. Notice that for every set $S$, the number of solutions (pairs of the form $(S,n)$ satisfying $g(S,n)=1$) is exactly $m\cdot \hat{f}(S)$ because $g$ will output $1$ for $g(S,1), g(S,2), \ldots, g(S,m\cdot \hat{f}(S))$. So over all possible sets $S$, the number of solutions as counted by the $\#P$ oracle equals $m\cdot\sum_{S\subseteq [k]}\hat{f}(S)$. Dividing this by $m$ gives us the sum up to an additive error of $\frac{2^{k}}{2^{q+k}}=\frac{1}{2^{q}}$, which can be used to compute $\delta_{g}$ to $q$ bits of precision in polynomial time. This only required one call to a $\#P$ oracle. So computing $\textrm{OptComp}$ is $\#P$-easy.
\end{proof}

Next we show that computing $\textrm{OptComp}$ is also $\#P$-hard through a series of reductions. We start with a multiplicative version of the partition problem that is known to be $\#P$-complete by Ehrgott \cite{Ehr00}. The problems in the chain of reductions are defined below. 

\begin{definition}
$\#\textsc{INT-PARTITION}$ is the following problem: given a set $Z = \{z_{1}, z_{2}, \ldots, z_{k}\}$ of positive integers, count the number of partitions $P\subseteq [k]$ such that
\[\prod_{i\in P}z_{i} - \prod_{i\not\in P}z_{i} = 0\]
\end{definition}

All of the remaining problems in our chain of reductions take inputs $\{w_{1},\ldots,w_{k}\}$ where $1\leq w_{i}\leq e$ is the $Dth$ root of a positive integer $z_{i}$ for all $i\in [k]$ and some positive integer $D$. All of the reductions we present actually hold for every positive integer $D$, including $D=1$ (in which case the inputs are integers). However, we will constrain $D$ to be large enough so that our inputs are in the range $[1,e]$. This is because in the final reduction to $\textrm{OptComp}$, $\epsilon_{i}$ values in the proof are set to $\ln(w_{i})$. We want to show that our reductions hold for reasonable values of $\epsilon$'s in a differential privacy setting so throughout the proofs we use $w_{i}$'s $\in [1,e]$ to correspond to $\epsilon_{i}$'s $\in [0,1]$ in the final reduction. In fact, we will later state our reductions as applying to instances where $\prod_{i}w_{i}\leq e^{\epsilon}$ (and hence $\sum_{i}\epsilon_{i}\leq \epsilon$) for any desired $\epsilon>0$. 

\begin{definition}
$\#\textsc{PARTITION}$ is the following problem: given a number $D\in\mathbb{N}$ and a set $W = \{w_{1}, w_{2}, \ldots, w_{k}\}$ of real numbers where $1\leq w_{1},\ldots,w_{k}\leq e$ are $D$th roots of positive integers $z_{1},\dots z_{k}$, respectively, count the number of partitions $P\subseteq [k]$ such that
\[\prod_{i\in P}w_{i} - \prod_{i\not\in P}w_{i} = 0\]
(The real numbers $w_{1},\ldots, w_{k}$ are specified in the input by $z_{1},\ldots,z_{k}$ and $D$ with the input size being the combined bit length of these integers in binary).
\end{definition}

\begin{definition}
$\#\textsc{T-PARTITION}$ is the following problem: given a number $D\in\mathbb{N}$, a set $W = \{w_{1}, w_{2}, \ldots, w_{k}\}$ of real numbers and a \emph{positive} real number $T$ where $1\leq w_{1},\ldots,w_{k}\leq e$ are $D$th roots of positive integers $z_{1},\dots z_{k}$, respectively, and $T=\sqrt[2D]{t}-\sqrt[2D]{t'}$ for two integers $t,t'$, count the number of partitions $P\subseteq [k]$ such that
\[\prod_{i\in P}w_{i} - \prod_{i\not\in P}w_{i} = T\]
(The real numbers $w_{1},\ldots, w_{k}$ and $T$ are specified in the input by $z_{1},\ldots,z_{k},t,t'$ and $D$ with the input size being the combined bit length of these integers in binary).
\end{definition}

\begin{definition}
$\textsc{SUM-PARTITION}$: given a number $D\in\mathbb{N}$ and a set $W = \{w_{1}, w_{2}, \ldots, w_{k}\}$ of real numbers where $1\leq w_{1},\ldots,w_{k}\leq e$ are $D$th roots of positive integers $z_{1},\dots z_{k}$, respectively, and a rational number $r>1$, find
\[
\sum_{P\subseteq [k]}\max\left\{\prod_{i\in P}w_{i} - r\cdot\prod_{i\not\in P}w_{i},0\right\}
\]
(The real numbers $w_{1},\ldots, w_{k}$ are specified in the input by $z_{1},\ldots,z_{k}$ and $D$ with the input size being the combined bit length of these integers and the numerator and denominator of $r$ in binary).
\end{definition}
Since the output of SUM-PARTITION is irrational, the actual computational problem is defined according to our convention in Section \ref{sect:tech} for computing real-valued functions. That is, given an additional input $q$, compute a number $y$ such that
\[
\left|y-\sum_{P\subseteq [k]}\max\left\{\prod_{i\in P}w_{i} - r\cdot\prod_{i\not\in P}w_{i},0\right\}\right|<\frac{1}{2^{q}}
\]
\noindent We prove that computing $\textrm{OptComp}$ is $\#P$-hard by the following series of reductions:

\[
\#\textsc{INT-PARTITION} \leq \#\textsc{PARTITION} \leq \#\textsc{T-PARTITION} \leq \textsc{SUM-PARTITION} \leq \mathrm{OptComp}
\]

Since $\#\textsc{INT-PARTITION}$ is known to be $\#P$-complete \cite{Ehr00}, the chain of reductions will prove that $\mathrm{OptComp}$ is $\#P$-hard.   

\begin{lemma}
For every constant $c>1$, $\#\textsc{PARTITION}$ is $\#P$-hard, even on instances where $\prod_{i}w_{i}\leq c$.
\end{lemma}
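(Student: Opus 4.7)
The plan is a direct counting-preserving reduction from $\#\textsc{INT-PARTITION}$: I will take a common $D$th root of the input integers, choosing $D$ large enough that the resulting $w_i$'s lie in $[1,e]$ and have product at most $c$. Concretely, given an instance $z_1, \ldots, z_k$ of $\#\textsc{INT-PARTITION}$, let $c' = \min(c, e) > 1$ and set $D = \lceil (\sum_{i=1}^{k} \ln z_i)/\ln c' \rceil$, which is polynomial in the input bit-length since each $\ln z_i$ is bounded by the binary length of $z_i$. Setting $w_i = z_i^{1/D}$ gives $1 \le w_i \le (\prod_j z_j)^{1/D} \le c' \le e$ and $\prod_i w_i \le c$, so $(D, w_1, \ldots, w_k)$ is a valid instance of $\#\textsc{PARTITION}$ satisfying the desired product constraint.

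Next I would observe that the partition condition is preserved exactly: for any $P \subseteq [k]$, all the relevant quantities are positive reals, so
\[
\prod_{i\in P} w_i = \prod_{i\notin P} w_i \iff \prod_{i\in P} w_i^{D} = \prod_{i\notin P} w_i^{D} \iff \prod_{i\in P} z_i = \prod_{i\notin P} z_i.
\]
Hence the number of partitions satisfying the $\#\textsc{PARTITION}$ condition on the output equals the number satisfying the $\#\textsc{INT-PARTITION}$ condition on the input, so any algorithm for the former (even restricted to the regime $\prod_i w_i \le c$) solves the latter in polynomial time via a single query.

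There is essentially no combinatorial obstacle here: the power map $x \mapsto x^{D}$ is a bijection on the positive reals, which automatically yields a one-to-one correspondence between ``balanced'' partitions in the two instances. The only things to verify are that $D$ is polynomially bounded in the input size and that the produced $w_i$'s satisfy all three constraints ($D$th root of a positive integer, in $[1,e]$, product $\le c$); both follow immediately from the choice of $D$ above. The same construction works uniformly for every constant $c > 1$, establishing the lemma.
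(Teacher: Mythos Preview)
Your proposal is correct and is essentially the same reduction as the paper's: the paper also sets $D=\lceil \log_{c}(\prod_i z_i)\rceil$ and $w_i=z_i^{1/D}$, then uses the bijectivity of $x\mapsto x^{D}$ on positive reals to conclude that balanced partitions correspond exactly. Your use of $c'=\min(c,e)$ is a minor refinement that explicitly ensures each $w_i\le e$ as required by the definition of $\#\textsc{PARTITION}$, a point the paper leaves implicit.
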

\begin{proof}
Given an instance of $\#\textsc{INT-PARTITION}$, $\{z_{1},\ldots,z_{k}\}$, we show how to find the solution in polynomial time using a $\#\textsc{PARTITION}$ oracle. Set $D=\lceil{\log_{c}(\prod_{i}z_{i})}\rceil$ and $w_{i}=\sqrt[D]{z_{i}} ~\forall i\in [k]$. Note that $\prod_{i}w_{i}=\left(\prod_{i}z_{i}\right)^{1/D}\leq c$.
Let $P\subseteq [k]$:

\begin{align*}
\prod_{i\in P}w_{i} = \prod_{i\not\in P}w_{i}  &\iff \left(\prod_{i\in P}w_{i}\right)^{D} = \left(\prod_{i\not\in P}w_{i}\right)^{D} \\
&\iff \prod_{i\in P}z_{i} = \prod_{i\not\in P}z_{i}
\end{align*}
There is a one-to-one correspondence between solutions to the $\#\textsc{PARTITION}$ problem and solutions to the given $\#\textsc{INT-PARTITION}$ instance. We can solve $\#\textsc{INT-PARTITION}$ in polynomial time with a $\#\textsc{PARTITION}$ oracle. Therefore $\#\textsc{PARTITION}$ is $\#P$-hard.
\end{proof}

\begin{lemma}
For every constant $c>1$, $\#\textsc{T-PARTITION}$ is $\#P$-hard, even on instances where $\prod_{i}w_{i}\leq c$.
\end{lemma}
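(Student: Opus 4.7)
The plan is to reduce $\#\textsc{PARTITION}$ on instances with $\prod_i w_i \leq c'$ (which is $\#P$-hard by the previous lemma, setting $c' = (1+c)/2 \in (1,c)$) to $\#\textsc{T-PARTITION}$ on instances with $\prod_i w_i \leq c$, by augmenting the input with a single carefully chosen element. Given $z_1,\ldots,z_k$ and $D$ (so $w_i = \sqrt[D]{z_i}$, and $A := \prod_{i=1}^k w_i \leq c'$), I will add $w_{k+1} = \sqrt[D]{p}$ for a prime $p$ (to be specified below) and set the target
\[
T \;=\; (w_{k+1}-1)\sqrt{A} \;=\; \sqrt[2D]{p^2 \prod_{i=1}^k z_i} \;-\; \sqrt[2D]{\prod_{i=1}^k z_i},
\]
which is a positive difference of $2D$-th roots of positive integers, exactly as $\#\textsc{T-PARTITION}$ requires.

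The heart of the proof is a quadratic case analysis on whether $k+1 \in P'$ for a subset $P' \subseteq [k+1]$. Writing $P = P' \cap [k]$, $a = w_{k+1}$, and $p_P = \prod_{i \in P} w_i$, so that $\prod_{i \in [k]\setminus P} w_i = A/p_P$, the equation $\prod_{i \in P'} w_i - \prod_{i \notin P'} w_i = T$ becomes either $a p_P - A/p_P = (a-1)\sqrt{A}$ (if $k+1 \in P'$) or $p_P - aA/p_P = (a-1)\sqrt{A}$ (if $k+1 \notin P'$). Each quadratic in $p_P$ has exactly one positive root, namely $p_P = \sqrt{A}$ and $p_P = a\sqrt{A}$ respectively. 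The first is precisely the original $\#\textsc{PARTITION}$ condition on $P$. The second, after squaring and raising to the $D$-th power, requires $(\prod_{i \in P} z_i)^2 = p^2 \prod_{i=1}^k z_i$, which is impossible whenever $p$ is a prime that does not divide any $z_i$: the left side has $p$-adic valuation $0$ while the right side has $p$-adic valuation $2$. Hence the number of $\#\textsc{T-PARTITION}$ solutions equals the number of original $\#\textsc{PARTITION}$ solutions.

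The main obstacle is guaranteeing such a prime exists within our size budget. I need $p$ prime with $p \nmid z_i$ for all $i$, and $p^{1/D} \leq c/c'$ so that $\prod_{i=1}^{k+1} w_i \leq c$. The hypothesis $\prod w_i \leq c'$ gives $\prod_{i=1}^k z_i \leq (c')^D$, so at most $D\log_2 c'$ primes divide $\prod_i z_i$, whereas the prime counting function gives $\pi((c/c')^D) = \Theta((c/c')^D / D)$ primes in the allowed range; the latter exceeds the former once $D$ is above a constant depending only on $c$. For small $D$, I pad the instance by replacing $(z_1,\ldots,z_k,D)$ with $(z_1^m,\ldots,z_k^m, mD)$ for a polynomially bounded $m$; this leaves each $w_i$ and the set of prime divisors of $\prod z_i$ unchanged while enlarging the effective root, guaranteeing a usable prime $p$ findable in polynomial time (e.g.\ by sieving). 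This yields a polynomial-time reduction and establishes the $\#P$-hardness.
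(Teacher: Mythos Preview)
Your reduction is correct, but the paper takes a simpler route that avoids the prime-hunting and the padding step entirely. The paper also adds a single new element and uses the same quadratic case analysis on whether $k+1\in P'$, but it chooses $w_{k+1}=\prod_{i=1}^k w_i=A$ (so one starts from $\#\textsc{PARTITION}$ instances with $\prod_i w_i\le\sqrt{c}$, giving $\prod_{i\le k+1}w_i\le c$) and sets $T=\sqrt{A}(A-1)$. Then in the ``$k+1\in P'$'' case one gets exactly your root $p_P=\sqrt{A}$, while in the ``$k+1\notin P'$'' case the unique positive root is $p_P=A\sqrt{A}=A^{3/2}$, which is impossible for $P\subseteq[k]$ simply because $p_P\le A$ and $A>1$ (the degenerate all-ones instance is handled separately). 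So the bad branch is killed by a size argument rather than an arithmetic one.

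What each approach buys: the paper's choice $w_{k+1}=A$ makes the second quadratic root automatically too large, so no number-theoretic input is needed, $w_{k+1}$ is already a $D$th root of the integer $\prod_i z_i$, and there is no issue about $D$ being large enough. Your approach works too, but it pays for the flexibility of choosing $w_{k+1}$ freely with the extra bookkeeping of locating a prime $p\nmid\prod_i z_i$ with $p\le (c/c')^{D}$ and padding the exponent when $D$ is small. Both reductions land on the same target $T$ shape and the same one-to-one correspondence of solutions; the paper's is just the more economical instantiation of the idea.
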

 
\begin{proof}
Let $c>1$ be a constant. We will reduce from $\#\textsc{PARTITION}$, so consider an instance of the $\#\textsc{PARTITION}$ problem, $W = \{w_{1}, w_{2}, \ldots, w_{k}\}$ of $D$th roots of integers $z_{1},\ldots,z_{k}$, respectively. We may assume $\prod_{i}w_{i}\leq \sqrt{c}$ since $\sqrt{c}$ is also a constant greater than $1$.

Set $W'=W\cup\{w_{k+1}\}$, where $w_{k+1}=\prod_{i=1}^{k}w_{i}$. Notice that $\prod_{i=1}^{k+1}w_{i}\leq (\sqrt{c})^{2}=c$. Set $T = \sqrt{w_{k+1}}\left(w_{k+1}-1\right)$. Notice that $w_{k+1}=\left(\prod_{i=1}^{k}z_{i}\right)^{\frac{1}{D}}$ so by setting integers $t=\left(\prod_{i=1}^{k}z_{i}\right)^{3}$ and $t'=\prod_{i=1}^{k}z_{i}$ we get that 
\[
T=\sqrt[2D]{t}-\sqrt[2D]{t'}
\]
which meets the input requirement for $\#\textsc{T-PARTITION}$. So we can use a $\#\textsc{T-PARTITION}$ oracle to count the number of partitions $Q\subseteq \{1,\ldots,k+1\}$ such that
\[\prod_{i\in Q}w_{i} - \prod_{i\not\in Q}w_{i} = T\]

Let $P=Q\cap\{1,\ldots,k\}$. We will argue that $\prod_{i\in Q}w_{i}-\prod_{i\not\in Q}w_{i}=T$ if and only if $\prod_{i\in P}w_{i}=\prod_{i\not\in P}w_{i}$, which completes the proof. There are two cases to consider: $w_{k+1}\in Q$ and $w_{k+1}\not\in Q$.

$\textbf{Case 1:}$ $w_{k+1}\in Q$. In this case, we have:
\begin{align*}
&w_{k+1}\cdot\left(\prod_{i\in P}w_{i}\right) - \prod_{i\not\in P}w_{i} = \prod_{i\in Q}w_{i}-\prod_{i\not\in Q}w_{i}= T = \sqrt{w_{k+1}}\left(w_{k+1}-1\right) \\
&\iff \left(\prod_{i\in[k]}w_{i}\right)\left(\prod_{i\in P}w_{i}\right)^{2} - \prod_{i\in[k]}w_{i}=\sqrt{\prod_{i\in[k]}w_{i}}\left(\prod_{i\in[k]}w_{i}-1\right)\left(\prod_{i\in P}w_{i}\right)  ~~~\text{ multiplied both sides by} \prod\limits_{i\in P}w_{i}\\
&\iff \left(\prod_{i\in P}w_{i}-\sqrt{\prod_{i\in[k]}w_{i}}\right)\left(\prod_{i\in[k]}w_{i}\prod_{i\in P}w_{i} + \sqrt{\prod_{i\in[k]}w_{i}}\right)=0 ~~~~~~~~~~~~~~~~~~~~~~\text{factored quadratic in} \prod\limits_{i\in P}w_{i} \\
&\iff \prod_{i\in P}w_{i}=\sqrt{\prod_{i\in[k]}w_{i}}\\
&\iff \prod_{i\not\in P}w_{i}=\prod_{i\in P}w_{i}
\end{align*}

So there is a one-to-one correspondence between solutions to the $\#\textsc{T-PARTITION}$ instance $W'$ where $w_{k+1}\in Q$ and solutions to the original $\#\textsc{PARTITION}$ instance $W$.

\vspace*{\baselineskip}
$\textbf{Case 2:}$ $w_{k+1}\not\in Q$. Solutions now look like:

\[
\prod_{i\in P}w_{i} - \prod_{i\in[k]}w_{i}\prod_{i\not\in P}w_{i} = \sqrt{\prod_{i\in[k]}w_{i}}\left(\prod_{i\in[k]}w_{i}-1\right)
\]

One way this can be true is if $w_{i}=1$ for all $i\in[k]$.  We can check ahead of time if our input set $W$ contains all ones. If it does, then there are $2^{k}-2$ partitions that yield equal products (all except $P=[k]$ and $P=\emptyset$) so we can just output $2^{k}-2$ as the solution and not even use our oracle. The only other way to satisfy the above expression is for $\prod_{i\in P}w_{i} > \prod_{i\in [k]}w_{i}$ which cannot happen because $P\subseteq [k]$. So there are no solutions in the case that $w_{k+1}\not\in Q$. 

Therefore the output of the $\#\textsc{T-PARTITION}$ oracle on $W'$ is the solution to the $\#\textsc{PARTITION}$ problem. So $\#\textsc{T-PARTITION}$ is $\#P$-hard.
\end{proof}

For the next two proofs we will make use of the following fact to bound the amount of precision needed when approximating irrational numbers by rational ones in our reductions:

\begin{fact}
\label{mvt}
For all real numbers $y>x$ and functions $f$ that are differentiable on the interval $[x,y]$:
\[
f(y)-f(x)\geq (y-x)\cdot \min_{z\in(x,y)}f'(z)
\]
\end{fact}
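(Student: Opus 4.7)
The plan is to invoke the standard Mean Value Theorem and then pass from the derivative value at a single intermediate point to the minimum of the derivative over the interval. Concretely, since $f$ is differentiable on $[x,y]$ (and therefore continuous there), the Mean Value Theorem supplies some $c \in (x,y)$ with
\[
f(y) - f(x) = (y - x)\cdot f'(c).
\]
Because $c \in (x,y)$, we have $f'(c) \geq \min_{z \in (x,y)} f'(z)$ by the definition of minimum. Since $y - x > 0$, multiplying by $y-x$ preserves the direction of the inequality and yields
\[
f(y) - f(x) \;=\; (y-x)\cdot f'(c) \;\geq\; (y-x)\cdot \min_{z \in (x,y)} f'(z),
\]
which is the claim.

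The only mild subtlety is whether the quantity $\min_{z \in (x,y)} f'(z)$ is literally attained, since $(x,y)$ is open and $f'$ is not assumed continuous. This is not a real obstacle: one may interpret $\min$ as $\inf$ throughout, and the inequality $f'(c) \geq \inf_{z \in (x,y)} f'(z)$ still holds by the definition of infimum, so the argument above is unaffected. In the paper's later applications, $f$ will be a smooth function (e.g.\ a polynomial or exponential), so $f'$ is continuous on the closed interval $[x,y]$ and the minimum is genuinely attained, removing even this ambiguity.

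There is no deep step or obstacle here — the fact is essentially the Mean Value Theorem rephrased as a one-sided inequality, used downstream only to quantify the precision needed when approximating the irrational quantities appearing in the reductions to $\mathrm{OptComp}$.
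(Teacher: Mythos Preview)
Your argument is correct: apply the Mean Value Theorem to obtain $f(y)-f(x)=(y-x)f'(c)$ for some $c\in(x,y)$, then bound $f'(c)$ below by the infimum of $f'$ over $(x,y)$. The paper does not actually supply a proof of this statement---it is recorded as a ``Fact'' (essentially the Mean Value Theorem) and used without further justification---so your write-up fills in exactly the intended reasoning.
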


\begin{lemma}
For every constant $c>1$, $\textsc{SUM-PARTITION}$ is $\#P$-hard even on instances where $\prod_{i}w_{i}\leq c$ and where there are no partitions $S$ such that $\prod_{i\in S}w_{i}=r\cdot\prod_{i\not\in S}w_{i}$. 
\end{lemma}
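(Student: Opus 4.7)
The plan is to reduce $\#\textsc{T-PARTITION}$ to $\textsc{SUM-PARTITION}$ by exploiting the piecewise-linear structure of the sum as a function of $r$. Given a $\#\textsc{T-PARTITION}$ instance $(W,T)$ with $C := \prod_i w_i$, let $(a^*,b^*)$ be the unique pair of positive reals satisfying $a^* - b^* = T$ and $a^*b^* = C$, and set $r_0 = a^*/b^*$. Writing $a_P = \prod_{i\in P}w_i$ and $b_P = \prod_{i\notin P}w_i$, note that $a_P b_P = C$ holds for every $P$, so $a_P - b_P = T$ forces $(a_P,b_P) = (a^*,b^*)$. Hence the $\#\textsc{T-PARTITION}$ count equals $N := |\{P : a_P/b_P = r_0\}|$.

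The idea is to extract $N$ from the slope discontinuity of
\[
f(r) \;:=\; \sum_{P\subseteq [k]}\max\{a_P - r b_P,\;0\}
\]
at $r = r_0$. As a function of $r$, $f$ is continuous and piecewise linear with breakpoints exactly at the ratios $r_P := a_P/b_P$, and with slope $-\sum_{P:\,r_P > r}b_P$ on each linear piece, so the slope jump at $r_0$ is precisely $\sum_{P:r_P=r_0}b_P = N b^*$. The algorithm first computes rational approximations to $a^*, b^*, r_0$ (they are given by an explicit quadratic in $T$ and $C$), then chooses four rationals $r_1 < r_2 < r_0 < r_3 < r_4$ lying within an interval of width smaller than the minimum gap between distinct breakpoints of $f$ (so that $r_0$ is the only breakpoint in $[r_1,r_4]$), perturbing each $r_i$ slightly if necessary to guarantee $r_i \neq r_0$ (and hence $r_i \neq r_P$ for every $P$, satisfying the restriction on $\textsc{SUM-PARTITION}$ instances required by the lemma). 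It then queries the oracle for $f(r_1),\ldots,f(r_4)$ to polynomial precision and outputs $\mathrm{round}((s_1 - s_2)/b^*)$, where $s_1 = (f(r_1)-f(r_2))/(r_2-r_1)$ and $s_2 = (f(r_3)-f(r_4))/(r_4-r_3)$.

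The main obstacle is the precision analysis, which rests on two quantitative bounds. First, a polynomial-bit lower bound on the gap between distinct breakpoints: since $r_P = (\prod_{i\in P}z_i / \prod_{i\notin P}z_i)^{1/D}$ is the $D$-th root of a ratio of positive integers whose numerator and denominator are bounded by $\prod_i z_i$, applying Fact \ref{mvt} to $x \mapsto x^{1/D}$ on an interval enclosing two distinct such ratios gives $|r_P - r_Q| \geq 1/2^{\mathrm{poly}(|\mathrm{input}|)}$ whenever $r_P \neq r_Q$. Second, a polynomial lower bound on $b^*$: since each $w_i \geq 1$ forces $a_P, b_P \geq 1$, any instance with $|T| > C - 1$ has count $0$ and can be answered directly without an oracle call, and in the remaining case $|T| \leq C - 1$ combined with $a^* b^* = C$ gives $b^* \geq 1$ by elementary algebra. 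Combining these bounds, choosing the width of $[r_1,r_4]$ to be a polynomial fraction of the breakpoint gap and the oracle precision $q$ polynomially large ensures $|\tilde s_1 - \tilde s_2 - Nb^*| < b^*/2$, so the rounding step correctly recovers the integer $N$ and completes the reduction.
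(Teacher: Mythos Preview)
Your approach is very close to the paper's: both reduce from \#\textsc{T-PARTITION}, both exploit the piecewise-linear structure of $f(r)=\sum_P\max\{a_P-rb_P,0\}$, and both end up making four oracle calls at rational $r$-values straddling the critical ratio. The paper frames the extraction slightly differently---it computes $\sum_{P\in P^T}(a_P-b_P)$ and $\sum_{P\in P^{\hat T}}(a_P-b_P)$ via a linear combination of two oracle values each (to cancel the coefficient $r$), then divides their difference by $T$---whereas you read off the slope jump $Nb^*$ directly and divide by $b^*$. Your formulation is arguably cleaner, and the precision analysis (breakpoint-gap bound via Fact~\ref{mvt}, lower bound on $b^*$) is handled the same way in both.

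There is, however, one genuine gap. Your claim that ``$r_0$ is the only breakpoint in $[r_1,r_4]$'' follows from the minimum-gap bound \emph{only when $r_0$ is itself a breakpoint}, i.e.\ when $N>0$. If $N=0$ then $r_0$ is not among the $r_P$'s, and your gap bound between \emph{distinct breakpoints} says nothing about how close some $r_P$ might lie to $r_0$; a foreign breakpoint could sit inside $[r_1,r_4]$ (on either side of $r_0$), producing a nonzero slope jump and a wrong answer after rounding. The paper handles this explicitly: it solves the quadratic for $a^*$, checks whether $j:=(a^*)^D$ is an integer, and outputs $0$ immediately if not (since then no $P$ can satisfy $\prod_{i\in P}w_i=a^*$). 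Only after confirming $j\in\mathbb{Z}$ does it invoke the discreteness argument, using the neighboring integers $j\pm 1$ to locate the adjacent breakpoints. You should add this integrality check (or, equivalently, give a separate $2^{-\mathrm{poly}}$ lower bound on $|r_0-r_P|$ valid even when $r_0$ is not a breakpoint) before choosing your interval width.
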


\begin{proof}
We will use a $\textsc{SUM-PARTITION}$ oracle to solve $\#\textsc{T-PARTITION}$ given a set $W=\{w_{1}, \ldots, w_{k}\}$ of $D$th roots of positive integers $z_{1},\ldots,z_{k}$, respectively, and a positive real number $T=\sqrt[2D]{t}-\sqrt[2D]{t'}$ for integers $t,t'$ given in the input. Notice that for every $x>0$:
\begin{align*}
\prod_{i\in P}w_{i} - \prod_{i\not\in P}w_{i}=x &\implies \prod_{i\in P}w_{i} - \frac{\prod_{i\in[k]}w_{i}}{\prod_{i\in P}w_{i}} = x \\
&\implies \exists~ j\in\mathbb{Z}^{+} \mathrm{such~that} \sqrt[D]{j}-\frac{\prod_{i\in[k]}w_{i}}{\sqrt[D]{j}}=x
\end{align*}

Above, $j$ must be a positive integer greater than $\left(\prod_{i=1}^{k}z_{i}\right)^{1/2}$, which tells us that the gap in products from every partition must take a particular form. This means that for a given $D$ and $W$, $\#\textsc{x-PARTITION}$ can only be non-zero on a discrete set of possible values of $x$. So given our $\#$T-PARTITION instance we can find a $T'>T$ such that the above has no solutions for $x$ in the interval $(T,T')$. Specifically, solve the above quadratic for $\sqrt[D]{j}$. If $j$ is not an integer, then we know the answer to the $\#$T-PARTITION instance is 0, so assume $j$ is an integer and set $T'=\sqrt[D]{j+1}-\prod_{i}w_{i}/\sqrt[D]{j+1}$. We can also find an interval $(T'',T)$ just below $T$ where no value of $x$ in the interval can yield a solution above by setting $T''=\sqrt[D]{j-1}-\prod_{i}w_{i}/\sqrt[D]{j-1}$. We use these discreteness properties twice in the proof. Also notice that these intervals are not too small:
\begin{claim}
\label{claim1}
$T'-T\geq 2^{-\text{poly}(n)}$ and $T-T''\geq 2^{-\text{poly}(n)}$ where $n$ is the input length (i.e. the bit lengths of the integers $z_{1},\ldots,z_{k}, t, t'$).
\end{claim}
\begin{claim_proof}[Proof of Claim]
\begin{align*}
T'-T&=\sqrt[D]{j+1}-\frac{\prod_{i\in[k]}w_{i}}{\sqrt[D]{j+1}}-\sqrt[D]{j}+\frac{\prod_{i\in[k]}w_{i}}{\sqrt[D]{j}}\\
&\geq \sqrt[D]{j+1}-\sqrt[D]{j}\\
&\geq \frac{1}{D(j+1)}
\end{align*}
where the last inequality follows from Fact \ref{mvt}. This final value is only exponentially small because $j$ is upper bounded by $\prod_{i=1}^{k}z_{i}$, which is at most exponentially large in the bit length of the $z_{i}$'s. A very similar proof shows that $(T'',T)$ is only exponentially small.
\end{claim_proof}

 This means that we can always find $\hat{T}\in (T,T')$ such that $\hat{T}$ is rational and can be fully specified with a bit length that is polynomial in the input length. Fix such a quantity $\hat{T}$. For all $y>0$, define $P^{y} \equiv \{P\subseteq [k] \mid \prod_{i\in P}w_{i} - \prod_{i\not\in P}w_{i} \geq y\}$. Then, since $x$-PARTITION has no solutions for $x\in (T,T')$:
\begin{align*}
\left|\left\{P\subseteq [k] \mid \prod_{i\in P}w_{i} - \prod_{i\not\in P}w_{i} = T\right\}\right| &= \left|P^{T}\backslash P^{\hat{T}}\right| \\
&= \frac{1}{T}\left(\sum_{P\in P^{T}\backslash P^{\hat{T}}}\left(\prod_{i\in P}w_{i} - \prod_{i\not\in P}w_{i}\right)\right)\\
&= \label{8} \frac{1}{T}\left(\sum_{P\in P^{T}}\left(\prod_{i\in P}w_{i} - \prod_{i\not\in P}w_{i}\right) - \sum_{P\in P^{\hat{T}}}\left(\prod_{i\in P}w_{i} - \prod_{i\not\in P}w_{i}\right)\right)
\end{align*}

We now show how to compute the two sums in the final term using the $\textsc{SUM-PARTITION}$ oracle. We will give the procedure for computing $\sum\limits_{P\in P^{T}}\left(\prod\limits_{i\in P}w_{i} - \prod\limits_{i\not\in P}w_{i}\right)$ and the case with $\hat{T}$ will follow by symmetry. The oracle returns a real number, so by our model of computing real valued functions, we will also give the oracle an additional input that specifies the number of bits of precision in its output. Ultimately we only need to approximate each sum to within $\pm T/4$. This will give an approximation to the $\#\textsc{T-PARTITION}$ problem to within $\pm 1/2$, thereby solving it by rounding the approximation because the solution will be an integer. We want to set the input $r$ to the $\textsc{SUM-PARTITION}$ oracle to be $r=r_{T}$ such that for all $P\subseteq [k]$, we have: 

\begin{equation}
\label{r}
\prod\limits_{i\in P}w_{i}-r_{T}\cdot\prod\limits_{i\not\in P}w_{i}\geq 0 \iff \prod\limits_{i\in P}w_{i}-\prod\limits_{i\not\in P}w_{i}\geq T
\end{equation}

Taking $w=\prod_{i\in [k]}w_{i}$ and thinking of $v=\prod_{i\in P}w_{i}$, it suffices that all positive solutions to each of the following two inequalities are the same:

\[
v-r_{T}\frac{w}{v}\geq 0 ~~~\text{and}~~~v-\frac{w}{v}\geq T
\]
The positive solutions to the left one are $v\geq \sqrt{r_{T}w}$, and to the right one are $v\geq (T+\sqrt{T^{2}+4w})/2$. Setting the right-hand sides equal gives 
\begin{equation}
\label{requation}
r_{T}=\frac{\left(T+\sqrt{T^{2}+4w}\right)^{2}}{4w}
\end{equation}

Since $r_{T}$ might be irrational and SUM-PARTITION takes as input rational values of $r$, we need to find a rational $r$ that approximates $r_{T}$ and preserves the set of solutions $P^{T}$. Recall from Claim \ref{claim1} that there is an (only) exponentially small interval $(T'',T)$ below $T$ such that for all $\bar{T}\in (T'',T)$, $P^{T}=P^{\bar{T}}$. This translates to a corresponding interval $(r_{T''},r_{T})$ such that for all $r\in (r_{T''},r_{T})$, equivalence (\ref{r}) holds. Furthermore, this interval is also only exponentially small.

\begin{claim}
\label{claim2}
$r_{T}-r_{T''}\geq 2^{-\text{poly}(n)}$ where $n$ is the input length (i.e. the bit lengths of the integers $z_{1},\ldots,z_{k}, t, t'$).
\end{claim}
\begin{claim_proof}
To see this, view $r_{T}$ from Equation \ref{requation} as a function $r(T)$ of $T$, and calculate the derivative:

\[
r'(T)=\frac{\left(T+\sqrt{T^{2}+4w}\right)^{2}}{2w\cdot \sqrt{T^{2}+4w}},
\]
Fact \ref{mvt} says that:

\begin{align*}
r_{T}-r_{T''}&=r(T)-r(T'') \\
&\geq \left(\min_{z\in(T'',T)}r'(z)\right)\cdot(T-T'')\\
&\geq (T-T'')\cdot\text{poly}(T)
\end{align*}
(Recall that $1\leq w=\prod_{i}w_{i}\leq c$). This is only exponentially small in the input length by Claim \ref{claim1}.
\end{claim_proof}

So we can choose a rational $r\in (r_{T''},r_{T})$ that can be specified with a number of bits that is polynomial in the input length and preserves $P^{T} = \left\{P\subseteq [k]\mid \prod_{i\in P}w_{i}-r\cdot\prod_{i\not\in P}w_{i}\geq 0\right\}$. However the SUM-PARTITION oracle gives us
\[
\sum_{P\subseteq [k]}\max\left\{\prod_{i\in P}w_{i} - r\cdot\prod_{i\not\in P}w_{i},0\right\}=\sum_{P\in P^{T}}\left(\prod_{i\in P}w_{i}-r\cdot\prod_{i\not\in P}w_{i}\right)
\]
whereas we want to compute the right-hand side without the $r$ coefficient. To get this we just pick another rational $r'\in (r_{T''},r_{T})$ such that $r'-r\geq 2^{-\text{poly}(n)}$. If precision were not an issue, we could run our $\textsc{SUM-PARTITION}$ oracle for $r$ and $r'$ and receive the output:

\begin{align*}
&S_{1}=\sum_{P\in P^{T}}\left(\prod_{i\in P}w_{i}-r\cdot\prod_{i\not\in P}w_{i}\right) \\
&S_{2}=\sum_{P\in P^{T}}\left(\prod_{i\in P}w_{i}-r'\cdot\prod_{i\not\in P}w_{i}\right)
\end{align*}

Then the following linear combination of $S_{1}$ and $S_{2}$ gives us what we want:
\[
\frac{r'-1}{r'-r}\cdot S_{1}-\frac{r-1}{r'-r}\cdot S_{2}=\sum_{P\in P^{T}}\left(\prod_{i\in P}w_{i}-\prod_{i\not\in P}w_{i}\right)
\]

\begin{claim}
Computing $S_{1}$ and $S_{2}$ to within $\pm 2^{-\text{poly}(n)}$ yields an approximation of $\sum_{P\in P^{T}}\left(\prod_{i\in P}w_{i}-\prod_{i\not\in P}w_{i}\right)$ to within $\pm T/4$.
\end{claim}
\begin{claim_proof}
We just need to approximate $S_{1}$ and $S_{2}$ to within $\pm \frac{T}{8}\cdot \frac{r'-r}{r'-1}$ to get the desired precision. This additive error is only exponentially small by Claim \ref{claim2}.
\end{claim_proof}

 Running this whole procedure again for $\hat{T}\in (T,T')$, which we fixed above gives us all the information we need to count the number of solutions to the $\#\textsc{T-PARTITION}$ instance we were given. We can solve $\#\textsc{T-PARTITION}$ in polynomial time with four calls to a $\textsc{SUM-PARTITION}$ oracle. Therefore $\textsc{SUM-PARTITION}$ is $\#P$-hard. 
\end{proof}

\noindent Now we prove that computing $\textrm{OptComp}$ is $\#P$-complete. 

\begin{proof}[Proof of Theorem $\ref{thm:hardness}$]
We have already shown that computing $\textrm{OptComp}$ is $\#P$-easy. Here we prove that it is also $\#P$-hard, thereby proving $\#P$-completeness.

We are given an instance $D$, $W=\{w_{1},\ldots,w_{k}\}, r\in\mathbb{Q},$ and $q$ of $\textsc{SUM-PARTITION}$, where $\forall i\in[k]$, $w_{i}$ is the $D$th root of a corresponding integer $z_{i}$, $\prod_{i}w_{i}\leq c$, and $q$ specifies the desired number of bits of precision in the output. If we disregard precision, we would like to set $\epsilon_{i}=\ln(w_{i})~ \forall i\in [k]$, $\delta_{1}=\delta_{2}=\ldots \delta_{k}=0$ and $\epsilon_{g}=\ln(r)$. Note that $\sum_{i}\epsilon_{i}=\ln\left(\prod_{i}w_{i}\right)\leq \ln(c)$. Since we can take $c$ to be an arbitrary constant greater than $1$, we can ensure that $\sum_{i}\epsilon_{i}\leq\epsilon$ for an arbitrary $\epsilon>0$.

Again we will use the version of $\mathrm{OptComp}$ that takes $\epsilon_{g}$ as input and outputs $\delta_{g}$. After using an $\mathrm{OptComp}$ oracle to find $\delta_{g}$ we know the optimal composition equation \ref{eq:1} from Theorem \ref{thm:optcomp} is satisfied:

\[
\frac{1}{\prod_{i=1}^{k}{(1+e^{\epsilon_{i}})}}
\sum_{S\subseteq \{1,\ldots,k\}}\max\left\{e^{\sum\limits_{i\in S}\epsilon_{i}} - e^{\epsilon_{g}}\cdot e^{\sum\limits_{i\not\in S}\epsilon_{i}}, 0\right\} = 1-\frac{1-\delta_{g}}{\prod_{i=1}^{k}{(1-\delta_{i})}}=\delta_{g}
\]

\noindent Thus we can compute:

\begin{align*}
\delta_{g}\cdot\prod_{i=1}^{k}{(1+e^{\epsilon_{i}})}&=\sum_{S\subseteq \{1,\ldots,k\}}\max\left\{e^{\sum\limits_{i\in S}\epsilon_{i}} - e^{\epsilon_{g}}\cdot e^{\sum\limits_{i\not\in S}\epsilon_{i}}, 0\right\}\\
&= \sum_{S\subseteq \{1,\ldots,k\}}\max\left\{\prod_{i\in S}w_{i} - r\cdot \prod_{i\not\in S}w_{i}, 0\right\}
\end{align*}

This last expression is exactly the solution to the instance of $\textsc{SUM-PARTITION}$ we were given. Taking precision into account, the input $\textsc{SUM-PARTITION}$ instance has an additional input $q$ that specifies the desired number of bits of precision in the output and we can only pass OptComp rational values so we will have to approximate $\epsilon_{i}=\ln(w_{i})$ for all $i$ and $\epsilon_{g}=\ln(r)$. Again there is a worry that when we approximate these values the set of partitions $S$ that make $\prod_{i\in S}w_{i}-r\cdot\prod_{i\not\in S}w_{i}>0$ might change. We want to get enough precision in our inputs so that the set of partitions over which we sum does not change and enough precision so that the output is accurate to $q$ bits. We will calculate the approximations required for each of these two goals separately and the final precision that we use will just be the maximum of the two. We prove that we can achieve both of these goals with the next two claims. 

\begin{claim}
\label{sameset}
There exists a polynomial $p(n)$ in the length $n$ of the input (the bit lengths of $z_{1},\ldots,z_{k}, q$, and the numerator and denominator of $r$) such that if $|w_{i}-w'_{i}|\leq 2^{-p(n)}$ for each $i$, then the set of partitions $S$ satisfying 
\[\prod_{i\in S}w_{i}-r\cdot\prod_{i\not\in S}w_{i}>0\]
is the same as the set of partitions satisfying
\[\prod_{i\in S}w'_{i}-r\cdot\prod_{i\not\in S}w'_{i}>0\]
\end{claim}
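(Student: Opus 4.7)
The plan is to combine an algebraic lower bound on $|f_S|:=\bigl|\prod_{i\in S}w_i - r\cdot\prod_{i\notin S}w_i\bigr|$ with a Lipschitz estimate showing that perturbing each $w_i$ by at most $\eta$ changes $f_S$ by only $2^{O(k)}\cdot\eta$. By the hypothesis of the preceding lemma $f_S\neq 0$ for every $S$, so once $\eta$ falls below the lower bound on $|f_S|$ divided by the Lipschitz constant, the sign of $f_S$ is preserved for every partition, and the two collections of partitions agree.

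\textbf{Lower bound on $|f_S|$.} I will write $r=s/t$ with $s,t\in\mathbb{Z}^+$, and set $a=\prod_{i\in S}w_i$ and $b=\prod_{i\notin S}w_i$. The key observation is that $a^D=\prod_{i\in S}z_i$ and $b^D=\prod_{i\notin S}z_i$ are positive integers, so
\[
(ta)^D-(sb)^D \;=\; t^D\prod_{i\in S}z_i - s^D\prod_{i\notin S}z_i
\]
is an integer, and it is nonzero whenever $f_S\neq 0$, hence of magnitude at least $1$. I would factor over the complex numbers with $\zeta=e^{2\pi i/D}$ to get $(ta)^D-(sb)^D=\prod_{k=0}^{D-1}(ta-\zeta^k sb)$, and use the triangle inequality to bound $|ta-\zeta^k sb|\leq ta+sb\leq(s+t)\cdot c$ for each $k$ (since $a,b\leq c$). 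Dividing the integer lower bound by the product of the $D-1$ complex factors,
\[
|f_S|\;=\;\frac{|ta-sb|}{t}\;\geq\;\frac{1}{t\cdot\bigl((s+t)\,c\bigr)^{D-1}}\;\geq\;2^{-p_1(n)}
\]
for a polynomial $p_1(n)$, since $D$ and the bit lengths of $s,t$ are polynomial in the input length $n$ produced by the reduction chain.

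\textbf{Lipschitz bound.} If $|w_i-w'_i|\leq\eta\leq 1$ for every $i$, then a standard telescoping estimate (using that all $w_i,w'_i$ lie in $[1,e+1]$) gives $\bigl|\prod_{i\in S}w_i-\prod_{i\in S}w'_i\bigr|\leq k(e+1)^{k-1}\eta$, and the same holds for the complementary product. Combining,
\[
|f_S-f'_S|\;\leq\;(1+r)\,k\,(e+1)^{k-1}\,\eta\;\leq\;2^{p_2(n)}\,\eta
\]
for a polynomial $p_2(n)$.

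\textbf{Putting it together.} Choosing $p(n):=p_1(n)+p_2(n)+1$, any $\eta\leq 2^{-p(n)}$ forces $|f_S-f'_S|<|f_S|/2$ simultaneously for every $S$, so $f'_S$ and $f_S$ share the same sign and the two sets $\{S:f_S>0\}$ and $\{S:f'_S>0\}$ coincide. I expect the main obstacle to be the lower bound on $|f_S|$: one needs a gap of size $2^{-\operatorname{poly}(n)}$, and the factoring of $x^D-y^D$ over $D$-th roots of unity is the essential trick that converts the integrality of $a^D,b^D$ into a usable bound on $|a-rb|$ itself; a naive approach based only on $|a^D-r^D b^D|\geq 1/t^D$ would not recover $|a-rb|$ without such a factorization.
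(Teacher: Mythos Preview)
Your proposal is correct and follows essentially the same structure as the paper's proof: both exploit that the SUM-PARTITION instances may be assumed to have no partitions with $f_S=0$, establish a $2^{-\mathrm{poly}(n)}$ lower bound on $|f_S|$ by using that the $D$-th powers $t^D\prod_{i\in S}z_i$ and $s^D\prod_{i\notin S}z_i$ are integers with gap at least $1$, and then conclude that a sufficiently small perturbation cannot change signs. The one technical difference is in how the integer gap on $D$-th powers is converted to a gap on the roots: the paper applies the mean value theorem to $x\mapsto x^{1/D}$ (its Fact~\ref{mvt}), while you factor $(ta)^D-(sb)^D$ over the $D$-th roots of unity and bound the $D-1$ complex factors---both tricks yield equivalent $2^{-\mathrm{poly}(n)}$ bounds, and your explicit Lipschitz step (which the paper leaves implicit) makes the argument slightly more complete.
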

\begin{claim_proof}
Recall that SUM-PARTITION is $\#$P-hard even on instances where there are no partitions $S$ such that $\prod_{i\in S}w_{i}=r\cdot\prod_{i\not\in S}w_{i}$ so we may assume our input instance of SUM-PARTITION has no such partitions and still prove the hardness of OptComp. So to ensure that we have enough precision such that the set over which we sum does not change, we must make the error smaller than the minimum possible (in absolute value) nonzero outcome of $\prod_{i\in S}w_{i}-r\cdot\prod_{i\not\in S}w_{i}$. We now bound this quantity. Let 
\[
\mathcal{S}=\left\{S\subseteq [k] \mid \prod_{i\in S}w_{i}\not=\prod_{i\not \in S}w_{i}\right\}
\]
Since $r$ is rational, $r=a/b$ for two integers $a$ and $b$. Let $a'=a^{D}$ and $b'=b^{D}$. Then:
\begin{align*}
    \min_{S\in \mathcal{S}}\left\{\left|\prod_{i\in S}w_{i}-r\cdot\prod_{i\not\in S}w_{i}\right|\right\}
     &= \min_{S\in \mathcal{S}}\left\{\left|\left(\prod_{i\in S}z_{i}\right)^{\frac{1}{D}}-\left(\frac{a'}{b'}\prod_{i\not\in S}z_{i}\right)^{\frac{1}{D}}\right|\right\}\\
    &\geq \min_{S\in \mathcal{S}}\left\{\left|\prod_{i\in S}z_{i}-\frac{a'}{b'}\prod_{i\not\in S}z_{i}\right|\cdot\frac{1}{D\left(\prod_{i\in [k]}z_{i}\right)^{(D-1)/D}}\right\}
\end{align*}
Where the last line follows from Fact \ref{mvt} applied to the function $f(x)=x^{1/D}$. $1/\left(\prod_{i\in [k]}z_{i}\right)^{(D-1)/D}$ is only exponentially small because $\prod_{i\in [k]}z_{i}$ is at most exponentially large in the bit length of the integers $z_{1},\ldots,z_{k}$. We claim that $\left|\prod_{i\in S}z_{i}-\frac{a'}{b'}\prod_{i\not\in S}z_{i}\right|$ is at least $1/b'$ for all $S\in\mathcal{S}$. Fix $S\in \mathcal{S}$: 

\begin{align*}
   \left| \prod_{i\in S}z_{i}-\frac{a'}{b'}\prod_{i\not\in S}z_{i}\right|=h &\implies  \left|b'\cdot\prod_{i\in S}z_{i}-a'\cdot\prod_{i\not\in S}z_{i}\right|=h\cdot b'\\
    &\implies h\geq 1/b'
\end{align*}
Where the last implication follows because $b'\cdot\prod_{i\in S}z_{i}-a'\cdot\prod_{i\not\in S}z_{i}$ is just a difference of integers so the closest nonzero value it can take on is $\pm 1$. 
\end{claim_proof}

\begin{claim}
There exists a polynomial $p(n)$ in the length $n$ of the input (the bit lengths of $z_{1},\ldots,z_{k}, q$, and the numerator and denominator of $r$) such that if $|w_{i}-w'_{i}|\leq 2^{-p(n)}$ for each $i$, then
\[
\left|\sum_{S\subseteq \{1,\ldots,k\}}\max\left\{\prod_{i\in S}w'_{i} - r\cdot \prod_{i\not\in S}w'_{i}, 0\right\}-\sum_{S\subseteq \{1,\ldots,k\}}\max\left\{\prod_{i\in S}w_{i} - r\cdot \prod_{i\not\in S}w_{i}, 0\right\}\right| \leq 2^{-q}
\]
\end{claim}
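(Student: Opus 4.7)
The plan is a straightforward Lipschitz-plus-telescoping argument. First, I would observe that $\max(\cdot,0)$ is $1$-Lipschitz, so by the triangle inequality it suffices to bound, for each subset $S$,
\[
\Bigl|\bigl(\tprod_{i\in S}w'_{i} - r\cdot\tprod_{i\not\in S}w'_{i}\bigr) - \bigl(\tprod_{i\in S}w_{i} - r\cdot\tprod_{i\not\in S}w_{i}\bigr)\Bigr|
\]
by the sum of two perturbation terms: $\bigl|\prod_{i\in S}w'_i-\prod_{i\in S}w_i\bigr|$ and $r\cdot\bigl|\prod_{i\not\in S}w'_i-\prod_{i\not\in S}w_i\bigr|$.

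Next, I would bound each product perturbation via the standard telescoping identity. Assuming $p(n)\geq 1$ so that each $w'_i\in[0,e+1]$, the identity
\[
\tprod_i w'_i - \tprod_i w_i \;=\; \sum_{j} (w'_j-w_j)\,\tprod_{i<j}w'_i\cdot\tprod_{i>j}w_i
\]
gives $\bigl|\prod_{i\in T}w'_i-\prod_{i\in T}w_i\bigr|\leq k\,(e+1)^{k-1}\cdot 2^{-p(n)}$ for any $T\subseteq[k]$. Therefore each of the $2^k$ terms in the outer sum changes by at most $(1+r)\cdot k\cdot (e+1)^{k-1}\cdot 2^{-p(n)}$, so the total difference is bounded by
\[
2^{k}\cdot(1+r)\cdot k\cdot (e+1)^{k-1}\cdot 2^{-p(n)}.
\]

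Finally, I would choose $p(n)$ large enough to make this quantity at most $2^{-q}$. Taking logs, this requires
\[
p(n)\;\geq\; q + k + \log_2 k + (k-1)\log_2(e+1) + \log_2(1+r),
\]
which is polynomial in $n$ since $k$, $q$, and the bit length of $r$ are all at most $n$.

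The only subtlety, which I do not expect to be hard, is the exponential $2^k$ blow-up from summing over all subsets of $[k]$; this is absorbed because the input specifies $k$ integers $z_1,\ldots,z_k$ in binary, forcing $k\leq n$, so the required precision $p(n)$ remains polynomial in the input length. No finer information about the structure of the summands is needed because we only want an additive $2^{-q}$ approximation and every quantity in sight is either bounded or polynomially bounded in bit length.
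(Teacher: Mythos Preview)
Your argument is correct, and it is cleaner than the paper's. The paper proves this claim by first invoking the preceding Claim (that with sufficient precision the set $S^{+}=\{S:\prod_{i\in S}w_i>r\prod_{i\notin S}w_i\}$ is unchanged under the perturbation), then dropping the $\max$ and bounding the difference of the two sums over the common index set $S^{+}$ via multiplicative perturbation estimates of the form $\prod_i w'_i\leq (1+2^{-p(n)})^{k}\prod_i w_i$. In other words, the paper spends the earlier claim precisely to avoid dealing with the nonlinearity of $\max(\cdot,0)$.

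Your observation that $x\mapsto\max\{x,0\}$ is $1$-Lipschitz makes that detour unnecessary: you bound each summand's change directly and never need to know whether the sign pattern is preserved. The telescoping bound on product perturbations is equivalent in strength to the paper's multiplicative bound, and your accounting for the $2^{k}$ subsets and the factor $(1+r)$ is the same in spirit as the paper's final step. The only thing the paper's route buys is that, having already proved set stability for other expository reasons, reusing it here is natural; your route buys a self-contained proof of the present claim that does not depend on the preceding one.
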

\begin{claim_proof}
 We will choose $p(n)=p_{1}(n)+p_{2}(n)$ where $p_{1}(n)$ is the polynomial that exists from Claim \ref{sameset} and $p_{2}(n)$ will be determined later. Define
\[
S^{+}=\left\{S\subseteq [k] \mid \prod_{i\in S}w_{i}-r\cdot\prod_{i\not\in S}w_{i}>0 \right\}
\]
Claim \ref{sameset} says that:
\[
S^{+}=\left\{S\subseteq [k] \mid \prod_{i\in S}w'_{i}-r\cdot\prod_{i\not\in S}w'_{i}>0 \right\}
\]

Now we can write 
\begin{align*}
   & \left|\sum_{S\subseteq \{1,\ldots,k\}}\max\left\{\prod_{i\in S}w'_{i} - r\cdot \prod_{i\not\in S}w'_{i}, 0\right\}-\sum_{S\subseteq \{1,\ldots,k\}}\max\left\{\prod_{i\in S}w_{i} - r\cdot \prod_{i\not\in S}w_{i}, 0\right\}\right| =\\
   &\left|\sum_{S\in S^{+}}\left(\prod_{i\in S}w'_{i} - r\cdot \prod_{i\not\in S}w'_{i}\right)-\sum_{S\in S^{+}}\left(\prod_{i\in S}w_{i} - r\cdot \prod_{i\not\in S}w_{i}\right)\right| = \\
   &\left|\sum_{S\in S^{+}}\left(\prod_{i\in S}w'_{i}-\prod_{i\in S}w_{i}\right)-\sum_{S\in S^{+}}r\cdot\left(\prod_{i\not\in S}w'_{i}-\prod_{i\not\in S}w_{i}\right)\right|\leq\\
   &\left|\sum_{S\in S^{+}}\left(\prod_{i\in S}w'_{i}-\prod_{i\in S}w_{i}\right)\right|+\left|\sum_{S\in S^{+}}r\cdot\left(\prod_{i\not\in S}w'_{i}-\prod_{i\not\in S}w_{i}\right)\right|
\end{align*}
Bounding each term in the final expression above by $2^{-(q+1)}$ then gives us the accuracy we want. We will show directly how to bound the second term and the argument for the first term follows symmetrically. By hypothesis we have that for all $S\subseteq[k]$:
\begin{align*}
\prod_{i\not\in S}w_{i}'&\leq \prod_{i\not\in S}\left(w_{i}+2^{-p(n)}\right)\\
&\leq \prod_{i\not\in S}\left(1+2^{-p(n)}\right)w_{i} \\
&\leq \left(1+2^{-p(n)}\right)^{k}\cdot\prod_{i\not\in S}w_{i}
\end{align*}
and similarly 
\[
\prod_{i\not\in S}w_{i}'\geq \left(1-2^{-p(n)}\right)^{k}\cdot\prod_{i\not\in S}w_{i}
\]
It follows that for all $S\subseteq[k]$:
\[
\left(\left(1-2^{-p(n)}\right)^{k}-1\right)\cdot \prod_{i\not\in S}w_{i}\leq \left(\prod_{i\not\in S}w'_{i}-\prod_{i\not\in S}w_{i}\right)\leq \left(\left(1+2^{-p(n)}\right)^{k}-1\right)\cdot \prod_{i\not\in S}w_{i}
\]
Since $|S^{+}|\leq 2^{k}$ and $1\leq \prod_{i\not\in S}w_{i}\leq c$ for all $S$ we get: 
\[
2^{k}\cdot r\cdot\left(\left(1-2^{-p(n)}\right)^{k}-1\right)\cdot\leq \sum_{S\in S^{+}}r\cdot\left(\prod_{i\not\in S}w'_{i}-\prod_{i\not\in S}w_{i}\right)\leq 2^{k}\cdot r\cdot\left(\left(1+2^{-p(n)}\right)^{k}-1\right)\cdot c
\]
Picking $p_{2}(n)$ such that $p(n)=p_{1}(n)+p_{2}(n)> 2k+\log (rc)+q+1$ then suffices to bound the absolute value of the sum by $2^{-(q+1)}$. Repeating the same calculation for $\sum_{S\in S^{+}}\left(\prod_{i\in S}w'_{i}-\prod_{i\in S}w_{i}\right)$ will yield the same approximation except without the factor of $r$. So we can bound both terms by $2^{-(q+1)}$ (and therefore their sum by $2^{-q}$) by approximating each $w_{i}$ to a precision that is polynomial in $n$, which proves the claim. 
\end{claim_proof}

 So by the two claims above we can get an approximation of the $\textsc{SUM-PARTITION}$ instance to $q$ bits of precision in polynomial time with access to an $\mathrm{OptComp}$ oracle. Therefore computing $\mathrm{OptComp}$ is $\#P$-hard.
\end{proof}
\end{section}

\begin{section}{Approximation of OptComp}
Although we cannot hope to efficiently compute the optimal composition for a general set of differentially private algorithms (assuming P$\not=$NP or even FP$\not=\#$P), we show in this section that we can approximate $\mathrm{OptComp}$ to arbitrary precision in polynomial time.

\begin{namedtheorem}[Theorem \ref{thm:approx} (restated)]
There is a polynomial-time algorithm that given rational $\epsilon_{1},\ldots,\epsilon_{k}\geq 0, \delta_{1},\ldots \delta_{k}, \delta_{g} \in [0,1),$ and $\eta\in (0,1)$, outputs $\epsilon^{*}$ satisfying 
\[\mathrm{OptComp}((\epsilon_{1},\delta_{1}),\ldots,(\epsilon_{k},\delta_{k}), \delta_{g})\leq \epsilon^{*}\leq\mathrm{OptComp}((\epsilon_{1},\delta_{1}),\ldots,(\epsilon_{k},\delta_{k}), e^{-\eta/2}\cdot\delta_{g})+\eta\]
The algorithm runs in time \[O\left(\frac{k^{3}\cdot\ebar\cdot(1+\ebar)}{\eta}\cdot \log\left(\frac{k^{2}\cdot\ebar\cdot(1+\ebar)}{\eta}\right)\right)\]
 where $\ebar=\sum_{i\in [k]}\epsilon_{i}/k$, assuming constant-time arithmetic operations.
\end{namedtheorem}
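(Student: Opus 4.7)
By Theorem~\ref{thm:optcomp}, $\mathrm{OptComp}((\epsilon_i,\delta_i)_i,\delta_g)$ is the least $\epsilon_g\geq 0$ for which $L(\epsilon_g)/Z\leq 1-(1-\delta_g)/\prod_i(1-\delta_i)$, where $Z=\prod_i(1+e^{\epsilon_i})$ and
\[
L(\epsilon_g) \;=\; \sum_{S\subseteq[k]}\max\bigl\{e^{a_S}-e^{\epsilon_g}\cdot e^{E-a_S},\;0\bigr\}, \qquad a_S=\sum_{i\in S}\epsilon_i,\quad E=\sum_i\epsilon_i.
\]
My plan is to approximate $L(\epsilon_g)$ to within a multiplicative $e^{\eta/2}$ factor by a knapsack-style dynamic program in the spirit of Dyer, and then to binary-search $\epsilon_g$ on a grid of spacing $\eta$. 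A one-sided multiplicative slack on $L$ is exactly what transfers onto the right-hand side of the inequality and gives rise to the $e^{-\eta/2}\delta_g$ in the theorem's guarantee; the additive $\eta$ slack comes from the binary search.

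For the approximation, set $\Delta=\eta/(ck)$ for a suitable absolute constant $c$, write $n_i=\lfloor\epsilon_i/\Delta\rfloor$ and $\epsilon_i'=n_i\Delta$. Then $|a_S-a_S'|\leq k\Delta\leq\eta/c$, so replacing each $\epsilon_i$ by $\epsilon_i'$ distorts every summand by a factor in $[e^{-\eta/c},e^{\eta/c}]$. The key observation is that $a_S'$ depends on $S$ only through the integer $j=\sum_{i\in S}n_i\in\{0,\ldots,N\}$ with $N=\sum_i n_i=O(k^2\ebar/\eta)$. Hence
\[
\widehat{L}(\epsilon_g) \;=\; \sum_{j=0}^{N} C[k,j]\cdot \max\bigl\{e^{j\Delta}-e^{\epsilon_g+E'-j\Delta},\;0\bigr\}, \qquad E'=\sum_i\epsilon_i',
\]
where $C[k,j]=|\{S\subseteq[k]:\sum_{i\in S}n_i=j\}|$ is filled in by the standard recurrence $C[i,j]=C[i-1,j]+C[i-1,j-n_i]$ in $O(kN)=O(k^3\ebar/\eta)$ arithmetic operations. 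Critically, the table $C[\cdot,\cdot]$ is \emph{independent} of $\epsilon_g$, so it is built once; each binary-search query then takes only $O(N)$ time to scan the table and compare $\widehat{L}(\epsilon_g)/Z$ against $1-(1-\delta_g)/\prod_i(1-\delta_i)$. Binary-searching $\epsilon_g$ over $[0,E]=[0,k\ebar]$ with spacing $\eta$ costs $O(\log(k^2\ebar(1+\ebar)/\eta))$ iterations, which contributes the logarithmic factor and whose interval-length dependence on $1+\ebar$ accounts for the corresponding factor in the overall running time.

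The step I expect to be most delicate is the asymmetric error bookkeeping demanded by the theorem: the returned $\epsilon^{*}$ must simultaneously be a true upper bound on $\mathrm{OptComp}(\ldots,\delta_g)$ and be no larger than $\mathrm{OptComp}(\ldots,e^{-\eta/2}\delta_g)+\eta$. This requires (i) translating the one-sided multiplicative error in $\widehat{L}$ cleanly into an $e^{-\eta/2}$ rescaling of $1-(1-\delta_g)/\prod_i(1-\delta_i)$, which is precisely the transformation that shifts $\delta_g$ down to $e^{-\eta/2}\delta_g$; (ii) absorbing the integer rounding $\epsilon_i\mapsto\epsilon_i'$, the shift from $E$ to $E'$, and the rational truncation of $e^{\epsilon_i}$ that a discrete machine is forced to carry out, all into the same constant $c$; and (iii) verifying that the $\eta$-spaced binary-search grid contributes at most an additive $\eta$ overshoot on top of the discretized threshold. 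Getting all of these constants to line up in concert, rather than any individual piece, is where I expect to spend most of the effort.
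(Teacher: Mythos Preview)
Your overall strategy---discretize the $\epsilon_i$, run a knapsack-style dynamic program, and binary-search over $\epsilon_g$---matches the paper's. But the error accounting has a real gap. The claim that replacing each $\epsilon_i$ by $\epsilon_i'$ ``distorts every summand by a factor in $[e^{-\eta/c},e^{\eta/c}]$'' is false: a summand has the form $\max\{X-Y,0\}$ with $X=e^{a_S}$ and $Y=e^{\epsilon_g+E-a_S}$, and scaling $X,Y$ each by factors in $[e^{-k\Delta},e^{k\Delta}]$ does \emph{not} scale $X-Y$ by a factor in that range (if $X$ is only slightly larger than $Y$, the perturbed difference can collapse to $0$). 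Nor does a multiplicative bound hold for the whole sum $L$ at a fixed $\epsilon_g$: already for $k=1$ with $\epsilon_g<\epsilon_1$ and $\epsilon_1'<\epsilon_g$, one has $L(\epsilon_g)>0$ while $\widehat L(\epsilon_g)=0$. Since you take $n_i=\lfloor\epsilon_i/\Delta\rfloor$, your $\widehat L$ is exactly the $L$-function of the \emph{easier} instance $(\epsilon_i',\delta_i)$ with $\epsilon_i'\leq\epsilon_i$; hence $\mathrm{OptComp}((\epsilon_i',\delta_i),\delta_g)\leq\mathrm{OptComp}((\epsilon_i,\delta_i),\delta_g)$, and your output $\epsilon^{*}$ is not guaranteed to satisfy the first inequality $\epsilon^{*}\geq\mathrm{OptComp}((\epsilon_i,\delta_i),\delta_g)$.

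The paper repairs this in two moves. First, it rounds \emph{up}: with $\epsilon_0=\ln(1+\beta)$ and $a_i=\lceil\epsilon_i(1/\beta+1)\rceil$, one gets $\epsilon_i'=a_i\epsilon_0\geq\epsilon_i$, so by monotonicity $\mathrm{OptComp}((\epsilon_i',\delta_i),\delta_g)\geq\mathrm{OptComp}((\epsilon_i,\delta_i),\delta_g)$ and the lower bound on $\epsilon^{*}$ is free. Second, for the upper bound the paper does \emph{not} compare $L$ at the same $\epsilon_g$; its perturbation lemma (Lemma~\ref{lem:goodapprox}) shows
\[
\mathrm{OptComp}((\epsilon_i+c_i,\delta_i),\delta_g)\;\leq\;\mathrm{OptComp}((\epsilon_i,\delta_i),e^{-c/2}\delta_g)+c,\qquad c=\textstyle\sum_i c_i,
\]
and the term-by-term proof works precisely because $\epsilon_g$ is shifted by $c$ \emph{in lockstep} with the $\epsilon_i$'s, which keeps the zero/nonzero pattern of the summands aligned. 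Both the $e^{-\eta/2}\delta_g$ and the additive $\eta$ in the theorem come from this lemma (choosing $c\leq\eta$), not from a multiplicative $L$-error plus a separate binary-search error as you outline. If you prefer floor rounding, you can still make it work, but you must compare against the threshold for $e^{-c/2}\delta_g$ and add $c$ to the output; what does not work is asserting a multiplicative $L$-bound at fixed $\epsilon_g$.
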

We prove Theorem \ref{thm:approx} using the following three lemmas:

\begin{lemma}
\label{lem:pseudo}
Given non-negative integers $a_{1},\ldots,a_{k}$, $B$ and weights $w_{1},\ldots,w_{k}\in\mathbb{Q}$, one can compute 
\[\sum_{\substack {S\subseteq [k]~\mathrm{s.t.}\\
\sum\limits_{i\in S}a_{i}\leq B~~}}\prod_{i\in S}w_{i}\] 
in time $O(Bk)$.
\end{lemma}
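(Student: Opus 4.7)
The plan is to use a standard dynamic programming approach, essentially the same as the one used for the decision version of \textsc{Subset-Sum}/\textsc{Knapsack}. I would define a two-dimensional table
\[
h(j,b) = \sum_{\substack{S\subseteq [j]\text{ s.t.}\\ \sum_{i\in S}a_i \leq b}} \prod_{i\in S} w_i
\]
for $0\leq j\leq k$ and $0\leq b\leq B$. By construction, the quantity we want to compute is exactly $h(k,B)$.

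The next step is to establish a recurrence by partitioning the subsets $S\subseteq [j]$ according to whether $j\in S$. Subsets $S$ with $j\notin S$ are exactly the subsets of $[j-1]$ with weighted sum at most $b$, contributing $h(j-1,b)$. Subsets $S$ with $j\in S$ are in bijection with subsets $S'\subseteq [j-1]$ satisfying $\sum_{i\in S'} a_i \leq b-a_j$, and $\prod_{i\in S} w_i = w_j\prod_{i\in S'} w_i$; this is only possible when $b\geq a_j$. Thus
\[
h(j,b) = \begin{cases} h(j-1,b) + w_j\cdot h(j-1,b-a_j) & \text{if } b\geq a_j,\\ h(j-1,b) & \text{if } b < a_j,\end{cases}
\]
with base case $h(0,b)=1$ for every $b\geq 0$ (only the empty set contributes, with empty product equal to $1$). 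Because the $a_i$ are non-negative integers and $b$ ranges over integers in $\{0,1,\ldots,B\}$, the recurrence stays within the table indices we have defined.

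Filling the table in order of increasing $j$ takes one addition and one multiplication per cell, and there are $(k+1)(B+1)$ cells, for a total running time of $O(kB)$ under the stated constant-time arithmetic assumption. Since the lemma is a textbook dynamic program over rationals, there is no real obstacle; the only points that merit a line in the write-up are the verification of the recurrence from the case split on $j\in S$ and the observation that non-negativity and integrality of the $a_i$ keep the table size bounded by $O(kB)$.
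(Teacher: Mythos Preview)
Your proposal is correct and matches the paper's proof essentially line for line: the paper defines the same table $F(r,s)=h(r,s)$, derives the identical recurrence by splitting on membership of the last index, uses the same base case $F(0,s)=1$, and concludes the $O(Bk)$ bound from the table size. There is nothing to add.
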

Notice that the constraint in Lemma \ref{lem:pseudo} is the same one that characterizes knapsack problems. Indeed, the algorithm we give for computing $\sum_{S\subseteq [k]}\prod_{i\in S}w_{i}$ is a slight modification of the known pseudo-polynomial time algorithm for counting knapsack solutions, which uses dynamic programming. Next we show that we can use this algorithm to approximate $\mathrm{OptComp}$. 

\begin{lemma}
\label{lem:dyer}
Given a rational $e^{\epsilon_{0}}$ with $\epsilon_{0}\geq 0$ and $\epsilon_{1}=a_{1}\cdot\epsilon_{0},\ldots,\epsilon_{k}=a_{k}\cdot\epsilon_{0},\epsilon^{*}=a^{*}\cdot\epsilon_{0}$ for positive integers $a_{1},\ldots,a_{k},a^{*}$ (given as input), and rational $\delta_{1},\ldots \delta_{k}, \delta_{g} \in [0,1)$, there is an algorithm that determines whether or not $\mathrm{OptComp}((\epsilon_{1},\delta_{1}),\ldots,(\epsilon_{k},\delta_{k}),\delta_{g}) \leq \epsilon^{*}$ and runs in time $O\left(k\cdot\sum_{i=1}^{k}a_{i}\right)$ assuming constant-time arithmetic operations.
\end{lemma}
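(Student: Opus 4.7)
The plan is to reduce the decision problem to two applications of Lemma~\ref{lem:pseudo}. By Theorem~\ref{thm:optcomp}, deciding whether $\mathrm{OptComp}((\epsilon_{1},\delta_{1}),\ldots,(\epsilon_{k},\delta_{k}),\delta_{g})\leq\epsilon^{*}$ amounts to checking
\[
\frac{1}{\prod_{i=1}^{k}(1+e^{\epsilon_{i}})}\sum_{S\subseteq[k]}\max\!\left\{e^{\sum_{i\in S}\epsilon_{i}}-e^{\epsilon^{*}}e^{\sum_{i\notin S}\epsilon_{i}},\,0\right\} \;\leq\; 1-\frac{1-\delta_{g}}{\prod_{i=1}^{k}(1-\delta_{i})}.
\]
The right-hand side is a rational expression computable in $O(k)$ time, so the whole task reduces to evaluating the sum on the left. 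The key observation is that because $\epsilon_{i}=a_{i}\epsilon_{0}$ and $\epsilon^{*}=a^{*}\epsilon_{0}$ with $a_{i},a^{*}\in\mathbb{Z}_{\geq 0}$, each exponential depends on $S$ only through the integer $n_{S}:=\sum_{i\in S}a_{i}$. Writing $A:=\sum_{i=1}^{k}a_{i}$, the inequality $e^{\epsilon_{0}n_{S}}>e^{\epsilon^{*}}e^{\epsilon_{0}(A-n_{S})}$ is equivalent to $n_{S}>(A+a^{*})/2$. So setting $t^{*}:=\lceil(A+a^{*})/2\rceil$, the inner sum equals
\[
\sum_{S\,:\,n_{S}\geq t^{*}}\!\!\bigl(e^{\epsilon_{0}n_{S}}-e^{\epsilon^{*}}e^{\epsilon_{0}(A-n_{S})}\bigr).
\]

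Next I would split this into two pieces and evaluate each via Lemma~\ref{lem:pseudo}, choosing the rational weights $w_{i}:=e^{\epsilon_{i}}=(e^{\epsilon_{0}})^{a_{i}}$ (which are rational since $e^{\epsilon_{0}}$ is given as rational) so that $\prod_{i\in S}w_{i}=e^{\epsilon_{0}n_{S}}$. For the first piece, I use the identity
\[
\sum_{S\,:\,n_{S}\geq t^{*}}e^{\epsilon_{0}n_{S}} \;=\; \prod_{i=1}^{k}(1+e^{\epsilon_{i}}) \;-\; \sum_{S\,:\,n_{S}\leq t^{*}-1}e^{\epsilon_{0}n_{S}},
\]
where the subtracted term is exactly what Lemma~\ref{lem:pseudo} computes (with bound $B=t^{*}-1$ and weights $w_{i}$), and the product is trivial to compute in $O(k)$ time. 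For the second piece, I apply complementation $S\mapsto[k]\setminus S$, under which $n_{S}\geq t^{*}$ becomes $n_{[k]\setminus S}\leq A-t^{*}$ and $e^{\epsilon_{0}(A-n_{S})}$ becomes $e^{\epsilon_{0}n_{[k]\setminus S}}$, giving
\[
\sum_{S\,:\,n_{S}\geq t^{*}}e^{\epsilon_{0}(A-n_{S})} \;=\; \sum_{S'\,:\,n_{S'}\leq A-t^{*}}e^{\epsilon_{0}n_{S'}},
\]
which is again a direct application of Lemma~\ref{lem:pseudo} (with bound $B=A-t^{*}$ and the same weights $w_{i}$). Combining the two pieces with the constant $e^{\epsilon^{*}}$ and dividing by $\prod_{i}(1+e^{\epsilon_{i}})$ gives the left-hand side exactly, and then I compare with the right-hand side.

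The running-time accounting is immediate: each of the two invocations of Lemma~\ref{lem:pseudo} runs in $O(Bk)$ time with $B\leq A=\sum_{i}a_{i}$, so the total running time is $O\!\left(k\cdot\sum_{i=1}^{k}a_{i}\right)$ as claimed, plus $O(k)$ arithmetic operations to assemble the final comparison. There is no real obstacle once one spots the complementation trick for handling the $e^{\epsilon_{0}(A-n_{S})}$ terms; the only care needed is verifying that $t^{*}=\lceil(A+a^{*})/2\rceil$ is the correct threshold (so that including or excluding the boundary case $n_{S}=(A+a^{*})/2$, which contributes $0$ to the $\max$, does not affect the sum) and that the weights $e^{\epsilon_{i}}$ indeed remain rational so Lemma~\ref{lem:pseudo} applies verbatim.
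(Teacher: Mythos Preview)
Your proposal is correct and follows essentially the same approach as the paper: reduce to the inequality of Theorem~\ref{thm:optcomp}, rewrite the constraint as an integer knapsack bound, and evaluate the two resulting weighted subset sums via Lemma~\ref{lem:pseudo}. The only cosmetic difference is that the paper takes $T=S^{\mathsf{c}}$ up front and then calls Lemma~\ref{lem:pseudo} once with weights $e^{-\epsilon_{i}}$ and once with weights $e^{\epsilon_{i}}$ (same bound $B=\lfloor(\sum_{i}a_{i}-a^{*})/2\rfloor$ both times), whereas you keep weights $e^{\epsilon_{i}}$ throughout and handle the first piece by subtracting a bounded sum from the full product; both variants are equivalent and yield the same running time.
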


In other words, if the $\epsilon$ values we are given are all integer multiples of some $\epsilon_{0}$ where $e^{\epsilon_{0}}$ is rational, we can determine whether or not the composition of those privacy parameters is $(a^{*}\cdot\epsilon_{0},\delta_{g})$-DP in pseudo-polynomial time, for every positive integer $a^{*}$. Running binary search over integers $a^{*}$, we can find the minimum such integer. When $\epsilon_{0}$ is small, this gives us a good overestimate of the optimal composition of the discrete input privacy parameters. This means that given any inputs $(\epsilon_{1},\delta_{1}),\ldots,(\epsilon_{k},\delta_{k}),\delta_{g}$ to OptComp, we can discretize and polynomially bound the $\epsilon_{i}$ values to new values $\epsilon_{i}'$ for all $i\in[k]$ and use Lemma \ref{lem:dyer} to approximate OptComp$((\epsilon_{1}',\delta_{1}),\ldots,(\epsilon_{k}',\delta_{k}),\delta_{g})$. The next lemma tells us that this is also a good approximation of OptComp$((\epsilon_{1},\delta_{1}),\ldots,(\epsilon_{k},\delta_{k}),\delta_{g})$. 

\begin{lemma}
\label{lem:goodapprox}
For all $\epsilon_{1},\ldots,\epsilon_{k}, c_{1},\ldots, c_{k}\geq0$ and $\delta_{1},\ldots,\delta_{k},\delta_{g}\in [0,1)$:
\[
\mathrm{OptComp}((\epsilon_{1}+c_{1},\delta_{1}),\ldots,(\epsilon_{k}+c_{k},\delta_{k}),\delta_{g})\leq
\mathrm{OptComp}((\epsilon_{1},\delta_{1}),\ldots,(\epsilon_{k},\delta_{k}),e^{-c/2}\cdot\delta_{g})+c
\]
where $c=\sum_{i=1}^{k}c_{i}$
\end{lemma}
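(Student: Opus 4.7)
The plan is to apply Theorem~\ref{thm:optcomp} to both sides and verify that $\epsilon^{*}+c$ satisfies the OptComp condition at the shifted parameters $((\epsilon_i+c_i,\delta_i))_i$ with threshold $\delta_g$, where $\epsilon^{*}:=\mathrm{OptComp}((\epsilon_i,\delta_i)_i,\,e^{-c/2}\delta_g)$. First I would dispose of the degenerate case $e^{-c/2}\delta_g<1-\prod_i(1-\delta_i)$, in which $\epsilon^{*}=\infty$ by the $\epsilon_g\to\infty$ limit of Equation~\eqref{eq:1}, so the conclusion is vacuous (similarly if $\delta_g<1-\prod_i(1-\delta_i)$ then the left-hand side is also infinite). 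In the remaining case $\epsilon^{*}<\infty$, Theorem~\ref{thm:optcomp} hands me a usable bound on what I will call the ``old'' LHS, namely $\prod_i(1+e^{\epsilon_i})^{-1}\sum_S\max\{A_S-e^{\epsilon^{*}}B_S,0\}$ with $A_S:=e^{\sum_{i\in S}\epsilon_i}$ and $B_S:=e^{\sum_{i\notin S}\epsilon_i}$.

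The core is a term-by-term comparison between this old sum and the ``new'' sum (at parameters $\epsilon_i+c_i$ and $\epsilon^{*}+c$). Writing $c_S:=\sum_{i\in S}c_i$ and $c_{\bar S}:=c-c_S$, the new summand for a subset $S$ factors as
\[
\max\!\bigl\{e^{c_S}A_S-e^{\epsilon^{*}+c_S+2c_{\bar S}}B_S,\,0\bigr\}=e^{c_S}\max\!\bigl\{A_S-e^{\epsilon^{*}+2c_{\bar S}}B_S,\,0\bigr\}\leq e^{c_S}\max\!\bigl\{A_S-e^{\epsilon^{*}}B_S,\,0\bigr\},
\]
using $c=c_S+c_{\bar S}$ for the first equality and $c_{\bar S}\geq 0$ for the inequality. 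Since $e^{c_S}\leq e^c$, summing over $S$ bounds the new sum by $e^c$ times the old sum.

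In the denominator I would exploit a saving in the opposite direction. For each $i$ separately, the symmetric algebraic identity $(e^{c_i/2}-1)(e^{\epsilon_i+c_i/2}-1)\geq 0$ rearranges to $1+e^{\epsilon_i+c_i}\geq e^{c_i/2}(1+e^{\epsilon_i})$, so taking products gives $\prod_i(1+e^{\epsilon_i+c_i})\geq e^{c/2}\prod_i(1+e^{\epsilon_i})$. Dividing the numerator bound by the denominator bound then yields the clean consolidated ratio: the new LHS of Theorem~\ref{thm:optcomp} is at most $e^{c/2}$ times the old LHS.

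To finish, I would plug in the bound on the old LHS coming from the definition of $\epsilon^{*}$. Letting $\Delta:=1-\prod_i(1-\delta_i)$, the old Theorem~\ref{thm:optcomp} RHS is $(e^{-c/2}\delta_g-\Delta)/(1-\Delta)$, so
\[
\text{new LHS}\ \leq\ e^{c/2}\cdot\frac{e^{-c/2}\delta_g-\Delta}{1-\Delta}\ =\ \frac{\delta_g-e^{c/2}\Delta}{1-\Delta}\ \leq\ \frac{\delta_g-\Delta}{1-\Delta},
\]
where the last inequality uses $e^{c/2}\Delta\geq\Delta$. The right side is exactly the new Theorem~\ref{thm:optcomp} RHS at threshold $\delta_g$, so $\epsilon^{*}+c$ satisfies the defining condition, giving the desired upper bound on $\mathrm{OptComp}((\epsilon_i+c_i,\delta_i)_i,\delta_g)$. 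The main obstacle I expect is spotting the symmetric factorization $(e^{c_i/2}-1)(e^{\epsilon_i+c_i/2}-1)\geq 0$ that produces the tight $e^{c/2}$ saving in the denominator; this identity is also what makes the precise relaxation $\delta_g\mapsto e^{-c/2}\delta_g$ (and no worse factor) exactly sufficient in the statement.
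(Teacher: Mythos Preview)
Your proof is correct and follows essentially the same route as the paper: both apply Theorem~\ref{thm:optcomp} at $\epsilon^{*}=\mathrm{OptComp}((\epsilon_i,\delta_i)_i,e^{-c/2}\delta_g)$, bound the new sum term-by-term by $e^{c}$ times the old sum, use the inequality $(1+e^{\epsilon_i+c_i})\geq e^{c_i/2}(1+e^{\epsilon_i})$ to recover a factor of $e^{c/2}$ in the denominator, and finish with the same algebra on the right-hand side. Your version is slightly more explicit in two places---you handle the degenerate case where $\epsilon^{*}=\infty$, and you supply the factorization $(e^{c_i/2}-1)(e^{\epsilon_i+c_i/2}-1)\geq 0$ justifying the denominator inequality that the paper merely states---but the argument is the same.
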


\noindent Next we prove the three lemmas and then show that Theorem \ref{thm:approx} follows.

\begin{proof}[Proof of Lemma $\ref{lem:pseudo}$]
We modify Dyer's algorithm for approximately counting solutions to knapsack problems \cite{Dye03}. The algorithm uses dynamic programming. Given non-negative integers $a_{1},\ldots,a_{k}$, $B$, and weights $w_{1},\ldots,w_{k}\in\mathbb{Q}$, define 
\[F(r,s) = \sum_{\substack {S\subseteq [r]~\mathrm{s.t.}\\
\sum\limits_{i\in S}a_{i}\leq s~~}}\prod_{i\in S}w_{i}\]

We want to compute $F(k,B)$. We can find this by tabulating $F(r,s)$ for $(0\leq r\leq k,~ 0\leq s\leq B)$ using the recursion:
\[
F(r,s)=
\begin{dcases}
1 &\mathrm{if}~r=0\\
F(r-1,s) + w_{r}F(r-1, s-a_{r}) &\mathrm{if}~r>0~\mathrm{and}~a_{r}\leq s\\
F(r-1,s) &\mathrm{if}~r>0~\mathrm{and}~~a_{r}>s
\end{dcases}
\]

Each cell $F(r,s)$ in the table can be computed in constant time given earlier cells $F(r',s')$ where $r'<r$. Thus filling the entire table takes time $O(Bk)$. 
\end{proof}

\begin{proof}[Proof of Lemma $\ref{lem:dyer}$]
Given a rational $e^{\epsilon_{0}}\geq 0$ and $\epsilon_{1}=a_{1}\cdot\epsilon_{0},\ldots,\epsilon_{k}=a_{k}\cdot\epsilon_{0}, \epsilon^{*}=a^{*}\cdot\epsilon_{0}$ for positive integers $a_{1},\ldots,a_{k}, a^{*}$ and rational $\delta_{1},\ldots \delta_{k}, \delta_{g} \in [0,1)$ Theorem \ref{thm:optcomp} tells us that answering whether or not
\[\mathrm{OptComp}((\epsilon_{1},\delta_{1}),\ldots,(\epsilon_{k},\delta_{k}),\delta_{g}) \leq \epsilon^{*}\]
is equivalent to answering whether or not the following inequality holds: 

\begin{equation}
\label{q}
\frac{1}{\prod_{i=1}^{k}{(1+e^{\epsilon_{i}})}}
\sum_{S\subseteq \{1,\ldots,k\}}\max\left\{e^{\sum\limits_{i\in S}\epsilon_{i}} - e^{\epsilon^{*}}\cdot e^{\sum\limits_{i\not\in S}\epsilon_{i}}, 0\right\} \leq 1-\frac{1-\delta_{g}}{\prod_{i=1}^{k}{(1-\delta_{i})}} 
\end{equation}

The right-hand side and $\prod_{i=1}^{k}(1+e^{\epsilon_{i}})$ are easy to compute given the inputs (note that $e^{\epsilon_{i}}$ is rational for all $i\in [k]$ because each is an integer power of $e^{\epsilon_{0}}$). So in order to check the inequality, we will show how to compute the sum. Define 

\begin{align*}
K&=\left\{T\subseteq[k] \mid \sum\limits_{i\not\in T}\epsilon_{i}\geq \epsilon^{*}+\sum\limits_{i\in T}\epsilon_{i}\right\}\\
&=\left\{T\subseteq[k] \mid \sum\limits_{i\in T}\epsilon_{i}\leq \left(\sum\limits_{i=1}^{k}\epsilon_{i}-\epsilon^{*}\right)/2\right\}\\
&=\left\{T\subseteq[k] \mid \sum\limits_{i\in T}a_{i}\leq B\right\}~\mathrm{for}~B=\left\lfloor{\left(\sum\limits_{i=1}^{k}a_{i}-a^{*}\right)/2}\right\rfloor
\end{align*}
and observe that by setting $T=S^{\mathsf{c}}$, we have
\[
\sum_{S\subseteq \{1,\ldots,k\}}\max\left\{e^{\sum\limits_{i\in S}\epsilon_{i}} - e^{\epsilon^{*}}\cdot e^{\sum\limits_{i\not\in S}\epsilon_{i}}, 0\right\} = \sum_{T\in K}\left(\left(\prod_{i=1}^{k}e^{\epsilon_{i}}\cdot\prod\limits_{i\in T}e^{-\epsilon_{i}}\right)-\left(e^{\epsilon^{*}}\cdot\prod\limits_{i\in T}e^{\epsilon_{i}}\right)\right) 
\]

We can now use Lemma \ref{lem:pseudo} to compute each term separately since $K$ is a set of knapsack solutions. Specifically, setting $w_{i}=e^{-\epsilon_{i}}~\forall i\in [k]$, Lemma \ref{lem:pseudo} tells us that we can compute $\sum_{T\subseteq [k]}\prod_{i\in T}w_{i}$ subject to $\sum_{i\in T}a_{i}\leq B$, which is equivalent to $\sum_{T\in K}\prod_{i\in T}e^{-\epsilon_{i}}$. To compute $\sum_{T\in K}\prod_{i\in T}e^{\epsilon_{i}}$, we instead set $w_{i}=e^{\epsilon_{i}}$ and run the same procedure. (Note that $e^{\epsilon^{*}}=(e^{\epsilon_{0}})^{a^{*}}$, which is rational.) So we can determine whether or not Inequality \ref{q} holds. We used the algorithm from Lemma \ref{lem:pseudo} so the running time is $O(Bk)=O\left(k\cdot\sum_{i=1}^{k}a_{i}\right)$ 
\end{proof}

\begin{proof}[Proof of Lemma $\ref{lem:goodapprox}$]
Fix $\epsilon_{1},\ldots,\epsilon_{k}, c_{1},\ldots, c_{k}\geq0$ and $\delta_{1},\ldots,\delta_{k},\delta_{g}\in [0,1)$ and let $c=\sum_{i\in [k]}c_{i}$. Let $\mathrm{OptComp}((\epsilon_{1},\delta_{1}),\ldots,(\epsilon_{k},\delta_{k}),e^{-c/2}\cdot\delta_{g})=\epsilon_{g}$. From Equation \ref{eq:1} in Theorem \ref{thm:optcomp} we know:
\[
\frac{1}{\prod_{i=1}^{k}{(1+e^{\epsilon_{i}})}}
\sum_{S\subseteq \{1,\ldots,k\}}\max\left\{e^{\sum\limits_{i\in S}\epsilon_{i}} - e^{\epsilon_{g}}\cdot e^{\sum\limits_{i\not\in S}\epsilon_{i}}, 0\right\} \leq 1-\frac{1-e^{-c/2}\cdot\delta_{g}}{\prod_{i=1}^{k}{(1-\delta_{i})}} 
\]
Multiplying both sides by $e^{c/2}$ gives:
\begin{align*}
\frac{e^{c/2}}{\prod_{i=1}^{k}{(1+e^{\epsilon_{i}})}}
\sum_{S\subseteq \{1,\ldots,k\}}\max\left\{e^{\sum\limits_{i\in S}\epsilon_{i}} - e^{\epsilon_{g}}\cdot e^{\sum\limits_{i\not\in S}\epsilon_{i}}, 0\right\} &\leq e^{c/2}\cdot\left(1-\frac{1-e^{-c/2}\cdot\delta_{g}}{\prod_{i=1}^{k}{(1-\delta_{i})}}\right) \\
&\leq 1-\frac{1-\delta_{g}}{\prod_{i=1}^{k}{(1-\delta_{i})}}
\end{align*}
The above inequality together with Theorem \ref{thm:optcomp} means that showing the following will complete the proof: 
\[
\sum_{S\subseteq \{1,\ldots,k\}}\max\left\{e^{\sum\limits_{i\in S}(\epsilon_{i}+c_{i})} - e^{\epsilon_{g}+c}\cdot e^{\sum\limits_{i\not\in S}(\epsilon_{i}+c_{i})}, 0\right\}\leq\frac{e^{c/2}\cdot\prod_{i=1}^{k}{(1+e^{\epsilon_{i}+c_{i}})}}{\prod_{i=1}^{k}{(1+e^{\epsilon_{i}})}}
\sum_{S\subseteq \{1,\ldots,k\}}\max\left\{e^{\sum\limits_{i\in S}\epsilon_{i}} - e^{\epsilon_{g}}\cdot e^{\sum\limits_{i\not\in S}\epsilon_{i}}, 0\right\}
\]
Since $(1+e^{\epsilon_{i}+c_{i}})/(1+e^{\epsilon_{i}})\geq e^{c_{i}/2}$ for every $\epsilon_{i}, c_{i}>0$, it suffices to show:
\[
\sum_{S\subseteq \{1,\ldots,k\}}\max\left\{e^{\sum\limits_{i\in S}(\epsilon_{i}+c_{i})} - e^{\epsilon_{g}+c}\cdot e^{\sum\limits_{i\not\in S}(\epsilon_{i}+c_{i})}, 0\right\}\leq \sum_{S\subseteq \{1,\ldots,k\}}e^{c}\cdot\max\left\{e^{\sum\limits_{i\in S}\epsilon_{i}} - e^{\epsilon_{g}}\cdot e^{\sum\limits_{i\not\in S}\epsilon_{i}}, 0\right\}
\]
This inequality holds term by term. If a right-hand term is zero $\left(\sum_{i\in S}\epsilon_{i}\leq\epsilon_{g}+\sum_{i\not\in S}\epsilon_{i}\right)$, then so is the corresponding left-hand term $\left(\sum_{i\in S}(\epsilon_{i}+c_{i})\leq\epsilon_{g}+c+\sum_{i\not\in S}(\epsilon_{i}+c_{i})\right)$. For the nonzero terms, the factor of $e^{c}$ ensures that the right-hand terms are larger than the left-hand terms. 
\end{proof}
\end{section}

\begin{proof}[Proof of Theorem $\ref{thm:approx}$]
Lemma \ref{lem:dyer} tells us that we can determine whether a set of privacy parameters satisfies some $(\epsilon_{g},\delta_{g})$ differential privacy guarantee if the $\epsilon_{i}$ values and $\epsilon_{g}$ are all positive integer multiples of some $\epsilon_{0}$ where $e^{\epsilon_{0}}$ is rational. We are given rational $\epsilon_{1},\ldots,\epsilon_{k}\geq 0, \delta_{1},\ldots \delta_{k}, \delta_{g} \in [0,1),$ and $\eta\in (0,1)$. Let $\ebar=\sum_{i\in[k]}\epsilon_{i}/k$ be the arithmetic mean of the $\epsilon_{i}$ values. Let $\beta=\eta/(k\cdot(1+\ebar)+1)$, set $\epsilon_{0}=\ln(1+\beta)$, and for all $i\in [k]$ set $a_{i}=\lceil{\epsilon_{i}\cdot(1/\beta+1)}\rceil$ and $\epsilon_{i}'=\epsilon_{0}\cdot a_{i}$. We will use the following bounds on $\epsilon_{0}$ in the proof:
\[
\frac{\beta}{2}\leq\frac{\beta}{1+\beta}\leq \epsilon_{0}\leq \beta
\]

With these settings, the $a_{i}$'s are non-negative integers, the $\epsilon_{i}'$ values are all integer multiples of $\epsilon_{0}$ and $e^{\epsilon_{0}}$ is rational. So for every positive integer $a$ we can apply Lemma \ref{lem:dyer} to determine whether or not OptComp$((\epsilon_{1}',\delta_{1}),\ldots,(\epsilon_{k}',\delta_{k}),\delta_{g})\leq a\cdot \epsilon_{0}$ in time $O\left(k\cdot \sum_{i\in [k]}a_{i}\right)$. Running binary search over integers $a$, we can find the minimum such integer, which we will call $a^{*}$. The algorithm's estimate of OptComp$((\epsilon_{1},\delta_{1}),\ldots,(\epsilon_{k},\delta_{k}),\delta_{g})$ will be $a^{*}\cdot\epsilon_{0}$. However since this number is irrational, we will use the Taylor approximation of the natural logarithm to output $\epsilon^{*}$ satisfying $a^{*}\cdot\epsilon_{0}\leq \epsilon^{*}\leq a^{*}\cdot\epsilon_{0}+\beta-\epsilon_{0}$. Since we only need to calculate a few terms of the Taylor expansion of $\ln(1+\beta)$ to achieve this approximation, this step will not affect our running time. 

Since we choose $a^{*}$ to be the minimum integer satisfying composition we have:
\[\epsilon^{*}-\beta\leq(a^{*}-1)\cdot\epsilon_{0}\leq \mathrm{OptComp}((\epsilon_{1}',\delta_{1}),\ldots,(\epsilon_{k}',\delta_{k}),\delta_{g}) \leq a^{*}\cdot\epsilon_{0}\leq\epsilon^{*}\]

$a^{*}$ can range from $0$ to $\sum_{i\in [k]}a_{i}$ so the binary search can be done in $\log\left(\sum_{i\in[k]}a_{i}\right)=\log O\left(k^{2}\cdot\ebar\cdot(1+\ebar)/\eta\right)$ iterations. This gives us a total running time of:
\[O\left(\frac{k^{3}\cdot\ebar\cdot(1+\ebar)}{\eta}\cdot \log\left(\frac{k^{2}\cdot\ebar\cdot(1+\ebar)}{\eta}\right)\right)\]

Now we argue that $\epsilon^{*}$ is a good approximation of OptComp$((\epsilon_{1},\delta_{1}),\ldots,(\epsilon_{k},\delta_{k}),\delta_{g})$. For all $i\in [k]$ we have:
\begin{align*}
   \epsilon'_{i}&=\epsilon_{0}\cdot a_{i}\\
   &\geq \frac{\beta}{1+\beta}\cdot\left\lceil{\epsilon_{i}\cdot\left(\frac{1}{\beta}+1\right)}\right\rceil\\
   &\geq \epsilon_{i}
\end{align*}
So all of the $\epsilon_{i}'$ values are overestimates of their corresponding $\epsilon_{i}$ values and therefore
\[
\mathrm{OptComp}((\epsilon_{1},\delta_{1}),\ldots,(\epsilon_{k},\delta_{k}),\delta_{g})\leq \mathrm{OptComp}((\epsilon_{1}',\delta_{1}),\ldots,(\epsilon_{k}',\delta_{k}),\delta_{g}) \leq \epsilon^{*}
\] 
satisfying one of the inequalities in the theorem. We also have for all $i\in [k]$:
\begin{align*}
    \epsilon'_{i} &=\epsilon_{0}\cdot\left\lceil{\epsilon_{i}\cdot\left(\frac{1}{\beta}+1\right)}\right\rceil\\
    &\leq \beta\cdot\left(\epsilon_{i}\cdot\left(\frac{1}{\beta}+1\right)+1\right)\\
    &=\epsilon_{i}+\beta\cdot(\epsilon_{i}+1)\\
\end{align*}
Let $c_{i}=\beta\cdot(\epsilon_{i}+1)$ for all $i\in [k]$ and let $c=\sum_{i\in[k]}c_{i}=\beta\cdot k\cdot(1+\ebar)$. Now we get
\begin{align*}
\epsilon^{*}-\beta&\leq \mathrm{OptComp}((\epsilon_{1}',\delta_{1}),\ldots,(\epsilon_{k}',\delta_{k}),\delta_{g})\\
&\leq \mathrm{OptComp}((\epsilon_{1}+c_{1},\delta_{1}),\ldots,(\epsilon_{k}+c_{k},\delta_{k}),\delta_{g})\\
&\leq \mathrm{OptComp}((\epsilon_{1},\delta_{1}),\ldots,(\epsilon_{k},\delta_{k}),e^{-\beta\cdot k\cdot(1+\ebar)/2}\cdot\delta_{g})+\beta\cdot k\cdot(1+\ebar)
\end{align*}
by Lemma \ref{lem:goodapprox}. Noting that $\beta\cdot k\cdot(1+\ebar)$ and $\beta\cdot k\cdot(1+\ebar)+\beta$ are both at most $\eta$ completes the proof.
\end{proof}
\newpage

\newpage

\appendix
\section{Comparison of Composition Theorems} \label{app:A}
The figures below compare the performances of four homogeneous composition theorems. In all figures, ``Summing'' refers to basic composition - Theorem \ref{thm:basiccomp} \cite{DKMMN06}, ``DRV'' refers to advanced composition - Theorem \ref{thm:advancedcomp} \cite{DRV10}, ``KOV Bound'' refers to a bound in \cite{KOV15} that is a closed form approximation of the optimal composition theorem, and ``Optimal'' refers to the optimal composition theorem - Theorem \ref{thm:homogeneouscomp} \cite{KOV15}. Here we are composing $k$ mechanisms that are $(\epsilon,\delta)$ differentially private to obtain an $(\epsilon_{g},\delta_{g})$ differentially private mechanism as guaranteed by one of the composition theorems. 

\captionsetup[subfigure]{labelformat=empty}
\begin{figure}[h]
\begin{subfigure}{0.5\textwidth}
\includegraphics[scale=.45]{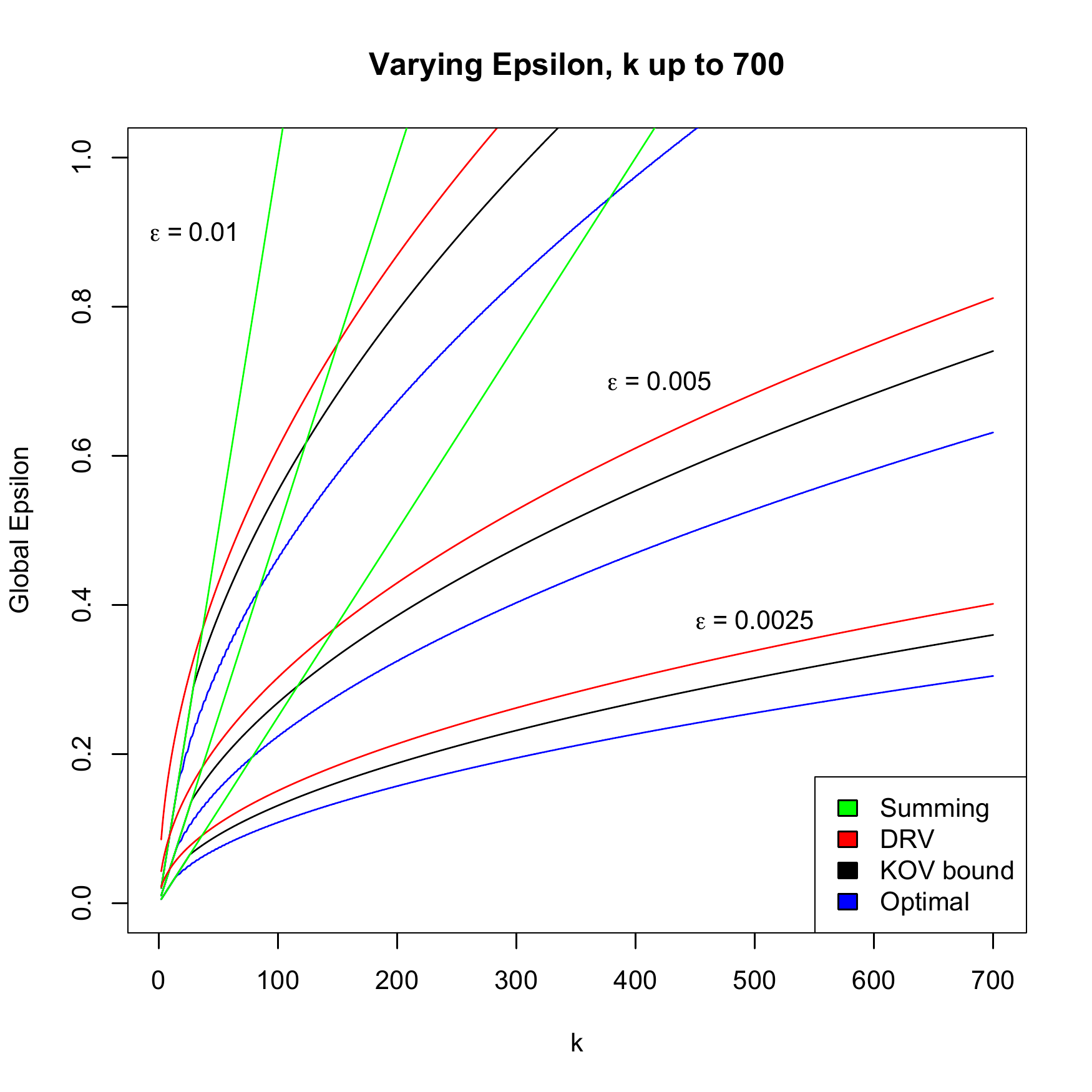} 
\label{fig:subim1}
\end{subfigure}
\begin{subfigure}{0.5\textwidth}
\includegraphics[scale=.45]{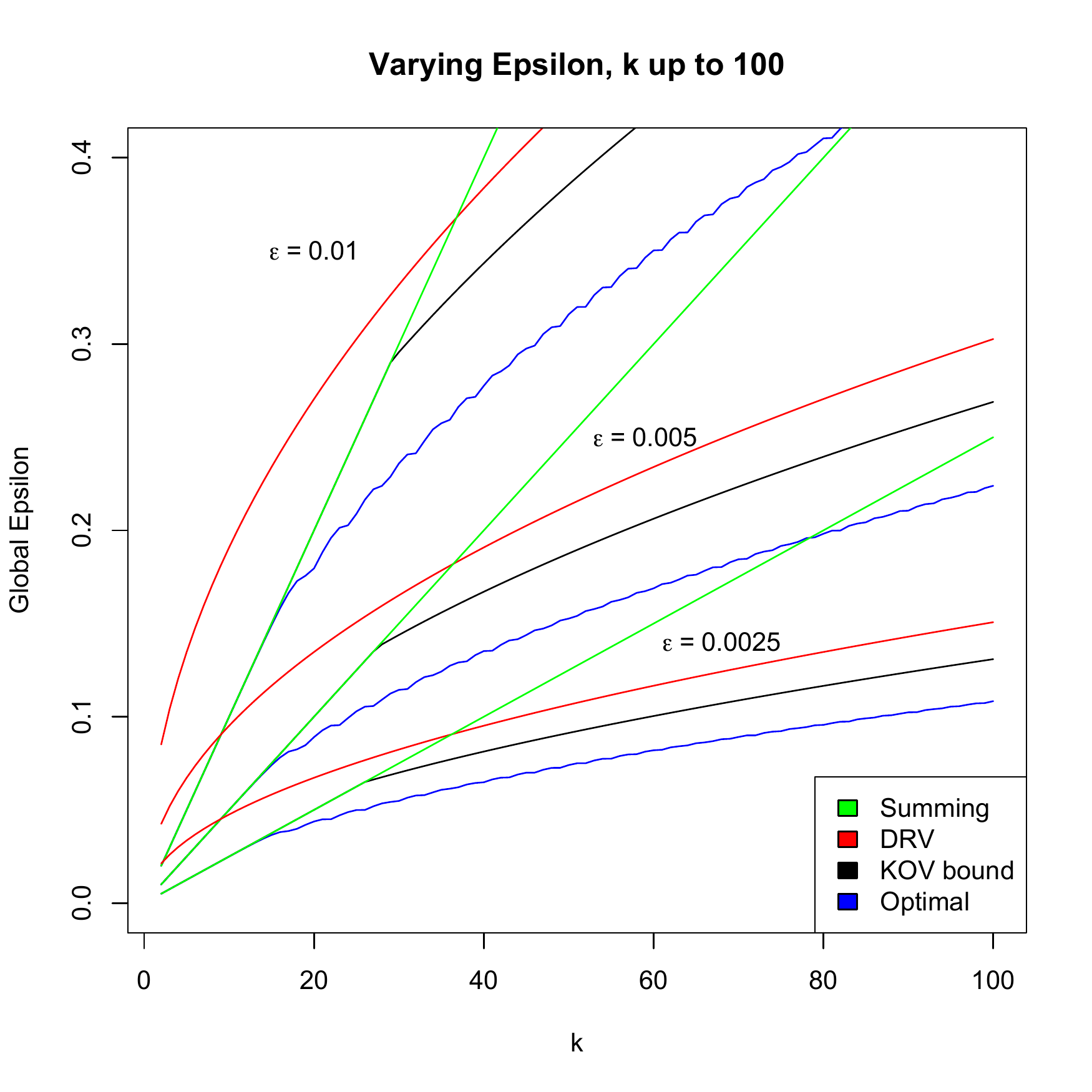}
\label{fig:subim2}
\end{subfigure}
\caption{(Left) $\epsilon_{g}$ given by four composition theorems for varying values of $\epsilon$ as $k$ grows. Parameters $\delta=0$ and $\delta_{g}=2^{-25}$. (Right) Same plot zoomed in on the $k<100$ regime. We see that optimal composition gives substantial savings in $\epsilon_{g}$, even for moderate values of $k$.}
\label{fig:image2}
\end{figure}

\begin{figure}[t]
\begin{subfigure}{0.5\textwidth}
\includegraphics[scale=.45]{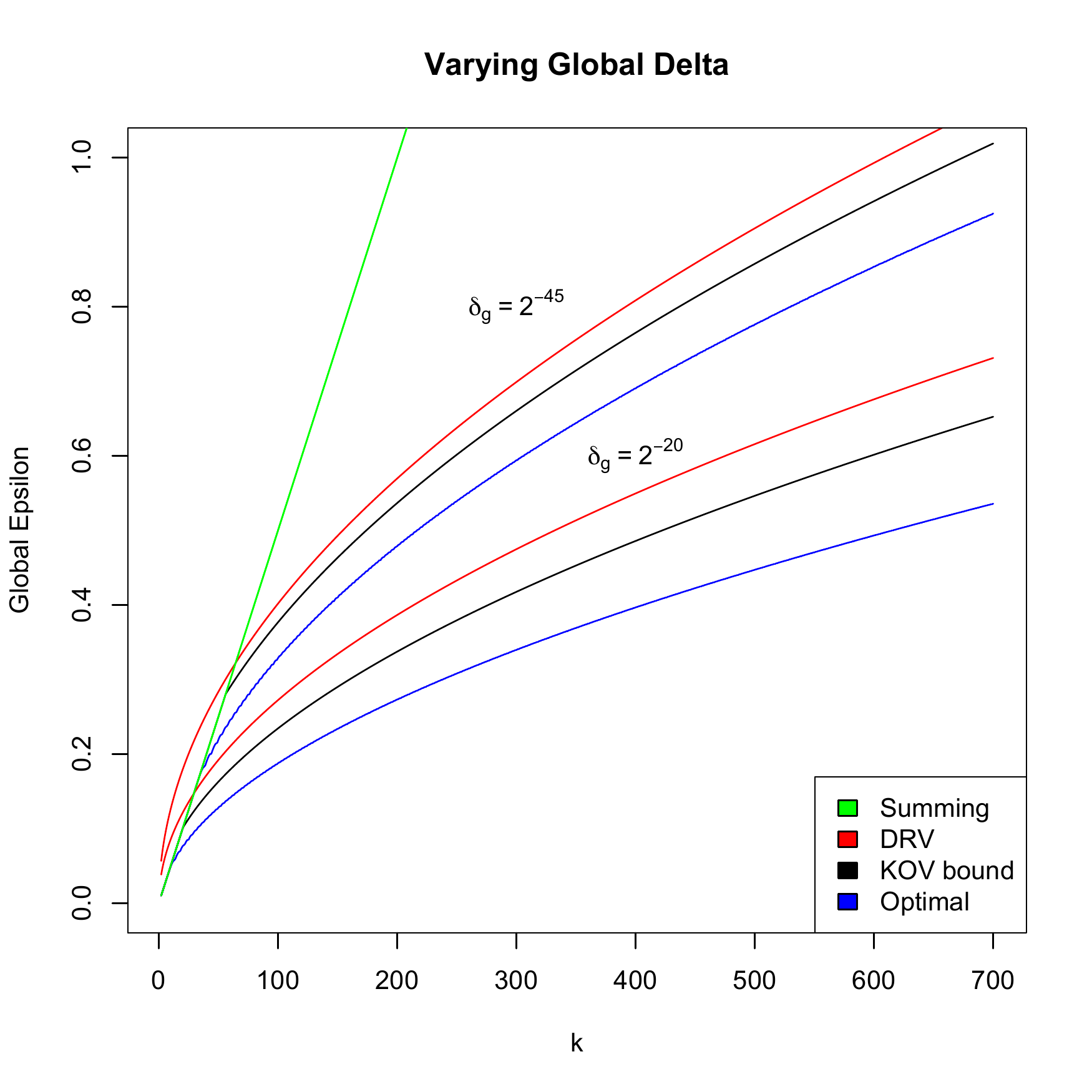} 
\label{fig:subim1}
\end{subfigure}
\begin{subfigure}{0.5\textwidth}
\includegraphics[scale=.45]{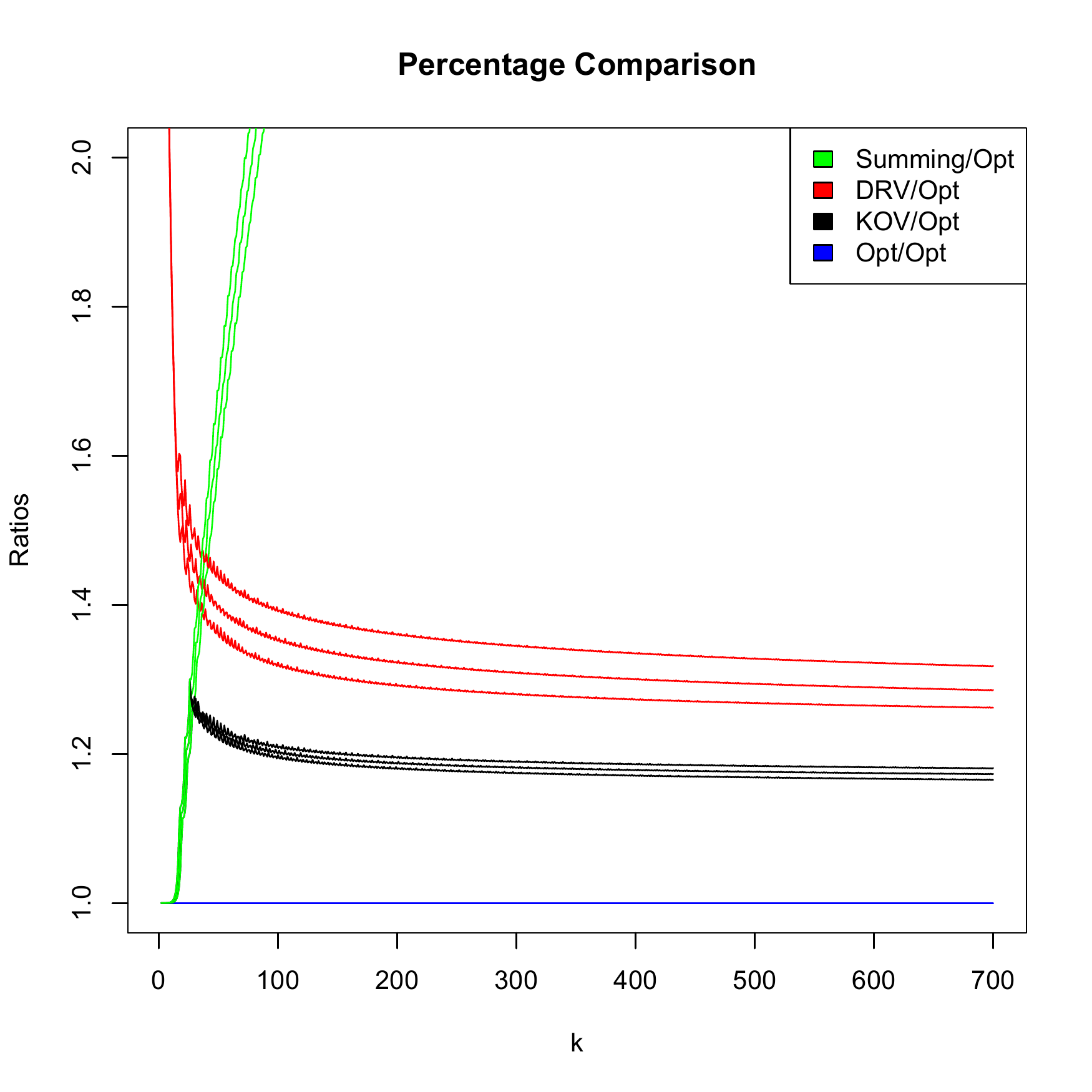}
\label{fig:subim2}
\end{subfigure}
\caption{(Left) $\epsilon_{g}$ given by four composition theorems for varying values of $\delta_{g}$ as $k$ grows, with parameters $\delta=0$ and $\epsilon=.005$ for the individual mechanisms. $\delta_{g}$ does not affect $\epsilon_{g}$ in basic composition. (Right) Performance of composition theorems measured relative to optimal composition. Depicts every curve in Figure 1 divided by the optimal composition curve. We see that relative performances of the KOV bound and DRV seem to converge to a constant. The $\epsilon_{g}$ values given by the KOV bound are about $20\%$ larger than optimal and the values given by advanced composition are about 30-40$\%$ larger than optimal.}
\label{fig:image2}
\end{figure}
\end{document}